\keywords{Regular separability problem,  one counter automata, one counter nets, vector addition systems with states}
\newcounter{mycnt}
\newcommand{\RR}{\mathcal{R}}
\newcommand{\finsubseteq}{\subseteq_\text{\tiny fin}}
\newcommand{\crossproduct}{synchronized product\xspace}
\newcommand{\Crossproduct}{Synchronized product\xspace}
\newtheorem{proposition}[mycnt]{Proposition}
\newtheorem{lemma}[mycnt]{Lemma}
\newtheorem{theorem}[mycnt]{Theorem}
\newtheorem{example}[mycnt]{Example}
\newtheorem{corollary}[mycnt]{Corollary}
\newtheorem{remark}[mycnt]{Remark}
\newtheorem{assumption}[mycnt]{Assumption}
\theoremstyle{thmC}
\newtheorem{theoremC}[mycnt]{Theorem}
\newtheorem{lemmaC}[mycnt]{Lemma}
\newcommand{\C}{\mathcal{C}}
\newcommand{\F}{\mathcal{F}}
\newcommand{\G}{\mathcal{G}}
\newcommand{\A}{\mathcal{A}}
\newcommand{\B}{\mathcal{B}}
\newcommand{\V}{\mathcal{V}}
\newcommand{\N}{\mathbb{N}}
\newcommand{\Z}{\mathbb{Z}}
\newcommand{\U}{\mathcal{U}}
\newcommand{\M}{\mathcal{M}}
\newcommand{\T}{\mathcal{T}}
\newcommand{\sol}[1]{\text{sol}(#1)} 
\newcommand{\Sigmaeps}{\Sigma_{\varepsilon}}
\newcommand{\deeps}[1]{#1_{|\Sigma}}
\newcommand{\trans}[1]{\stackrel{#1}{\longrightarrow}}
\newcommand{\myparagraph}[1]{\paragraph{\bf #1}}
\newcommand{\sepsep}{ }
\newcommand{\reach}{\textsc{Reach}}
\newcommand{\vass}{VASS\xspace}
\newcommand{\prevass}{integer-VASS\xspace}
\newcommand{\vasslangs}{\vass languages\xspace}
\newcommand{\set}[1]{\{#1\}}
\newcommand{\setof}[2]{\set{#1 \mid #2}} 
\newcommand{\oca}{OCA\xspace}
\newcommand{\ocn}{OCN\xspace}
\newcommand{\pspace}{{\sc PSpace}\xspace}
\newcommand{\expspace}{\textsc{ExpSpace}\xspace}
\newcommand{\pref}{\textsc{pref}}
\newcommand{\midd}{\textsc{mid}}
\newcommand{\suff}{\textsc{suff}}
\newcommand{\low}{\textsc{low}}
\newcommand{\high}{\textsc{high}}
\newcommand{\acc}{\textup{acc}}
\newcommand{\rej}{\textup{rej}}
\newcommand{\mycap}{\, \cap \,} 
\newcommand{\pim}[1]{\text{\sc Pi}(#1)} 
\newcommand{\modulo}[1]{\text{ mod } #1}
\newcommand{\synchrprod}{\times}
\newcommand{\crossprod}{\otimes}
\newcommand{\constrvass}[2]{#1(#2)}
\newcommand{\size}[1]{\text{size}(#1)} 
\newcommand{\good}{definite\xspace}
\newcommand{\pspacesolvable}{\pspace-enumerable\xspace}
\begin{document}

\title[Regular Separability of One Counter Automata]{Regular Separability of One Counter Automata\rsuper*}
\titlecomment{{\lsuper*}This is a thoroughly revised and extended version of~\cite{lics17}.}

\author[W.~Czerwi{\'n}ski]{Wojciech Czerwi{\'n}ski}
\author[S.~Lasota]{Sławomir Lasota}

\address{Wydział Matematyki, Informatyki i Mechaniki, University of Warsaw, Poland}
\email{\{wczerwin,sl\}@mimuw.edu.pl}
\thanks{The first author acknowledges partial support by the Polish NCN grant 2016/21/D/ST6/01376.}

\thanks{The second author acknowledges partial support
by the European Research Council (ERC) project Lipa under the EU Horizon 2020
research and innovation programme (grant agreement No. 683080).}

\begin{abstract}
The regular separability problem asks, for two given languages, if there exists
a regular language including one of them but disjoint from the other.
Our main result is decidability, and \pspace-completeness,
of the regular separability problem for languages of one counter automata without zero tests
(also known as one counter nets).
This contrasts with undecidability of the regularity problem for one counter nets, and with undecidability of
the regular separability problem for one counter automata, which is our second result.
\end{abstract}

\maketitle


\section{Introduction}
We focus on separability problems for languages of finite words.
We say that a language $K$ is \emph{separated from} another language $L$ by a language $S$,
if $K \subseteq S$ and $L \cap S = \emptyset$.
For two families of languages $\F$ and $\G$, the \emph{$\F${\sepsep}separability problem for $\G$}
asks, for two given languages $K, L \in \G$ over the same alphabet,
whether $K$ is separated from $L$ by some language from $\F$.

In this paper we mainly consider the separator class $\F$ of regular languages (thus using the term \emph{regular separability}).
As regular languages are closed under complement, $K$ is separated from $L$ by a regular language
if, and only if, $L$ is separated from $K$ by a regular language. Therefore we shortly say that $K$ \emph{and} $L$ are \emph{regular separable}.
As the class $\G$ we consider the languages of \emph{one counter automata} (NFA extended with a non-negative counter that can be
incremented, decremented and tested for zero), or its subclass --- the languages of \emph{one counter nets} (one counter automata without zero tests).

\myparagraph{Motivation and context}
Separability is a classical problem in formal languages.
It was investigated most extensively for $\G$ the class of regular languages, and for $\F$ a suitable subclass thereof.
Since regular languages are effectively closed under complement, the
$\F${\sepsep}separability problem is a generalization of the
\emph{$\F${\sepsep}characterization problem},
which asks whether a given regular language $L$ belongs to $\F$:
indeed, $L \in \F$
if, and only if, $L$ and its complement are separable by some language from $\F$.
Separability problems for regular languages were investigated since a long time, starting from
a generic connection established by Almeida~\cite{Almeida-pmd99} between profinite semigroup theory and separability.
Recently it attracted a lot of attention also outside algebraic community,
which resulted in establishing the decidability of $\F${\sepsep}separability for the family $\F$ of separators being, among others,
\begin{itemize}
\item the piecewise testable languages~\cite{DBLP:conf/icalp/CzerwinskiMM13,DBLP:conf/mfcs/PlaceRZ13}
\item the locally and locally threshold testable languages~\cite{DBLP:conf/fsttcs/PlaceRZ13},
\item the languages definable in first order logic~\cite{DBLP:journals/corr/PlaceZ14},
\item the languages of certain fixed levels of the first order hierarchy~\cite{DBLP:conf/icalp/PlaceZ14}.
\end{itemize}
The first result has been recently generalized to finite ranked trees~\cite{DBLP:conf/icalp/Goubault-Larrecq16}.

Separability of non-regular languages attracted little attention till now.
The reason for this may be twofold. First, for regular languages one can use standard algebraic tools, like syntactic monoids,
and indeed most of the results have been obtained using algebraic techniques. 
Second, the few known negative results on separability of non-regular languages are strongly discouraging.
To start off,
some strong intractability results have been known already since 70's, when Szymanski and Williams proved
that regular{\sepsep}separability of context-free languages is undecidable~\cite{DBLP:journals/siamcomp/SzymanskiW76}.
Later Hunt~\cite{DBLP:journals/jacm/Hunt82a} strengthened this result: he showed that $\F${\sepsep}separability of context-free languages
is undecidable for every class $\F$ containing all \emph{\good} languages,
i.e., finite Boolean combinations of languages of the form $w\Sigma^*$ for $w \in \Sigma^*$.
This is a very weak condition, hence the result of Hunt suggested
that nothing nontrivial can be done outside regular languages with respect to separability problems.
Furthermore, Kopczy\'{n}ski has recently shown that regular{\sepsep}separability is undecidable
even for languages of visibly pushdown automata~\cite{Kopczynski16},
thus strengthening the result by Szymanski and Williams once more.

On the positive side, piecewise testable{\sepsep}separability has been shown decidable
for context-free languages, languages of vector addition systems with states (\vasslangs), 
and some other classes of languages~\cite{DBLP:conf/fct/CzerwinskiMRZ15}.
This inspired us to start a quest for decidable cases beyond regular languages.

Once beyond regular languages, the regular{\sepsep}separability problem seems to be the most intriguing.
\vasslangs is a well-known class of languages, for which the decidability status of the regular{\sepsep}separability problem is unknown.
A few positive results related to this problem have been however obtained recently.
First, decidability of unary (and modular) separability of reachability sets\footnote{Note that these are sets of vectors, not words.}
of \vass was shown in~\cite{CCLP17};
the problem is actually equivalent to regular separability of commutative closures of \vasslangs.
Second, decidability of regular{\sepsep}separability of languages of Parikh automata was shown in~\cite{CCLP16}.
Parikh automata recognize exactly the same languages as \emph{integer-\vass} (a variant of \vass  where one allows
negative counter values~\cite{KR03,CFM11}), and therefore are a subclass of \vasslangs.
Finally, decidability of regular separability for \emph{coverability} languages of \vass follows from the generic result
of~\cite{CLMMKS18} for well-structured transition systems.

The open decidability status of regular separability of \emph{reachability} languages of \vass
is our main motivation in this paper.
A more general goal is understanding for which classes of languages
the regular separability problem is decidable.

\myparagraph{Our contribution}
We consider the regular{\sepsep}separability problem for languages of one counter automata (with zero test)
and its subclass, namely one counter nets (without zero test); the latter model is exactly \vass in dimension 1.
The two models we call shortly \oca and \ocn, respectively.
Our main result is decidability of the regular{\sepsep}separability problem for languages of one counter nets.
Moreover, we determine the exact complexity of the problem, namely \pspace-completeness.
For complexity estimations we assume a standard encoding of \oca (or \ocn) and their configurations;
in particular we assume binary encoding of integers appearing in the input.
\begin{theorem}\label{thm:pspace-comp}
Regular{\sepsep}separability of languages of \ocn is \pspace-complete.
\end{theorem}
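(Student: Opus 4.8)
\medskip
\noindent\emph{Proof plan.}
Fix \ocn{}s $\A$ and $\B$ with $K = L(\A)$ and $L = L(\B)$ over a common alphabet $\Sigma$, recall that regular languages are closed under complement so that regular separability is symmetric in $K$ and $L$, and dispose of the trivial cases $K=\emptyset$ or $L=\emptyset$. We establish the two matching bounds separately.

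\medskip
\noindent\textbf{Upper bound.}
The conceptual anchor is the classical fact that $K$ and $L$ are regular separable if and only if their closures in the free profinite monoid over $\Sigma$ are disjoint; equivalently, they are \emph{not} separable precisely when there is an ``infinitary common witness'', i.e.\ a profinite word approximated both by words in $K$ and by words in $L$. The plan is to turn this into a finitely checkable condition using the one-counter structure. The key step I would prove is a \emph{small-model property for non-separability}: if $K$ and $L$ are not regular separable, then a witness of bounded shape exists --- a pair of accepting runs, one of $\A$ and one of $\B$, decomposed into a bounded number of alternating blocks ``short word / pumpable loop'', where the two runs follow the \emph{same} block pattern but may traverse corresponding loops with \emph{different} multiplicities. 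This decoupling of multiplicities is exactly what lets $K$ and $L$ touch at infinity while staying disjoint, as already happens for $\{a^nb^n\}$ versus $\{a^nb^{n+1}\}$. The proof is by pumping: take longer and longer pairs of runs on profinitely converging words and, using Dickson's lemma and Ramsey-type arguments on the at-most-two-dimensional configuration space, extract a uniform block pattern and bound the lengths of the short words, loop words, and the relevant thresholds --- polynomially in the exponents, hence singly exponentially in the binary input.

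\medskip
\noindent\textbf{From the small-model property to \pspace.}
Given it, the algorithm searches for such a witness by running $\A$ and $\B$ side by side: a finite (exponential-size, presented succinctly) control component guesses and tracks the common block pattern and the position inside the current short word or loop word, while the only two unbounded quantities --- the counter of $\A$ and that of $\B$ --- are updated along the transitions, the two loops being iterable by independent amounts. This is a reachability question (together with a few coverability-flavoured ``can-pump-both-loops-arbitrarily'' queries) in a \vass of \emph{dimension two} with binary counter updates; since reachability and coverability in two-dimensional \vasses are decidable in \pspace, the whole procedure runs in \pspace. An equivalent route is to test disjointness of the regular over-approximations of $K$ and $L$ obtained by abstracting the counter to ``value capped at $T$, together with value modulo $M$'' for the exponential parameters $T,M$ delivered by the small-model analysis; the resulting DFAs have exponentially many states but are given succinctly, so their disjointness is again a \pspace question, and one direction needs nothing: since the abstractions over-approximate $K$ and $L$, disjointness of the abstractions always certifies separability.

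\medskip
\noindent\textbf{Lower bound and main obstacle.}
For \pspace-hardness I would reduce from a \pspace-complete reachability problem with an inherent two-counter flavour --- reachability in two-dimensional \vasses, or equivalently reachability in a succinctly presented exponentially bounded system --- encoding a run of such a system into the interaction of $\A$ and $\B$: the common input word spells a run, $\A$ simulates and validates the first coordinate while $\B$ simulates and validates the second (using the standard trick that a one-counter net with counter $x$ paired with one carrying $C-x$ jointly realises a counter bounded in $[0,C]$ and even an equality-to-$C$ test, which recovers zero tests). Reaching the target is then witnessed by a common accepted word, so reachability forces $K\cap L\neq\emptyset$ and hence non-separability; the delicate converse --- when the target is unreachable, $K$ and $L$ are not merely disjoint but genuinely regular separable --- is arranged by padding the construction with a terminal gadget designed so that the small-model test of the upper-bound characterisation can be passed only by a completed successful run. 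The crux of the whole argument is precisely this small-model property (and, dually, its exploitation on the hardness side): showing that touching at infinity of two one-counter-net languages is always witnessed by a bounded block pattern with independently pumped loops, and bounding all parameters singly exponentially in the binary input so that the search remains in \pspace; the reductions to fixed-dimension \vass reachability and the encoding of a two-dimensional system into two nets are comparatively routine.
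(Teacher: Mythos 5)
There is a genuine gap in the upper bound, and it surfaces in your own running example: you assert that $\{a^nb^n\}$ and $\{a^nb^{n+1}\}$ ``touch at infinity'', but these two languages \emph{are} regular separable, by $R=\{a^nb^m: n\equiv m \bmod 2\}$ (this is exactly the paper's Example~\ref{ex:ocnsep}). This is not a slip of the pen; it falsifies your small-model property as stated. A pair of accepting runs following the same block pattern with independently chosen loop multiplicities is \emph{not} a witness of non-separability: in the example both languages have runs of shape $a^*b^*$, yet the morphism counting letters modulo $2$ separates them, because the achievable multiplicity pairs $(n,n)$ and $(m,m+1)$ are never congruent coordinatewise. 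The missing ingredient is precisely the arithmetic the paper's proof revolves around: non-separability is equivalent to the existence, \emph{for every} $n>0$, of a word accepted by $\B$ on which $\A$ has a run ``with a gap'' whose counter exceeds $n$ and whose counter discrepancy is divisible by $n$ (formalized via the $n$-approximation $\A_n$, which stores the counter exactly below $n$ and modulo $n$ above it, and the derived notion of $n$-reachability in $\A\crossprod\B$). The universally quantified ``for every $n$'' is then discharged by proving that the set of achievable triples (two counter heights and one discrepancy) is a \pspacesolvable semi-linear set, and reducing ``contains a multiple of every $n$ on the discrepancy coordinate'' to a gcd condition on the periods. Your sketch replaces all of this with ``can-pump-both-loops-arbitrarily'' coverability queries, which cannot distinguish the separable pair $\{a^nb^n\},\{a^nb^{n+1}\}$ from the non-separable pair $\{a^nb^n\},\{a^nb^m: m>n\}$; the single reachability/coverability calls to 2-\vass also fall short of what is needed, namely enumerating the semi-linear representations (via linear path schemes and Parikh images) and doing arithmetic on their periods.

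On the lower bound you have correctly identified the complementary-counter trick (one net carries $x$, the other $C-x$, recovering zero tests), but the reduction cannot start from general 2-\vass reachability: when the target is unreachable you only obtain disjointness, and disjoint \ocn languages need not be regular separable. The paper resolves this by reducing from bounded non-emptiness of \emph{acyclic} \oca, so that every relevant run has length at most $b|Q|$; then disjointness does imply separability, witnessed by a \good separator (a finite Boolean combination of languages $w\Sigma^*$) built from the prefixes of length $b|Q|$. Your ``terminal gadget'' would have to reproduce exactly this boundedness argument, and as written it is an acknowledgement of the difficulty rather than a proof.
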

%
%
%
%
%
Our approach to prove decidability is by \emph{regular over-approximation}: for every \ocn language $L$
there is a decreasing sequence of (computable) regular languages over-approximating $L$,
such that two \ocn languages are regular separable if, and only if, some pair of their approximants is disjoint.
Furthermore, the latter condition can be reduced to a kind of reachability property of the \crossproduct
of two \ocn, and effectively checked in \pspace by exploiting effective semi-linearity of the reachability set of
the \crossproduct.
Our \pspace lower bound builds on \pspace-hardness of bounded non-emptiness of \oca~\cite{FJ15}.

It is interesting to compare the regular{\sepsep}separability problem with the regularity problem,
which asks whether a given language is regular.
For every class $\G$ effectively closed under complement, regular{\sepsep}separability is a generalization of regularity,
as $L$ is regular if, and only if, $L$ and its complement $\bar{L}$ are regular{\sepsep}separable.
It turns out that regularity of \ocn languages can not be reduced to regular separability:
firstly because \ocn languages are not closed under complement, and secondly (and more importantly)
because the regularity problem is undecidable for \ocn languages~\cite{ValkVidal81,BensUndecidability}
while we prove regular{\sepsep}separability decidable.

As our second main contribution, we show that adding zero tests leads to undecidability, 
for any separator language class containing all \good languages.
In particular, regular languages are an example of such a class.
\begin{theorem}\label{thm:undecidability}
For every language class $\F$ containing all \good languages,
the $\F${\sepsep}separability problem  for languages of \oca is undecidable.
\end{theorem}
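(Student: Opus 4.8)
The plan is to reduce an undecidable problem about two-counter (Minsky) machines to the $\F${\sepsep}non-separability of two \oca languages, exploiting the fact that a single counter automaton \emph{with} zero tests can faithfully simulate one Minsky counter while the other counter is simulated ``in the alphabet'' by the word being read. Concretely, I would encode a halting computation of a deterministic Minsky machine $M$ as a word listing the sequence of instructions executed together with the counter operations, so that the set $\mathrm{Halt}(M)$ of valid accepting computation histories is nonempty exactly when $M$ halts. The classical trick (à la Hartmanis, and as used in the Szymanski--Williams argument) is that the set of \emph{invalid} histories splits into two context-free-type conditions, one checking that counter~1 behaves correctly and one checking counter~2; here, crucially, each of these two conditions is recognizable by an \oca using its zero-testable counter to track the running value of the respective Minsky counter. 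I would thus build two \oca $\A_1$ and $\A_2$ whose languages $L_1, L_2$ satisfy: $L_1 \cup L_2 = \Sigma^* \setminus \mathrm{Halt}(M)$ when $M$ does not halt, i.e.\ $L_1$ and $L_2$ cover everything, and conversely any valid halting history lies outside both languages.

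The key steps, in order. First, fix the encoding of computation histories of $M$ as words over a suitable finite alphabet $\Sigma$, recording for each step the instruction label and the increments/decrements/zero-tests on each counter; ensure the encoding is designed so that ``syntactic well-formedness plus correctness of control flow'' is a regular condition $R$, absorbed into both automata. Second, construct \oca $\A_1$ accepting all words that are either syntactically malformed, or violate control flow, or describe a run in which counter~1 is handled incorrectly (an illegal decrement below zero, a failed or spurious zero-test, or an arithmetic mismatch between declared value and actual value) --- the zero test of the \oca is exactly what lets it verify ``counter is zero here''. Third, construct $\A_2$ symmetrically for counter~2. Fourth, verify the central dichotomy: if $M$ does not halt then every word of $\Sigma^*$ is caught by $\A_1$ or $\A_2$, so $L_1 \cup L_2 = \Sigma^*$ and no language whatsoever can separate them (the separator $S$ would need $L_1 \subseteq S$ and $S \cap L_2 = \emptyset$, impossible when $L_1 \cup L_2 = \Sigma^*$ unless one of them is empty, which they are not); if $M$ halts, the unique halting history $w$ lies in neither $L_1$ nor $L_2$, and then I exhibit a \good separator --- a finite Boolean combination of sets $u\Sigma^*$ --- isolating $\{w\}$-style behaviour, so $L_1$ and $L_2$ are $\F${\sepsep}separable for any $\F$ containing all \good languages. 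Since \good languages are closed under the Boolean operations needed, a cofinite or eventually-constant separator of the form required can be written down explicitly from the finitely many mismatch positions available once $w$ is fixed.

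I expect the main obstacle to be the last point: producing, in the halting case, a genuinely \good (not merely regular) separator. A single halting history $w$ is a finite word, so $\{w\}$ itself is \good, but one must check that $\{w\}$ (or a slightly larger \good set containing $w$) is actually disjoint from $L_2$ and contains $L_1$ --- which is false in general, since $L_1$ is infinite. The correct move is to arrange the reduction so that in the halting case $L_1$ and $L_2$ are \emph{both} already separable by a \good language for structural reasons, e.g.\ by making $L_1 \subseteq u_1\Sigma^*$ and $L_2 \subseteq u_2\Sigma^*$ for distinct incomparable prefixes $u_1 \neq u_2$ whenever $M$ halts, so that $u_1\Sigma^*$ is the desired separator; this forces care in the encoding so that a halting computation forces a particular first symbol while the ``error'' words witnessing non-separability in the non-halting case range over all of $\Sigma^*$. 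Balancing these two requirements --- errors must cover $\Sigma^*$ when $M$ loops, yet collapse into a \good-separable configuration when $M$ halts --- is the delicate part, and it is exactly where the zero-test power of \oca (as opposed to \ocn) is used, mirroring the contrast with Theorem~\ref{thm:pspace-comp}.
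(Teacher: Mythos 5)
Your reduction has two genuine gaps. The first is structural: because you put the syntactically malformed words (and control-flow violations) into \emph{both} $L_1$ and $L_2$, the two languages intersect no matter what the machine does, so they are never separable by \emph{any} language and the reduction cannot distinguish the two cases. The polarity has to be reversed: each \oca should accept the words that are \emph{consistent} runs with respect to its own counter (ending in the accepting configuration), so that $L_1 \cap L_2$ is exactly the set of accepting computation histories --- nonempty, and hence non-separable by anything, precisely when the machine accepts. (Relatedly, your claim that $L_1 \cup L_2 = \Sigma^*$ by itself forbids a separator is not right: if the two languages were disjoint, $L_1$ itself would be a set-theoretic separator; what kills separability is a common word, not covering $\Sigma^*$.)

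The second gap is the one your last paragraph circles around without naming: a direct reduction from the halting problem cannot work, because in the negative case you must actually exhibit a separator in $\F$, and since $\F$ may contain only the \good languages, this means a separator determined by prefixes of some bounded length $n$. Nothing in a non-halting computation supplies such a bound. The paper's proof (following Hunt) instead reduces \emph{every decidable problem} to $\F$-separability: it takes a \emph{total} deterministic 2-counter machine $\M$, builds $\A_1,\A_2$ each simulating one counter over the alphabet of transitions of $\M$, and in the rejecting case uses totality to obtain a bound $n$ (the length of the rejecting run) such that the prefix-closures of the two languages share no word of length $n$ or more; the explicit \good separator is then $\big(L(\A_1)\cap\Sigma^{<n}\big) \cup \bigcup_{w} w\Sigma^*$, the union taken over length-$n$ prefixes $w$ of words of $L(\A_1)$. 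Undecidability then follows not from a single undecidable problem but from the time hierarchy theorem: a decision procedure for separability would decide every decidable problem within a fixed time bound. Your prefix-based patch ($u_1\Sigma^*$ versus $u_2\Sigma^*$) addresses neither issue.
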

Our argument is inspired by the undecidability proof by Hunt~\cite{DBLP:journals/jacm/Hunt82a}: we show, roughly speaking, that
every decidable problem reduces in polynomial time to the separability problem for \oca.

\myparagraph{Organization}
In Section~\ref{sec:oca} we define the models of \oca and \ocn,
then Sections~\ref{sec:approx}--\ref{sec:pspace-hard} are devoted to the proof of Theorem~\ref{thm:pspace-comp},
and finally Section~\ref{sec:undecid} contains the proof of Theorem~\ref{thm:undecidability}.
The proof of Theorem~\ref{thm:pspace-comp} is factorized as follows:
in Section~\ref{sec:approx} we introduce the regular over-approximation of \ocn languages,
in Section~\ref{sec:decid} we we provide a decision procedure for testing the disjointness property of approximants,
as discussed above,
further in Section~\ref{sec:in-pspace} we provide a \pspace implementation of this procedure,
and in Section~\ref{sec:pspace-hard} we give a \pspace lower bound.
The last Section~\ref{sec:remarks} contains some concluding remarks, including the discussion of undecidability of the regularity problem for \ocn.




\section{One counter automata and nets}%
\label{sec:oca}

In order to fix notation we start by recalling finite automata with $\varepsilon$-transitions,
in a specifically chosen variant convenient when working with one counter automata.

A \emph{nondeterministic finite automaton} (NFA) $\A = (Q, q_0, q_f, T)$ over a finite alphabet $\Sigma$
consists of a finite set of control states $Q$, distinguished initial and
final states $q_0, q_f \in Q$ (for convenience we assume here, w.l.o.g., a single final state),
and a set of \emph{transitions} $T \subseteq Q \times \Sigmaeps \times Q$, where $\Sigmaeps = \Sigma\cup\set{\varepsilon}$.

For a word $v \in {(\Sigmaeps)}^*$, let $\deeps{v}$ be the word obtained by removing all occurrences of $\varepsilon$.
A run of $\A$ over a word $w\in\Sigma^*$ is a sequence of transitions of the form
\[(p_0, a_1, p_1), (p_1, a_2, p_2), \ldots, (p_{n-1}, a_n, p_n)\]
such that $\deeps{(a_1 \ldots a_n)} = w$.
The run is \emph{accepting} if $p_0 = q_0$ and $p_n = q_f$.
The language of $\A$, denoted $L(\A)$, is the set of all words $w$ over which $\A$ has an accepting run.
Languages of NFA are called \emph{regular}.

\myparagraph{One counter automata and nets}
In brief, a one counter automaton (\oca) is an NFA with a non-negative counter, where
we allow for arbitrary changes of the counter value in one step.

Formally, an \oca is a tuple $\A = (Q, \alpha_0, \alpha_f, T, T_{=0})$, where $Q$ are control states as above.
A \emph{configuration} $(q, n)\in Q\times\N$ of $\A$ consists of a control state and a non-negative counter value.
There are two distinguished configurations, the initial one $\alpha_0 = (q_0, n_0)$ and the final one $\alpha_f = (q_f, n_f)$.
The finite set $T \subseteq Q \times \Sigmaeps \times Q \times \Z$
contains \emph{transitions} of $\A$. A transition $(q, a, q', z)$ can be fired in a configuration $\alpha = (q, n)$ if $n+z \geq 0$, leading to a new configuration $\alpha' = (q', n+z)$.
We write $\alpha \trans{a} \alpha'$ if this is the case.
Finally, the set $T_{=0} \subseteq Q\times\Sigmaeps\times Q$ contains \emph{zero tests}. A zero test $(q, a, q')$ can be fired
in a configuration $\alpha = (q, n)$ only
if $n = 0$, leading to a new configuration $\alpha' = (q', n)$. Again, we write $\alpha \trans{a} \alpha'$ if this is the case.

A run of an \oca over a word $w\in \Sigma^*$ is a sequence of transitions and zero tests of the form
\[\alpha_0 \trans{a_1} \alpha_1 \trans{a_2} \ldots \trans{a_n} \alpha_n\]
such that $\deeps{(a_1 \ldots a_n)} = w$;
we briefly write $\alpha_0 \trans{w} \alpha_n$ if this is the case, and $\alpha_0 \trans{} \alpha_n$ if a word $w$ is irrelevant.
The run is \emph{accepting} if $\alpha_0$ is the initial configuration of $\A$, and $\alpha_n$ is the final one.
The language of $\A$, denoted $L(\A)$, is the set of all words $w$ over which $\A$ has an accepting run.

A one counter net (\ocn) is an \oca without zero tests, i.e., one with $T_{=0} = \emptyset$.
We drop the component $T_{= 0}$ and denote \ocn as $(Q, \alpha_0, \alpha_f, T)$.
In other words, an \ocn is exactly a vector addition system with states (\vass) in dimension 1~\cite{KM69,HP79}.

\begin{example}%
\label{ex:ocnsep}
Consider two \ocn languages over the alphabet $\set{a,b}$:
\[
K = \setof{a^n b^n}{n\in\N}
\qquad
L = \setof{a^n b^{n+1}}{n\in\N}.
\]
An example regular language separating $K$ from $L$ is
$R = \setof{a^n b^m}{n \equiv m \modulo{2}}$.
Indeed, $R$ includes $K$ and is disjoint with $L$.
On the other hand, $K$ and $L' = \setof{a^n b^m}{m > n}$ are not regular separable (which follows from
Corollary~\ref{cor:appr} below).
\end{example}

\myparagraph{Other modes of acceptance}
We briefly discuss other possible modes of acceptance of \oca.

First, consider a variant of \oca with a finite set of initial configurations, and a finite set of final ones.
This variant can be easily simulated by \oca as defined above.
Indeed, add two fresh states $q_0, q_f$, and fix the initial and final configurations $\alpha_0 = (q_0, 0)$ and
$\alpha_f = (q_f, 0)$. Moreover, add $\varepsilon$-transitions enabling to go from $\alpha_0$ to every of former initial configurations, and
symmetrically add $\varepsilon$-transitions enabling to go from every of former final configurations to $\alpha_f$.

The above simulation reveals that w.l.o.g.\ we can assume that the counter values $n_0$ and $n_f$
in the initial and final configurations are 0. This will be implicitly assumed in the rest of the paper.

Yet another possibility is accepting solely by control state: instead of a final configuration $\alpha_f = (q_f, n_f)$,
such an \oca would have solely a final control state $q_f$, and every run ending in a configuration $(q_f, n)$, for arbitrary
$n$, would be considered accepting.
Again, this variant is easily simulated by our model: it is enough to assume w.l.o.g.~that $q_f$ has no outgoing transitions nor zero tests,
add a transition $(q_f, \varepsilon, q_f, -1)$
decrementing the counter in the final state, and fix the final configuration as $(q_f, 0)$.

Finally, note that all the simulations discussed above work for \ocn as well.
In particular, in the sequel we may assume, w.l.o.g., that the counter values in initial and final configurations of \ocn are 0.



\section{Regular over-approximation of \ocn}%
\label{sec:approx}


For an \ocn $\A$ and $n > 0$, we are going to define an NFA $\A_n$ which we call \emph{$n$-approximation} of $\A$.
As the main result of this section we prove (as Corollary~\ref{cor:appr}) that two \ocn (and even \oca as
pointed out in Remark~\ref{rem:oca}) are regular separable if, and only if, their $n$-approximations are disjoint
for some $n>0$.

As long as the counter value is below $n$, the automaton $\A_n$ stores this value exactly
(we say then that $\A_n$ is in \emph{low} mode);
if the counter value exceeds $n$, the automaton $\A_n$ only stores the remainder of the counter value modulo $n$
(we say then that $\A_n$ is in \emph{high} mode).
Thus $\A_n$ can pass from low mode to high one; but $\A_n$ can also nondeterministically decide to pass the other way around, from high to low mode.

Let $Q$ be the state space of $\A$, and let $(q_0, 0)$ and $(q_f, 0)$ be its initial and final configurations.
As the state space of $\A_n$ we take the set
\[
Q_n = Q \times \{0, \ldots, n-1\} \times \{\low, \high\}.
\]
The initial and final state of $\A_n$ are $(q_0, 0, \low)$ and $(q_f, 0, \low)$, respectively.
Every transition $(q, a, q', z)$ of $\A$ induces a number of transitions of $\A_n$, as defined below
(for any $c$ satisfying $0 \leq c < n$):
\begin{align*}
& \big( (q, c, \low), a, (q', c{+}z, \low) \big) && \text{if } 0 \leq c{+}z < n \\
& \big( (q, c, \low), a, (q', (c{+}z) \modulo{n}, \high) \big) && \text{if } n \leq c{+}z  \\
& \big( (q, c, \high), a, (q', (c{+}z) \modulo{n}, \low) \big) && \text{if } c{+}z < 0  \\
& \big( (q, c, \high), a, (q', (c{+}z) \modulo{n}, \high) \big).
\end{align*}
Note that passing from high mode to low one is only possible if the counter value (modulo $n$) drops, after an update, strictly below 0;
in particular, this requires $z < 0$.
\begin{example}
Recall the languages $K$ and $L$ from Example~\ref{ex:ocnsep}, and consider an \ocn $\A$ recognizing $K$ that has two states $q_0$, $q_f$,
and three transitions:
\begin{align*}
& (q_0, a, q_0, +1) \\
& (q_0, \varepsilon, q_f, 0) \\
& (q_f, b, q_f, -1).
\end{align*}
The 2-approximating automaton $\A_2$ has 8 states $\set{q_0, q_f}\times\set{0, 1}\times\set{\low, \high}$.
In state $(q_0, 1, \low)$ on letter $a$, the automaton is forced to change the mode to $\high$; symmetrically,
in state $(q_f, 0, \high)$ on letter $b$, the automaton can change its mode back to $\low$:
\begin{align*}
\big( (q_0, 1, \low), a, (q_0, 0, \high) \big) \\
\big( (q_f, 0, \high), b, (q_f, 1, \low) \big).
\end{align*}
Otherwise, the mode is preserved by transitions; for instance, in high mode the automaton changes the state irrespectively of the input letter:
for every $q\in\set{q_0, q_f}$, $x\in\set{a, b}$ and $c\in\set{0,1}$, there is a transition
\begin{align*}
\big( (q, c, \high), x, (q, 1-c, \high) \big).
\end{align*}
The language recognized by $\A_2$ is \[\setof{a^n b^m}{(n = m < 2) \vee (n, m \geq 2 \wedge n \equiv m \modulo{2})}.\]
\end{example}
According to the definition above, the automaton $\A_n$ can oscillate between low and high mode arbitrarily many times.
Actually, as we argue below, it is enough to allow for at most one oscillation.
\begin{proposition}%
\label{prop:high}
For every run of $\A_n$ between two states 
in high mode,
there is a run over the same word between the same states which never exits the high mode.
\end{proposition}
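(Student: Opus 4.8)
The idea is to show that any run in high mode that dips down (i.e.\ uses a transition passing from high to low, and later some transition passing back from low to high) can be replaced by one staying in high mode. The key observation is that while in high mode, $\A_n$ tracks the counter value only modulo $n$, and the transitions of $\A_n$ in high mode exactly simulate arithmetic modulo $n$ on the residue component, regardless of the sign of the update $z$. So from the point of view of state transitions, a high-mode-only computation behaves like a counter automaton over $\Z/n\Z$ with no side conditions at all: every transition $(q,a,q',z)$ of $\A$ induces the high-mode transition $\big((q,c,\high),a,(q,(c{+}z)\bmod n,\high)\big)$ for \emph{every} residue $c$. The only reason the automaton ever leaves high mode is the nondeterministic choice to do so (which forces $z<0$ and a ``wrap-around'' below $0$), and the only obstruction to re-entering high mode from low mode would be a counter going negative — but that can never happen in a run we are trying to reroute, because we are comparing against a run that \emph{already exists}.

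First I would set up notation: fix a run $\rho$ of $\A_n$ from $(p,c_0,\high)$ to $(q,c_k,\high)$ over a word $w$, and decompose the underlying sequence of $\A$-transitions and zero-tests used. Since $\A$ is an \ocn there are no zero-tests, so $\rho$ uses a sequence of transitions $t_1,\dots,t_k$ of $\A$ (or rather, their induced $\A_n$-versions), with updates $z_1,\dots,z_k\in\Z$. For each $i$ let $d_i = (c_0 + z_1 + \dots + z_i)\bmod n$; I claim that the alternative run $\rho'$ which fires $t_1,\dots,t_k$ always in high mode, i.e.\ goes through states $(p_i, d_i, \high)$, is a legal run of $\A_n$. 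The only thing to check is that each step of $\rho'$ is an actual transition of $\A_n$: but the last (unconditional) clause in the definition of the induced transitions says precisely that $\big((q',c,\high),a,(q',(c{+}z)\bmod n,\high)\big)$ is always a transition, for every $c$. Also $\rho'$ reads the same word $w$, since it uses the same $\A$-transitions in the same order with the same labels. Finally $\rho'$ starts at $(p,c_0,\high)$ and ends at $(p_k, d_k,\high)$; and $d_k = c_k$ because the low/high mode changes in $\rho$ only ever alter the residue by the same arithmetic $(c{+}z)\bmod n$ — the mode bit never affects the residue arithmetic — so the residue reached after the full sequence of updates is the same whether or not we oscillated. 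Hence $\rho'$ ends in the required state $(q,c_k,\high)$.

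The main (and essentially only) subtlety is the last point — verifying that the residue component at the end of $\rho$ really equals $\sum_i z_i$ modulo $n$ offset by $c_0$, i.e.\ that excursions into low mode do not ``lose'' or ``gain'' anything modulo $n$. This holds because in low mode the value is stored \emph{exactly}, and when passing low$\to$high the new residue is $(\text{exact value})\bmod n$, which is consistent with having tracked the running sum modulo $n$ all along; symmetrically for high$\to$low, the residue is $(c{+}z)\bmod n$ which again agrees. So a straightforward induction on the length of $\rho$, maintaining the invariant that the numeric component of the current configuration is congruent modulo $n$ to $c_0 + z_1 + \dots + z_i$, closes the argument. I would write this invariant out as the one lemma-internal claim and then conclude. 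No genuinely hard step is expected here; the proposition is a normalization lemma whose content is entirely in the bookkeeping of residues.
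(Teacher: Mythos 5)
Your argument is correct and is essentially the paper's own proof: both rest on the observation that the fourth, unconditional clause in the definition of the induced transitions always supplies a high-mode counterpart of any transition used in the run, with the residue tracked consistently modulo $n$ across mode changes. The explicit invariant you state (the numeric component is congruent to $c_0+z_1+\cdots+z_i$ modulo $n$) is just a spelled-out version of the paper's one-line induction.
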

\begin{proof}
Indeed, observe that if $\A_n$ has any of the following transitions
\begin{align*}
& \big( (q, m, \low), a, (q', m', \low) \big) \\
& \big( (q, m, \low), a, (q', m', \high) \big) \\
& \big( (q, m, \high), a, (q', m', \low) \big)
\end{align*}
then $\A_n$ necessarily has also the transition
\begin{align*}
\big( (q, m, \high), a, (q', m' \modulo{n}, \high) \big).
\end{align*}
Thus every run oscillating through high and low modes that starts and ends in high mode, can be simulated by a one that never exits high mode.
\end{proof}

A run of an \ocn $\A$ we call \emph{$n$-low}, if the counter value is strictly below $n$ in all configurations of the run.
Proposition~\ref{prop:appr-char} below characterizes the language of $\A_n$ in terms of runs of $\A$, and will be useful
 for proving the Approximation Lemma below.
Then Corollary~\ref{cor:appr-prop}, its direct consequence, summarizes some properties of approximation
useful in the sequel.

%
\begin{proposition}%
\label{prop:appr-char}
Let $\A = (Q, (q_0, 0), (q_f ,0), T)$ be an \ocn, and let $n > 0$.
Then $w \in L(\A_n)$ if, and only if,
\begin{enumerate}
\item[(a)] either $\A$ has an $n$-low accepting run over $w$,
\item[(b)] or $w$ factorizes into $w = w_\pref w_\midd w_\suff$, such that $\A$ has the following runs
\begin{align}
\begin{aligned} \label{eq:threeruns}
& (q_0, 0) \trans{w_\pref} (q, n+d) \\
& (q, c n+d) \trans{w_\midd} (q', c' n+d') \\
& (q', n+d') \trans{w_\suff} (q_f, 0),
\end{aligned}
\end{align}
for some states $q, q' \in Q$ and natural numbers $c, c' \geq 1$ and $d, d' \geq 0$.
\end{enumerate}
\end{proposition}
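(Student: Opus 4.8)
The plan is to prove both implications of the characterization by unwinding the definition of $\A_n$ and invoking Proposition~\ref{prop:high} to normalize runs so that they perform at most one oscillation between low and high mode.

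\textbf{From a run of $\A_n$ to conditions (a)/(b).}
Suppose $w \in L(\A_n)$, witnessed by an accepting run $\rho$ of $\A_n$ from $(q_0,0,\low)$ to $(q_f,0,\low)$. First I would observe that the mode component along $\rho$ starts and ends in $\low$. If $\rho$ never enters high mode, then by construction every transition used is of the first type, so the counter values stay in $\{0,\dots,n-1\}$ and projecting $\rho$ onto the first two coordinates yields an $n$-low run of $\A$ over $w$, giving case (a). Otherwise $\rho$ enters high mode at some point and leaves it at some (later) point. Using Proposition~\ref{prop:high} I can assume $\rho$ has the shape: a low-mode prefix, then a single low$\to$high switch, then a high-mode segment, then a single high$\to$low switch, then a low-mode suffix. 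Now I split $w = w_\pref w_\midd w_\suff$ at the two mode switches. The low$\to$high switch is a transition of the second type, fired when $0 \le c+z < n$ is violated, i.e.\ the ``true'' counter value after the step is $\ge n$; writing that value as $n+d$ with $d \ge 0$ gives the first line of \eqref{eq:threeruns}. Symmetrically the high$\to$low switch is of the third type and requires the update to drop strictly below $0$ modulo $n$; the state just before it is $(q', n+d', \high)$-like for some $d' \ge 0$, and the low-mode suffix then runs $\A$ from $(q', n+d')$ down to $(q_f,0)$, giving the third line. For the middle line, the high-mode segment only tracks the counter modulo $n$: along it, $\A_n$ uses transitions of the fourth type, each of which is available for $\A$ from \emph{any} sufficiently large counter value with the right residue. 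So I lift the residue class $d$ at the start to the concrete value $cn+d$ for a large enough $c \ge 1$, run the same sequence of underlying $\A$-transitions, and since every intermediate counter value stays large (bounded below by something like $n$ minus the total negative drop, which I can absorb by taking $c$ large), no transition is blocked; the final value has residue $d'$ and can be written $c'n+d'$ with $c' \ge 1$. That is exactly the middle line of \eqref{eq:threeruns}, so case (b) holds.

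\textbf{From (a)/(b) to a run of $\A_n$.}
Conversely, if $\A$ has an $n$-low run over $w$, then every configuration $(q,m)$ along it has $m < n$, and replacing each by $(q,m,\low)$ turns each $\A$-transition into a transition of $\A_n$ of the first type (the condition $0 \le c+z < n$ is exactly $n$-lowness at both endpoints), so $w \in L(\A_n)$. If instead condition (b) holds with the three runs in \eqref{eq:threeruns}, I build a run of $\A_n$ in three blocks. For $w_\pref$: run $(q_0,0) \trans{w_\pref} (q, n+d)$ has a first moment where the counter reaches or exceeds $n$; before that moment I stay in $\low$ mode tracking the exact value, and at that step I switch to $\high$ mode via a second-type transition, landing in residue $(c{+}z)\bmod n$; thereafter I follow the rest of the run in $\high$ mode using fourth-type transitions, arriving at $(q, d, \high)$. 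For $w_\midd$: the run $(q, cn+d) \trans{w_\midd} (q', c'n+d')$ projects, modulo $n$, to a sequence of fourth-type transitions of $\A_n$ (each available unconditionally in $\high$ mode), taking $(q,d,\high)$ to $(q', d', \high)$. For $w_\suff$: the run $(q', n+d') \trans{w_\suff} (q_f, 0)$ must at some step drop from $\ge n$ to $< n$; I stay in $\high$ mode (tracking modulo $n$) until that step, apply a third-type transition there (its side condition $c+z<0$ corresponds exactly to the residue dropping below $0$, which is what happens when the true value crosses from $\ge n$ to $<n$ at a value that is $\equiv$ something small), switch to $\low$ mode, and finish tracking the exact value down to $(q_f, 0, \low)$. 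Concatenating the three blocks gives an accepting run of $\A_n$ over $w$.

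\textbf{Main obstacle.}
The routine parts are the bookkeeping with the mode-switch side conditions and the residues; the one genuinely delicate point is the ``lifting'' argument in the middle segment of the forward direction: given only the residue information that $\A_n$ carries in high mode, I must choose the multiplier $c \ge 1$ large enough that the reconstructed $\A$-run $(q,cn+d) \trans{w_\midd} (q',c'n+d')$ never has a negative (hence blocked) intermediate counter value, and I must also make sure the low$\to$high and high$\to$low switch steps are correctly aligned with the first/last time the counter crosses the threshold $n$ — in particular that after the high$\to$low switch the remaining suffix really is $n$-low. I expect to handle this by taking $c$ to exceed the absolute value of the most negative partial sum of counter updates along $w_\midd$ (which is bounded by the length of $w_\midd$ times the maximal transition weight), and by choosing the switch points to be exactly the first time the prefix run reaches value $\ge n$ and the last time the suffix run is at value $\ge n$; the stated bounds $c, c', d, d' \ge \text{(the claimed ranges)}$ then fall out. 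Everything else is a direct translation between the four transition types of $\A_n$ and runs of $\A$, using Proposition~\ref{prop:high} to justify the single-oscillation normal form.
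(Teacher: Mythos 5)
Your proposal is correct and follows essentially the same route as the paper's proof: split the run of $\A_n$ at the first and last high-mode configurations, invoke Proposition~\ref{prop:high} to keep the middle segment in high mode, and translate each of the three blocks between runs of $\A_n$ and runs of $\A$. The only difference is one of detail: you spell out the lifting of the high-mode middle segment to a run of $\A$ from $(q, cn+d)$ by taking $c$ large enough to dominate the most negative partial sum, a step the paper's proof treats as immediate.
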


\begin{proof}
We start with the ``if'' direction. If there is an $n$-low run over $w$ in $\A$ then clearly $w \in L(\A_n)$.
Otherwise, suppose that $w = w_\pref w_\midd w_\suff$ and the words $w_\pref$, $w_\midd$ and $w_\suff$ admit the runs
as stated in~\eqref{eq:threeruns} above.
Then clearly $\A_n$ admit the following runs:
\begin{align*}
& (q_0, 0, \low)  \trans{w_\pref} (q, d \modulo{n}, \high) \\
& (q, d \modulo{n}, \high) \trans{w_\midd} (q', d' \modulo{n}, \high) \\
& (q', d' \modulo{n}, \high) \trans{w_\suff} (q_f, 0, \low)
\end{align*}
and thus $(q_0, 0) \trans{w} (q_f, 0)$ in $\A_n$ as required.

For the ``only if'' direction, suppose $w \in L(\A_n)$.
If $\A_n$ has a run over $w$ that never exits low mode, then clearly $\A$ has an $n$-low run over $w$.
Otherwise, consider any run of $\A_n$ over $w$. Distinguish the first and the last configuration in high mode along this run,
say $(q, d, \high)$ and $(q', d', \high)$.
The two configurations determine a factorization of the word $w$ into three parts $w = w_\pref w_\midd w_\suff$ such that
$\A_n$ admit the following runs:
\begin{align*}
& (q_0, 0, \low)  \trans{w_\pref} (q, d, \high) \\
& (q, d, \high) \trans{w_\midd} (q', d', \high) \\
& (q', d', \high) \trans{w_\suff} (q_f, 0, \low).
\end{align*}
The first and the last run imply the first and the last run in~\eqref{eq:threeruns}.
For the middle one, we may assume (w.l.o.g., by Proposition~\ref{prop:high}) that $\A_n$ never exits high mode,
which implies immediately existence of the middle run in~\eqref{eq:threeruns}.
\end{proof}
\begin{corollary}%
\label{cor:appr-prop}
Let $\A$ be an \ocn and let $m, n > 0$.
Then
\begin{enumerate}
  \item[(a)] $L(\A) \subseteq L(\A_n)$,
  \item[(b)] $L(\A_n) \subseteq L(\A_m)$ if $m \mid n$.
\end{enumerate}
\end{corollary}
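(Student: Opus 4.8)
The plan is to derive both inclusions directly from the characterization in Proposition~\ref{prop:appr-char}, so that the argument is essentially a matter of matching up the two disjuncts (a) and (b).

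For part~(a), I would take $w \in L(\A)$ and let $\rho$ be an accepting run of $\A$ over $w$. If the counter stays strictly below $n$ throughout $\rho$, then $\rho$ is an $n$-low run and disjunct~(a) of Proposition~\ref{prop:appr-char} applies, giving $w \in L(\A_n)$. Otherwise the counter reaches $n$ at some point; I would pick the first configuration $(q, n{+}d)$ with counter value $\geq n$ (here $d \geq 0$) and the last such configuration $(q', n{+}d')$, splitting $w = w_\pref w_\midd w_\suff$ accordingly. Since $\A$ itself performs the run, in particular the middle stretch is a run from $(q, n{+}d)$ to $(q', n{+}d')$; taking $c = c' = 1$ we get exactly the three runs required in~\eqref{eq:threeruns} (the only mild point is that the middle run of $\A$ from $(q, n{+}d)$ stays at counter $\geq n$, which is fine — we only need \emph{some} run, and this one witnesses it with $c = c' = 1$). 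Hence disjunct~(b) holds and $w \in L(\A_n)$.

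For part~(b), assume $m \mid n$ and take $w \in L(\A_n)$; I use Proposition~\ref{prop:appr-char} for $\A_n$ and produce the corresponding runs witnessing $w \in L(\A_m)$. If $\A$ has an $n$-low run over $w$, then since $m \leq n$ — wait, that is not immediate, so instead: an $n$-low run need not be $m$-low, so I cannot simply reuse disjunct~(a). Rather, I would argue uniformly: if $\A$ has an $n$-low run over $w$ that happens to be $m$-low, use~(a) for $\A_m$; if it has an $n$-low run reaching counter value $\geq m$, I split at the first and last configuration of counter value $\geq m$, just as in part~(a) but with threshold $m$, and since $m$ divides those counter values modulo... actually the cleanest route is: whenever $\A$ has \emph{any} run over $w$ reaching counter value $\geq m$, the first-and-last-time-$\geq m$ decomposition gives $w = w_\pref w_\midd w_\suff$ with runs $(q_0,0) \trans{w_\pref} (q, m{+}d)$, $(q, m{+}d) \trans{w_\midd} (q', m{+}d')$, $(q', m{+}d') \trans{w_\suff} (q_f,0)$, which is disjunct~(b) for $\A_m$ with $c = c' = 1$. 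And if $\A$ has a run over $w$ never reaching $m$, that is an $m$-low run, disjunct~(a). Since every $w \in L(\A_n)$ forces $\A$ to have \emph{some} accepting run over $w$ (either directly in case~(a) of the characterization for $\A_n$, or by concatenating the three runs of~\eqref{eq:threeruns} in case~(b), noting $cn + d \geq n \geq 1$ and the counters stay non-negative), the dichotomy above applies and yields $w \in L(\A_m)$.

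The main obstacle — and really the only subtlety — is part~(b): one must resist the temptation to think "$L(\A_n) \subseteq L(\A_m)$ because $\A_n$ tracks the counter more precisely," since an $n$-low run is not $m$-low when $m < n$. The correct observation is that $L(\A_n)$ is sandwiched as $L(\A) \subseteq L(\A_n) \subseteq \{w : \A \text{ has an accepting run over } w\}$ — indeed by Proposition~\ref{prop:appr-char} these inclusions collapse to an equality of the middle with the right-hand side is false in general, but the point that matters is only that $w \in L(\A_n)$ implies $\A$ has \emph{some} accepting run over $w$, which then feeds into the threshold-$m$ decomposition. Once that is seen, the divisibility hypothesis $m \mid n$ is actually not needed for this crude argument at all; it would be needed if one wanted the sharper fact that the remainders tracked in high mode are consistent, but for the stated inclusion the "first/last time $\geq m$" splitting suffices. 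I would double-check whether the authors intend the sharper statement (in which case $m \mid n$ genuinely enters through the congruence $(cn+d) \equiv d \pmod m$), and phrase the proof so that it is robust either way.
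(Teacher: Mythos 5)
Your part~(a) is correct and is essentially the paper's own argument: it reads off the characterization of Proposition~\ref{prop:appr-char}, with the first/last-crossing-of-$n$ decomposition yielding the three runs of~\eqref{eq:threeruns} with $c=c'=1$.

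Part~(b) contains a genuine gap. The load-bearing claim --- that every $w\in L(\A_n)$ forces $\A$ to have \emph{some} accepting run over $w$, ``by concatenating the three runs of~\eqref{eq:threeruns}'' --- is false, because those three runs do not compose: the first ends in configuration $(q,n{+}d)$ while the second starts in $(q,cn{+}d)$ with possibly $c>1$, and the second ends in $(q',c'n{+}d')$ while the third starts in $(q',n{+}d')$. That counter mismatch is precisely the slack that makes $\A_n$ a \emph{strict} over-approximation of $\A$. Concretely, for the two-state \ocn $\A$ with $L(\A)=\setof{a^kb^k}{k\in\N}$ from the paper's running example, the word $a^2b^4$ belongs to $L(\A_2)$ yet $\A$ has no accepting run over it, so your ``threshold-$m$ splitting of an actual run of $\A$'' never applies in the case where $w$ enters $L(\A_n)$ via disjunct~(b) of Proposition~\ref{prop:appr-char}. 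Your closing claim that the hypothesis $m\mid n$ is dispensable is also wrong, and the same example refutes it: $a^3b^6\in L(\A_3)$ but $a^3b^6\notin L(\A_2)$, so $L(\A_3)\not\subseteq L(\A_2)$. A correct argument along your lines must instead convert the $n$-witnesses of~\eqref{eq:threeruns} into $m$-witnesses: write $n+d=m+d_m$ with $d_m=n-m+d\ge 0$ and $cn+d=c_m m+d_m$ with $c_m=(c-1)\frac{n}{m}+1$, which is a positive integer \emph{precisely because} $m\mid n$ (and similarly for $c',d'$). The paper itself proves~(b) directly from the definition of $n$-approximation rather than through the characterization, but either way the divisibility hypothesis is where the content lies.
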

\begin{proof}
The first inclusions follow easily by the characterization of Proposition~\ref{prop:appr-char}.
The second one is easily shown by definition of $n$-approximation.
\end{proof}
Now we state and prove the Approximation Lemma, which is the crucial property of approximation.
In the sequel we will strongly rely on direct consequences of this lemma, formulated as
Corollaries~\ref{cor:apprlemma} and~\ref{cor:appr} below.
\begin{lemma}[Approximation Lemma]%
\label{lem:appr}
For an \ocn $\A$, the following conditions are equivalent:
\begin{enumerate}
\item[(a)] $L(\A)$ is empty,
\item[(b)] $L(\A_n)$ is empty, for some $n > 0$.
\end{enumerate}
\end{lemma}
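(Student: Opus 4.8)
The plan is to get the implication (b)$\Rightarrow$(a) for free from Corollary~\ref{cor:appr-prop}(a): since $L(\A)\subseteq L(\A_n)$, emptiness of any $L(\A_n)$ already forces emptiness of $L(\A)$. For (a)$\Rightarrow$(b) I would argue contrapositively: assuming $L(\A_n)\neq\emptyset$ for every $n$, produce an accepting run of $\A$. Let $W$ bound $|z|$ over the transitions of $\A$, set $B:=(|Q|+1)W$ and $G:=\mathrm{lcm}(1,\dots,B)$, and choose $n$ to be a large multiple of $G$ (say $n\ge 2G$); the point of this choice is that every integer in $\{1,\dots,B\}$ — hence, as we will see, every relevant cycle effect — divides $n$. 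Picking any $w\in L(\A_n)$, Proposition~\ref{prop:appr-char} gives two cases: either $\A$ has an $n$-low run over $w$, which is already an accepting run of $\A$ and we are done; or case~(b) applies, yielding states $q,q'$, constants $c,c'\ge 1$, $d,d'\ge 0$ and runs
\[
(q_0,0)\trans{}(q,n{+}d),\qquad (q,cn{+}d)\trans{}(q',c'n{+}d'),\qquad (q',n{+}d')\trans{}(q_f,0)
\]
in $\A$ (labels are irrelevant from here on, as only reachability matters). The task is to glue these three runs into one accepting run.

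The second step is a pumping argument for the one-counter net $\A$. The first run reaches counter value $n{+}d\ge n>B$, so the usual ``staircase'' argument applies: for $j=W,2W,3W,\dots$ let $u_j$ be the first position of the run from which the counter never again drops below $j$; the first $|Q|{+}1$ of the configurations visited at positions $u_W,u_{2W},\dots$ have strictly increasing counter values, each confined to an interval $[j,j{+}W{-}1]$, so two of them share a control state, and the sub-run between them is a contiguous cycle with positive effect $\gamma$ satisfying $1\le\gamma\le B$, hence $\gamma\mid G\mid n$. Re-inserting this cycle $i$ times shows $(q_0,0)\trans{}(q,n{+}d{+}i\gamma)$ for all $i\ge 0$. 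Applying the same argument to the reversal of the third run — which, read in the reverse net of $\A$ (again an OCN, with negated weights) starting from $(q_f,0)$, climbs from $0$ up to $n{+}d'>B$ — produces a contiguous negative-effect cycle in the third run, of effect $-\gamma'$ with $1\le\gamma'\le B$, hence $\gamma'\mid n$; re-inserting it $i'$ times gives $(q',n{+}d'{+}i'\gamma')\trans{}(q_f,0)$ for all $i'\ge 0$.

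Finally I would line up the counters at the two junctions, using that an OCN run may be shifted upwards by any nonnegative constant, so $(q,cn{+}d{+}s)\trans{}(q',c'n{+}d'{+}s)$ for every $s\ge 0$. Taking $s:=(n{+}d{+}i\gamma)-(cn{+}d)=(1{-}c)n+i\gamma$, which is $\ge 0$ once $i$ is large, the first two runs compose to $(q_0,0)\trans{}(q',(n{+}d')+(c'{-}c)n+i\gamma)$. Now choose $i$ to be a large multiple of $\gamma'$: then $(c'{-}c)n+i\gamma$ is a nonnegative multiple of $\gamma'$ — this is precisely where $\gamma'\mid n$ is used, to handle the term $(c'{-}c)n$ — say it equals $i'\gamma'$, and the third run re-pumped $i'$ times carries the computation from $(q',(n{+}d')+i'\gamma')$ down to $(q_f,0)$. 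Concatenating the three pieces yields an accepting run of $\A$, contradicting $L(\A)=\emptyset$, which completes the proof. The only genuine difficulty is this modular bookkeeping at the junctions; it is absorbed entirely by taking $n$ divisible by $G$, which in turn is why the staircase argument must be run carefully enough to deliver a cycle whose effect is not merely positive but bounded by $B$. Everything else is monotonicity of OCN runs.
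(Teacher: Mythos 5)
Your proof is correct and follows essentially the same route as the paper's: the same contrapositive reduction through Proposition~\ref{prop:appr-char}, the same pigeonhole extraction of a bounded-effect cycle from the low portion of the prefix run (and, after reversal, the suffix run), and the same trick of choosing $n$ divisible by every possible cycle effect. The only difference is cosmetic bookkeeping at the junctions --- the paper pumps the prefix and suffix so as to hit $cn+d$ and $c'n+d'$ exactly, whereas you pump by an arbitrary amount, shift the middle run upward, and absorb the mismatch modulo $\gamma'$ --- and both versions hinge on the same divisibility of the cycle effects into $n$.
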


\begin{proof}
Clearly (b) implies (a), by Corollary~\ref{cor:appr-prop}(a). 
In order to prove that (a) implies (b), fix $\A = (Q, (q_0, 0), (q_f, 0), T)$ and
suppose that the languages $L(\A_n)$ are \emph{non-empty}
for \emph{all} $n > 0$; our aim is to show that $L(\A)$ is non-empty as well.

In the sequel we do not need the non-emptiness assumption for \emph{all} $n$;
it will be enough to use the assumption for some fixed $n$ computed as follows.
Let $|Q|$ be the number of states of $\A$ and $d_\A$ be the maximal absolute value of
integer constants appearing in transitions $T$ of $\A$.
Then let $K = 
|Q| \cdot d_\A$, and let $n = K!$ ($K!$ stands for $K$ factorial.) 

Let $w$ be a fixed word that belongs to $L(\A_n)$.
Our aim is to produce a word $w'$ that belongs to $L(\A)$,
by a pumping in the word $w$; the pumping
will allow to make a run of $\A_n$ into a correct run of $\A$.

As $w\in L(\A_n)$, by Proposition~\ref{prop:appr-char} we learn that $w$ satisfies
one of conditions (a), (b). If $w$ satisfies (a) then $w' = w\in L(\A)$ as required.
We thus concentrate, from now on, on the case when $w$ satisfies condition (b) in Proposition~\ref{prop:appr-char}.
Let's focus on the first (fixed from now on) run of $\A$ in~\eqref{eq:threeruns}, namely
\[
(q_0, 0) \trans{w_\pref} (q, n+d),
\]
for some prefix $w_\pref$ of $w$ and $d \geq 0$.
This run starts with the counter value $0$, and ends with the counter value at least $n$.
We are going to analyze closely the prefix of the run that ends immediately before the counter value exceeds $K$
for the first time; denote this prefix by $\rho$.
A configuration $(q, m)$ in $\rho$ we call \emph{latest} if the counter value stays strictly above
$m$ in all the following configurations in $\rho$. In other words, a latest configuration is the last one in $\rho$
where the counter value is at most $m$.
A crucial but easy observation is that the difference of counter values of two consecutive latest configurations is at most $d_\A$.
Therefore, as $K$ has been chosen large enough, $\rho$ must contain more than $|Q|$ latest configurations.
By the pigeonhole principle, there must be a state of $\A$, say $q$, that appears in at least
two 
latest configurations.
%
%
In consequence, for some infix $v$ of $w_\pref$, the \ocn $\A$ has a run over $v$ of the form
\[
(q, m) \trans{v} (q, m'), \quad \text{ for some } m < m' \leq m+K.
\]
As a consequence, the word $v$ can be repeated an arbitrary number of times,
preserving correctness of the run but increasing the final counter value.
Recall that the final counter value of $\rho$ is $n+d$, while we would like to achieve
$c n + d$ (for $c$ in Proposition~\ref{prop:appr-char}). Modify the word $w_\pref$ by adding
$(c-1) \cdot n / (m' -m)$ repetitions of the word $v$, thus obtaining a new word $w'_\pref$ such that $\A$ has a run
\begin{align} \label{eq:runpref}
(q_0, 0) \trans{w'_\pref} (q, cn + d).
\end{align}

In exactly the same way we modify the suffix $w_\suff$ of $w$, thus obtaining a word $w'_\suff$
over which the \ocn $\A$ has a run
\begin{align} \label{eq:runsuff}
(q', c' n + d') \trans{w'_\suff} (q_f, 0).
\end{align}
By concatenation we obtain a word $w' = w'_\pref w_\midd w'_\suff$ which is
accepted by $\A$, by composition of the run~\eqref{eq:runpref}, the middle run in~\eqref{eq:threeruns}, and the run~\eqref{eq:runsuff}.
Thus $L(\A)$ is non-empty, as required.
\end{proof}
As {\ocn}s are closed under products with finite automata and these products commute with $n$-approximations, we get: 
%
\begin{restatable}{corollary}{CorApprLemma}%
\label{cor:apprlemma}
For an \ocn $\A$ and a regular language $R$, the following conditions are equivalent:
\begin{enumerate}
\item[(a)] $L(\A)$ and $R$ are disjoint,
\item[(b)] $L(\A_n)$ and $R$ are disjoint, for some $n > 0$.
\end{enumerate}
\end{restatable}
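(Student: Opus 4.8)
The plan is to derive Corollary~\ref{cor:apprlemma} from the Approximation Lemma by a standard product construction. First I would fix an \ocn $\A$ and a regular language $R$, and take an NFA $\B$ with $L(\B) = \overline{R}$, the complement of $R$ (possible since regular languages are effectively closed under complement); note that $L(\A) \cap R = \emptyset$ is equivalent to $L(\A) \subseteq L(\B)$, which is the formulation we want to exploit. Actually, it is cleaner to take an NFA $\B$ with $L(\B) = R$ directly and form the synchronous product $\A \synchrprod \B$, which is again an \ocn (the counter and its updates come from $\A$, the control states are pairs, and a transition is allowed only when both components agree on the letter), with $L(\A \synchrprod \B) = L(\A) \cap R$.

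The key observation, stated in the sentence preceding the corollary, is that $n$-approximation commutes with taking products with finite automata: $(\A \synchrprod \B)_n$ is isomorphic (as an NFA, hence language-equivalent) to $\A_n \synchrprod \B$. This is immediate from the definitions: the state space of $(\A\synchrprod\B)_n$ is $(Q_\A\times Q_\B)\times\{0,\dots,n-1\}\times\{\low,\high\}$, which is in bijection with the state space $(Q_\A\times\{0,\dots,n-1\}\times\{\low,\high\})\times Q_\B$ of $\A_n\synchrprod\B$, and the induced transitions match up letter-by-letter because the counter-dependent part of the construction only looks at the $\A$-component. Consequently $L((\A\synchrprod\B)_n) = L(\A_n\synchrprod\B) = L(\A_n)\cap R$.

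Now I would simply apply the Approximation Lemma to the \ocn $\A' := \A\synchrprod\B$. It gives: $L(\A')$ is empty iff $L(\A'_n)$ is empty for some $n>0$. Translating via the identities above, $L(\A)\cap R = \emptyset$ iff $L(\A_n)\cap R = \emptyset$ for some $n>0$, which is exactly the equivalence of (a) and (b) in the corollary.

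There is no real obstacle here; the only things to be careful about are the bookkeeping of the product construction (ensuring it truly yields an \ocn with the claimed language, and handling $\varepsilon$-transitions, where the $\B$-component should stay put while $\A$ moves on an $\varepsilon$-labelled transition) and verifying the commutation of $n$-approximation with the product at the level of transitions, including the clauses that switch between \low\ and \high\ modes. Since the mode-switching conditions depend only on the counter update $z$, which is inherited unchanged from $\A$, this check is routine. Hence the proof is essentially "apply the Approximation Lemma to $\A\synchrprod\B$."
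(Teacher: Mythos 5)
Your proof is correct and follows essentially the same route as the paper: form the synchronized product $\A \synchrprod \B$ with an NFA $\B$ recognizing $R$ (padding with self-loop $\varepsilon$-transitions so the product tracks $L(\A)\cap R$), observe that $n$-approximation commutes with this product, and apply the Approximation Lemma to $\A \synchrprod \B$. The initial digression about complementing $R$ is unnecessary but harmless, since you immediately switch to the direct intersection argument the paper uses.
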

\begin{proof}
Fix an \ocn $\A = (Q, (q_0, 0), (q_f, 0), T)$ and an NFA $\B = (P, p_0, p_f, U)$ recognizing the language $R$.
For convenience we assume here that $\A$ has an $\varepsilon$-transition of the form $(q, \varepsilon, q, 0)$ in every state $q\in Q$,
and $\B$ has a self-loop $\varepsilon$-transitions $(p, \varepsilon, p)$ in every state $p\in P$.
We will use the synchronized product $\A \synchrprod \B$ of \ocn $\A$ and NFA $\B$, which is the \ocn defined
by
\[
\A \synchrprod \B \ = \  \big(Q\times P, ((q_0, p_0), 0), ((q_f, p_f), 0), V\big),
\]
where a transition $\big((q, p), a, (q', p'), z\big) \in V$ if $(q, a, q', z) \in T$ and $(p, a, p') \in U$.
Observe that the synchronized product construction commutes with $n$-approximation: up to isomorphism of finite automata,
\begin{align} \label{eq:crossprodn}
{(\A \synchrprod \B)}_n  \ = \  \A_n \synchrprod \B.
\end{align}
Condition (a) in Corollary~\ref{cor:apprlemma} is equivalent to emptiness of the product $\A\synchrprod\B$ which,
by the Approximation Lemma applied to $A\synchrprod \B$, is equivalent to emptiness of the \emph{left} automaton in~\eqref{eq:crossprodn}, for some $n$.
Therefore condition (a) is also equivalent to emptiness of the \emph{right} automaton in~\eqref{eq:crossprodn}, for some $n$.
Finally, the latter condition is equivalent to condition (b).
%
%
%
\end{proof}

\begin{restatable}{corollary}{CorAppr}%
\label{cor:appr}
For two \ocn $\A$ and $\B$, the following conditions are equivalent:
\begin{enumerate}
\item[(a)] $L(\A)$ and $L(\B)$ are regular separable,
\item[(b)] $L(\A_n)$ and $L(\B)$ are disjoint, for some $n > 0$,
\item[(c)] $L(\A_n)$ and $L(\B_n)$ are disjoint, for some $n > 0$.
\end{enumerate}
\end{restatable}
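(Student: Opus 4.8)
The plan is to derive Corollary~\ref{cor:appr} from the Approximation Lemma and its Corollary~\ref{cor:apprlemma}, together with the monotonicity facts of Corollary~\ref{cor:appr-prop}. The three conditions will be shown equivalent by proving the cycle of implications (a)$\Rightarrow$(b)$\Rightarrow$(c)$\Rightarrow$(a), where only the first implication requires real work.

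\medskip

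For (a)$\Rightarrow$(b): suppose $L(\A)$ and $L(\B)$ are separated by a regular language $R$, i.e.\ $L(\A) \subseteq R$ and $L(\B) \cap R = \emptyset$. The key point is that $R$ is disjoint from $L(\B)$, so by Corollary~\ref{cor:apprlemma} (applied to the \ocn $\B$ and the regular language $R$) there is some $n > 0$ with $L(\B_n) \cap R = \emptyset$. Since also $L(\A) \subseteq R$ and $L(\A) \subseteq L(\A_n)$ \emph{does not} immediately give $L(\A_n) \subseteq R$, I need to be slightly more careful here: what I actually want is $L(\A_n) \cap L(\B) = \emptyset$. So instead I apply Corollary~\ref{cor:apprlemma} the other way: $L(\A)$ and the complement $\overline{R}$ of $R$ are disjoint (because $L(\A)\subseteq R$), hence there is $n>0$ with $L(\A_n) \cap \overline{R} = \emptyset$, i.e.\ $L(\A_n) \subseteq R$; since $R \cap L(\B) = \emptyset$ this yields $L(\A_n) \cap L(\B) = \emptyset$, which is (b).

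\medskip

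For (b)$\Rightarrow$(c): suppose $L(\A_n) \cap L(\B) = \emptyset$ for some $n>0$. Now apply Corollary~\ref{cor:apprlemma} with the roles reversed, taking as the \ocn the net $\B$ and as the regular language $R' = L(\A_n)$ (which is regular, being an NFA language): since $R'$ and $L(\B)$ are disjoint, there is some $m>0$ with $L(\B_m) \cap L(\A_n) = \emptyset$. Then set $N = \operatorname{lcm}(m,n)$ (or $m \cdot n$). By Corollary~\ref{cor:appr-prop}(b), $L(\A_N) \subseteq L(\A_n)$ and $L(\B_N) \subseteq L(\B_m)$, so $L(\A_N) \cap L(\B_N) = \emptyset$, which is (c).

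\medskip

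For (c)$\Rightarrow$(a): if $L(\A_n) \cap L(\B_n) = \emptyset$, then $L(\A_n)$ is a regular language, it includes $L(\A)$ by Corollary~\ref{cor:appr-prop}(a), and it is disjoint from $L(\B)$ because $L(\B) \subseteq L(\B_n)$, again by Corollary~\ref{cor:appr-prop}(a). Hence $L(\A_n)$ is a regular separator and (a) holds. The only mild obstacle is the bookkeeping in (a)$\Rightarrow$(b): one must remember that Corollary~\ref{cor:apprlemma} gives disjointness of an \emph{approximant} with a fixed regular set, so to push the approximation onto $\A$ while keeping disjointness from $L(\B)$ one applies it to the pair $(\A, \overline{R})$ rather than $(\B, R)$, using closure of regular languages under complement. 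Everything else is a routine combination of monotonicity in the modulus and the already-established Corollary~\ref{cor:apprlemma}.
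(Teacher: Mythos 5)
Your proof is correct and follows essentially the same route as the paper: (a)$\Rightarrow$(b) via Corollary~\ref{cor:apprlemma} applied to $\A$ against a regular set separating the two languages, (b)$\Rightarrow$(c) by a second application to $\B$ with $R'=L(\A_n)$ followed by monotonicity in the modulus, and (c)$\Rightarrow$(a) by taking $L(\A_n)$ itself as the separator. The only cosmetic difference is in (a)$\Rightarrow$(b): the paper picks the separator oriented so that it contains $L(\B)$ and is disjoint from $L(\A)$ (using the symmetry of regular separability) and applies Corollary~\ref{cor:apprlemma} to that set directly, whereas you keep the standard orientation and pass to the complement $\overline{R}$ --- the same argument up to renaming $R$ as $\overline{R}$.
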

\begin{proof}
In order to prove that (a) implies (b), suppose that a regular language $R$ separates $L(\B)$ from $L(\A)$, i.e., $R$ includes $L(\B)$ and
is disjoint from $L(\A)$. By Corollary~\ref{cor:apprlemma} we learn that for some $n > 0$, $R$ and $\A_n$ are disjoint. Thus necessarily
$L(\B)$ and $L(\A_n)$ are disjoint too.

To show that (b) implies (c) use Corollary~\ref{cor:apprlemma} for \ocn $\B$ and regular language $L(\A_n)$.
We get that there exists $m > 0$ such that $L(\B_m)$ and $L(\A_n)$ are disjoint. Then
using Corollary~\ref{cor:appr-prop}(b) we have that $L(\A_{nm})$ and $L(\B_{nm})$ are disjoint as well. 

Finally, (c) easily implies (a), as any of the regular languages $L(\A_n)$, $L(\B_n)$ can serve as a separator
(Corollary~\ref{cor:appr-prop}(a) is used here). 
\end{proof}

At this stage we do not have yet any effective bound on the minimal $n$ satisfying condition (b) or (c) in
 Corollary~\ref{cor:appr}.
Our decision procedure for \ocn, to be presented in the next section,
will test condition (b). 
A bound on $n$ in Corollary~\ref{cor:appr}(b)--(c) can be extracted from the decision procedure 
(as discussed in~Section~\ref{sec:in-pspace});
however, this bound does not lead directly to the optimal \pspace complexity.

\begin{remark}%
\label{rem:oca} \rm
Interestingly, exactly the same notion of approximation can be defined for \oca as well.
Even if Propositions~\ref{prop:high} and~\ref{prop:appr-char} are no more valid for \oca,
all other facts proved in this section still hold for this more general model,
in particular the Approximation Lemma and Corollaries~\ref{cor:apprlemma} and~\ref{cor:appr}.
Confronting this with undecidability of regular separability for \oca (which we prove in Section~\ref{sec:undecid}) leads to a conclusion that
the characterizations of  Corollary~\ref{cor:appr} are not effectively testable in case of \oca, while they are in case of \ocn.
\end{remark}



\section{Decision procedure}\label{sec:decid}

Our proof of \pspace-membership of the regular separability problem for \ocn splits into two parts.
In this section we do the first step:
we reduce the (non-)separability problem of two \ocn $\A$ and $\B$ 
to a kind of reachability property in the \crossproduct of $\A$ and $\B$,
and then build the decision procedure relying on semi-linearity of the corresponding reachability relation.
As the second (more technical) step, in Section~\ref{sec:in-pspace}
we concentrate on implementing the decision procedure in \pspace:
we encode the reachability property using (multiple) systems of linear Diophantine equations
which will be all enumerable (and hence solvable) in polynomial space.

\myparagraph{Semi-linear sets}
For a set $P \subseteq \Z^l$ of vectors, let $P^*\subseteq \Z^l$ contain all vectors that can be obtained as a finite sum, possibly the empty one,
and possibly with repetitions, of vectors from $P$.
In other words, $P^*$ is the set of \emph{non-negative} linear combinations of vectors from $P$.
\emph{Linear sets} are sets of the form $L = \set{b} + P^*$, where $b\in \Z^l$, $P$ is a finite subset of $\Z^l$,
and addition $+$ is understood element-wise.
Thus $L$ contains sums of the vector $b$ and a vector from $P^*$.
The vector $b$ is called \emph{base}, and vectors in $P$ \emph{periods}; we write
shortly $b + P^*$.
Finite unions of linear sets are called \emph{semi-linear}.
We use sometimes a special case of semi-linear sets of the form $B + P^*$,
for finite sets $B, P$.

We will often consider (semi-)linear sets with some coordinates non-negative, 
e.g., $b + P^* \subseteq \N^2\times\Z$. Note that in this case we necessarily have
$b\in\N^2 \times \Z$ and $P \subseteq \N^2\times\Z$.

\myparagraph{Vector addition systems with states}
We start by recalling the notion of \emph{integer} vector addition systems with states (\prevass).
For $d > 0$, a $d$-dimensional \prevass $\V = (Q, T)$, or $d$-\prevass, consists of a finite set $Q$ of control states, and a finite
set of transitions $T \subseteq Q\times \Z^d\times Q$.
A configuration of $\V$ is a pair $(q, v) \in Q\times\Z^d$ consisting of a state and an integer vector.
Note that we thus allow, in general, negative values in configuration (this makes a difference between \prevass and \vass);
however later we will typically
impose non-negativeness constraints on a selected subset of coordinates.
A $d$-\prevass $\V$ determines a step relation between configurations: there is a step from $(q, v)$ to $(q', v')$ if
$T$ contains a transition $(q, z, q')$ such that $v' = v + z$.
We write $(q, v) \trans{} (q', v')$ if there is a sequence of steps leading from $(q, v)$ to $(q', v')$, and say that
$(q', v')$ is \emph{reachable} from $(q, v)$ in $\V$.

\myparagraph{\Crossproduct}
For convenience we assume that every \ocn has an $\varepsilon$-transition of the form
$(q, \varepsilon, q, 0)$ in every control state $q$.
We will use a \crossproduct operation over one counter nets.
For two \ocn $\A = (Q, \alpha_0, \alpha_f, T)$ an $\B = (P, \beta_0, \beta_f, U)$, their \emph{\crossproduct}
$\A \crossprod \B$ is a 2-\prevass
whose
states are pairs of states $Q\times P$ of $\A$ and $\B$, respectively, and whose transitions contain all triples
\[
\big( (q, p), (z, v), (q', p') \big)
\]
such that there exists $a\in \Sigmaeps$ with $(q, a, q', z) \in T$ and $(p, a, p', v) \in U$.
%
%
Note that $\A \crossprod \B$ is unlabeled --- the alphabet letters are only used to synchronize $\A$ and $\B$ ---
and allows, contrarily to $\A$ and $\B$, for negative values on both coordinates.
Moreover note that there is no distinguished initial or final configuration in an \prevass.

We will later need to impose a selective non-negativeness constraint on values of configurations.
For a $d$-\prevass $\V$ and a sequence $C_1, \ldots, C_d$, where $C_i = \N$ or $C_i = \Z$ for each $i$, by
$\constrvass{\V}{C_1, \ldots, C_d}$ we mean the transition system of $\V$ truncated to the subset
$Q\times C_1 \times \cdots \times C_d \subseteq Q\times\Z^d$ of configurations.
For instance, $\constrvass{(\A\crossprod\B)}{\N, \N}$ differs from $\A \crossprod \B$ by imposing the non-negativeness constraint
on both coordinates, and is thus a 2-\vass.
On the other hand, in $\constrvass{(\A\crossprod\B)}{\Z, \N}$ the counter of $\A$ can get arbitrary integer values while the counter of $\B$
is restricted to be non-negative.

\myparagraph{Disjointness assumption}
Fix, for this and the next section, two input \ocn
\[\A = (Q, (q_0, 0), (q_f, 0), T) \quad \text{ and } \quad \B = (P, (p_0, 0), (p_f, 0), U),\]
and let $\V = \A\crossprod\B$ be their \crossproduct.
If the intersection of $L(\A)$ and $L(\B)$ is non-empty, the answer to the separability question is obviously negative.
We may thus consider only input \ocn $\A$ and $\B$ with $L(\A)$ and $L(\B)$ are disjoint.
This is eligible as the disjointness can be effectively checked in \pspace.
Indeed, the intersection of $L(\A)$ and $L(\B)$ is nonempty if, and only if,
\[
\big( (q_0, p_0), 0, 0 \big) \trans{} \big( (q_f, p_f), 0, 0 \big)
\]
in the 2-\vass $\constrvass{\V}{\N, \N}$, which can be checked in \pspace
by the result of~\cite{DBLP:conf/lics/BlondinFGHM15}.

\begin{assumption}
For the decision procedure we assume, w.l.o.g.,~that $L(\A)\cap L(\B) = \emptyset$.
\end{assumption}

%
%

\myparagraph{Reduction to a reachability property of $ \A\crossprod\B$}
Recall Corollary~\ref{cor:appr}(b) which characterizes regular non-separability by non-emptiness of the intersection of 
$L(\A_n)$ and $L(\B)$, for all $n > 0$, which, roughly speaking, is equivalent to a reachability property
in the \crossproduct of NFA $\A_n$ and the \ocn $\B$, for all $n > 0$.
We are going now to internalize the quantification over all $n$, by transferring the reachability property to the
\crossproduct $\V= \A\crossprod\B$ of the two \ocn $\A$ and $\B$.

For convenience we introduce the following terminology. For $n > 0$ we say that $\V$ \emph{admits $n$-reachability} (or
$n$-reachability holds in $\V$) if
there are $q, q' \in Q$, $p, p'\in P$, $m, m' \geq n$, $l, l' \geq 0$ and $x \in \Z$ such that
$n | x$
and the following conditions hold:
\begin{align*}
  \pref_{q}^{p}(m, l): \quad &
  \big( (q_0, p_0), 0, 0 \big) \trans{} \big( (q, p), m, l \big) \text{ in } \constrvass{\V}{\N, \N}, \\
  \midd_{q q'}^{p p'}(m, l, m'+x, l'): \quad &
  \big( (q, p), m, l \big) \ \trans{} \big( (q', p'), m' + x, l' \big) \text{ in } \constrvass{\V}{\Z, \N}, \\
  \suff_{q'}^{p'}(m', l'): \quad &
  \big( (q', p'), m', l' \big) \trans{} \big( (q_f, p_f), 0, 0 \big) \text{ in } \constrvass{\V}{\N, \N}.
\end{align*}
The $n$-reachability in $\V$ differs  in three respects from ordinary reachability
$\big( (q_0, p_0), 0, 0 \big) \trans{} \big( (q_f, p_f), 0, 0 \big)$
in $\constrvass{\V}{\N, \N}$.
First, we require two intermediate values of the counter in $\A$, namely $m, m'$, to be at least $n$.
Second, in the middle part we allow the counter of $\A$ to be negative.
Finally, we allow for a mismatch $x\in \Z$ between the the middle and the final part,
as long as $x$ is divisible by $n$.
Thus $n$-reachability does \emph{not} imply non-emptiness $(q_0, 0) \trans{} (q_f, 0)$ of $\A$.
On the other hand, $n$-reachability \emph{does} imply non-emptiness $(p_0, 0) \trans{} (p_f, 0)$ of $\B$.

\begin{proposition}%
\label{prop:three-parts}
$\A$ and $\B$ are not regular separable if, and only if, $\V$ admits $n$-reachability for all $n > 0$.
\end{proposition}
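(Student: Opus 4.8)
The plan is to connect the characterization from Corollary~\ref{cor:appr}(b) --- that $\A$ and $\B$ are not regular separable iff $L(\A_n) \cap L(\B) \neq \emptyset$ for all $n > 0$ --- with the combinatorial notion of $n$-reachability in $\V$. So I would fix $n > 0$ and prove: $L(\A_n) \cap L(\B) \neq \emptyset$ if and only if $\V$ admits $n$-reachability. Summing over all $n$ then gives the proposition. For the whole argument the key tool is Proposition~\ref{prop:appr-char}, which describes membership in $L(\A_n)$ in terms of runs of $\A$; I would combine it with the trivial fact that $w \in L(\B)$ means $\B$ has an accepting run over $w$, and synchronize the two runs letter-by-letter (this is exactly what the cross-product $\V$ records, using the $\varepsilon$-self-loops to handle $\varepsilon$-moves on either side).

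First I would handle the ``easy'' half of each direction coming from case (a) of Proposition~\ref{prop:appr-char}: if $\A$ has an $n$-low accepting run over some $w \in L(\B)$, then running this alongside $\B$'s accepting run witnesses ordinary reachability $\big((q_0,p_0),0,0\big) \trans{} \big((q_f,p_f),0,0\big)$ in $\constrvass{\V}{\N,\N}$; but since this is an accepting run of $\A$, it contradicts the standing Disjointness Assumption (it would put a word into $L(\A)\cap L(\B)$). So under our assumption, case (a) never occurs, and I may work exclusively with case (b). This is the place where the Disjointness Assumption is actually used, and it is what lets me always produce the two ``high'' intermediate counter values $m, m' \geq n$ demanded by $n$-reachability.

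For the main correspondence, suppose $w \in L(\A_n) \cap L(\B)$ via case (b): $w = w_\pref w_\midd w_\suff$ with the three runs of $\A$ in~\eqref{eq:threeruns}, and $\B$ has an accepting run over $w$. Split $\B$'s run along the same factorization into $\beta_0 \trans{w_\pref} \beta \trans{w_\midd} \beta' \trans{w_\suff} \beta_f$, reaching states $p, p'$ with some non-negative counter values $l, l'$. Synchronizing the $\A$-run $(q_0,0)\trans{w_\pref}(q, n+d)$ with the $\B$-run gives clause (a) of $n$-reachability with $m = n+d \geq n$; synchronizing the middle runs $(q, cn+d)\trans{w_\midd}(q',c'n+d')$ and $\beta \trans{w_\midd}\beta'$ gives clause (b) in $\constrvass{\V}{\Z,\N}$ (the $\A$-coordinate may dip negative, which is allowed), landing at counter $m'' = c'n+d'$; and synchronizing the suffix runs gives clause (c) with $m' = n+d' \geq n$. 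The congruence $m'' \equiv m' \pmod n$ holds because $m'' = c'n+d' \equiv d' \equiv n+d' = m' \pmod n$. Conversely, given $n$-reachability, the three projections onto $\A$ furnish exactly runs of the shape required in~\eqref{eq:threeruns} (rewriting $m = n+d$, $m' = n+d'$, and choosing $c, c'$ so that the counters $cn+d$, $c'n+d'$ match; here the congruence $m'' \equiv m' \pmod n$ is what guarantees $m''$ can be written as $c'n + d'$ with $d' = m' - n$, and the constraint $l, l' \geq 0$ with the $\Z,\N$-truncation ensures the $\B$-side runs stay legal), so by Proposition~\ref{prop:appr-char}(b) the common underlying word lies in $L(\A_n)$, while the $\B$-projections of (a)+(b)+(c) compose to an accepting run of $\B$ over that same word, so it lies in $L(\B)$ too.

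The main obstacle I anticipate is purely bookkeeping: carefully matching the counter arithmetic on the $\A$-side between the ``modulo $n$'' world of $\A_n$ (where the middle configuration has counter $m'' $ with only $m'' \equiv m' \pmod n$ guaranteed) and the ``multiple of $n$ plus remainder'' form $cn+d$, $c'n+d'$ used in~\eqref{eq:threeruns}, and checking that negativity of the $\A$-counter is confined to the middle segment (clause (b)) exactly as the $\constrvass{\V}{\Z,\N}$ vs.\ $\constrvass{\V}{\N,\N}$ distinction prescribes. Handling $\varepsilon$-transitions so that the two synchronized runs have genuinely equal length is the other fiddly point, but it is dealt with by the assumed $\varepsilon$-self-loops on both $\A$ and $\B$, so each machine can idle while the other moves. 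None of this is deep; the real content has already been extracted into Proposition~\ref{prop:appr-char} and the Disjointness Assumption.
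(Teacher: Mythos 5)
Your overall route is the same as the paper's: reduce via Corollary~\ref{cor:appr}(b) to showing, for each fixed $n$, that $L(\A_n)\cap L(\B)\neq\emptyset$ iff $\V$ admits $n$-reachability, use Proposition~\ref{prop:appr-char} to translate membership in $L(\A_n)$ into the three runs of~\eqref{eq:threeruns}, and use the disjointness assumption to discard case~(a). The ``only if'' direction is essentially right; your value $m''=c'n+d'$ is not the actual endpoint of the clause-(b) run started at $m=n+d$ (that endpoint is $(c'-c+1)n+d'$), but this slip is harmless since only the residue of $m''$ modulo $n$ matters and both values are $\equiv d'\equiv m' \modulo{n}$.

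The gap is in the ``if'' direction, at the sentence ``the three projections onto $\A$ furnish exactly runs of the shape required in~\eqref{eq:threeruns}.'' For the middle segment this is false as stated: clause~(b) of $n$-reachability lives in $\constrvass{\V}{\Z,\N}$, so its $\A$-projection is only a sequence of transitions whose partial sums, started at $m$, may go negative, whereas the middle run in~\eqref{eq:threeruns} must be a genuine run of $\A$, i.e., non-negative throughout. Choosing $c,c'$ ``so that the counters match'' fixes the endpoints but says nothing about the intermediate values. The missing idea is that $c$ is a free parameter precisely so that you can lift the whole middle run: start it at $cn+d$ for $c$ large enough that every partial sum becomes non-negative, and so that $c'=c+(m''-m')/n\geq 1$ even when $m''<m'$; this is exactly the paper's step~\eqref{eq:wgore}. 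Your ``obstacles'' paragraph notes that negativity is ``confined to the middle segment,'' which is the right observation for the forward direction, but in the converse you must actively eliminate that negativity by this lifting, and without it the appeal to Proposition~\ref{prop:appr-char}(b) does not go through.
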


\begin{proof}
Using the characterization of Corollary~\ref{cor:appr}(b), it suffices to show that for every $n > 0$, 
 $L(\A_n) \mycap L(\B) \neq \emptyset$ if, and only if, $\V$ admits $n$-reachability.
Fix $n > 0$ in the sequel.

For the ``only if'' direction, let $w \in L(\A_n) \mycap L(\B)$.
As $w \in L(\A_n)$, we may apply Proposition~\ref{prop:appr-char}.
Note that the condition (a) of Proposition~\ref{prop:appr-char} surely does not hold, as $w \notin L(\A)$
due to the disjointness assumption;
therefore condition (b) must hold
for some states $q, q' \in Q$ and natural numbers $c, c' \geq 1$ and $d, d' \geq 0$.
Put $m := n + d$, $m' := n + d'$ and
$x := (c' - c + 1)n$ (recall that $m' + x$ may be negative).
As $w\in L(\B)$, the corresponding states $p, p'$ and counter values $l, l'$ can be taken from the corresponding two positions
in an accepting run of $\B$ over $w$.
The chosen states $q, q', p, p'$ and integer values $m, m', l, l', x$ prove $n$-reachability in $\V$, as required.

For the ``if'' direction suppose that $\V$ admits $n$-reachability, and let $w_\pref$, $w_\midd$ and $w_\suff$ be some words
witnessing the three conditions of $n$-reachability:
\begin{align*}
  & \big( (q_0, p_0), 0, 0 \big) \trans{w_\pref} \big( (q, p), m, l \big) \text{ in } \constrvass{\V}{\N, \N}, \\
  & \big( (q, p), m, l \big) \ \trans{w_\midd} \big( (q', p'), m' + x, l' \big) \text{ in } \constrvass{\V}{\Z, \N}, \\
  & \big( (q', p'), m', l' \big) \trans{w_\suff} \big( (q_f, p_f), 0, 0 \big) \text{ in } \constrvass{\V}{\N, \N},
\end{align*}
for some $q, q' \in Q$, $p, p'\in P$, $m, m' \geq n$, $l, l' \geq 0$ and $x \in \Z$.
In particular, this implies
\begin{align}  \label{eq:wgore}
(q, m+(c-1)n) \trans{w_\midd} (q', m'+x+(c-1)n) \text{ in } \A
\end{align}
for $c\geq 1$ large enough.
This also implies that the word $w = w_\pref w_\midd w_\suff$ belongs to $L(\B)$.
We will prove that $w$ also belongs to $L(\A_n)$, by demonstrating
that the factorization $w = w_\pref w_\midd w_\suff$ satisfies the condition (b) in  Proposition~\ref{prop:appr-char}.
Indeed, for $d := m - n$, $d' := m' - n$, we obtain then runs over $m_\pref$ and $m_\suff$ as required in  (b)
in Proposition~\ref{prop:appr-char}.
In order to get a  run over $w_\midd$, we take $c\geq 1$ large enough so that~\eqref{eq:wgore} holds;
for $c' := c + x/n$,~\eqref{eq:wgore} rewrites, as required, to
$
(q, cn+d) \trans{w_\midd} (q', c'n + d') \text{ in } \A.
$
\end{proof}

\myparagraph{Witnesses and effective semi-linearity}

Building on Proposition~\ref{prop:three-parts}, we are going to design a decision procedure to check
whether $\V$ admits $n$-reachability for all $n>0$.
To this end we use the three conditions of $n$-reachability as subsets of pairs resp.~quadruples of integers,
\begin{align} \label{eq:relations}
\pref_{q}^{p}, \ \suff_{q}^{p} \subseteq \N^2, \qquad
\midd_{q q'}^{p p'} \subseteq \N^2 \times\Z \times \N,
\end{align}
%
%
and define the set $\RR \subseteq \N^2 \times \Z$ by the following formula with three variables $(m, m', x)$:
\newcommand{\defR}{& \exists q, q' \in Q,  \, p, p'\in P, \,  l, l' \in \N,  
\ \pref_q^p(m, l)  \land  \midd_{q q'}^{p p'}(m, l, m' {+} x, l')  \land     \suff_{q'}^{p'}(m', l'). 
}
\begin{align} \begin{aligned}
    \label{eq:defR}
    \defR
\end{aligned} \end{align}
Then $n$-reachability is equivalent to saying that some $(m, m', x)\in \RR$  satisfies
\begin{align} \label{eq:witness}
m, m' \geq n \quad \text{ and } \quad n | x.
\end{align}
Any triple $(m, m', x) \in \RR$ satisfying the condition~\eqref{eq:witness} we call \emph{$n$-witness} in the sequel.
In this terminology, our algorithm is to decide whether $\RR$ contains $n$-witnesses for all $n>0$.

\begin{proposition}%
\label{prop:semi-lin}
The set $\RR$ is effectively semi-linear, i.e., a union of linear sets
\begin{align} \label{eq:R}
\RR = L_1 \cup \ldots \cup L_k, 
\end{align}
where $L_i = b_i + {P_i}^*$ for effectively computable
bases $b_i \in \N^2\times\Z$ and periods $P_i\finsubseteq \N^2 \times Z$ ($i = 1, \ldots, k$).
\end{proposition}
\begin{proof}
All the sets in~\eqref{eq:relations} are effectively semi-linear.
Indeed, $\pref_q^p$ is essentially the reachability set of a 2-\vass, and thus effectively
semi-linear~\cite{DBLP:conf/lics/BlondinFGHM15,HP79}, and likewise for $\suff_q^p$.
Moreover, effective semi-linearity of $\midd_{q q'}^{p p'}$ can be derived directly from
Parikh's theorem; specifically, it follows from Lemma 3.4 in~\cite{Georg}
(in Section~\ref{sec:in-pspace} an explicit proof is provided, giving additionally an exponential bound on
representation size of the semi-linear set).
In consequence, as semi-linear sets are effectively closed under boolean combinations and projections,
and the ternary relation of addition is semi-linear, the set $\RR$ is effectively semi-linear too.
\end{proof}

The next lemma allows us to consider each of the linear sets separately:
\begin{lemma}%
\label{lem:unionwitness}
If a finite union $X_1 \cup \ldots \cup X_k \subseteq \N^2\times\Z$
contains $n$-witnesses for all $n>0$,
then some of $X_1, \ldots, X_k$ also does.
\end{lemma}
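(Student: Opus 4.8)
The plan is to argue by contradiction using a pigeonhole-type argument on the finitely many sets. Suppose that the finite union $X_1 \cup \ldots \cup X_k$ contains an $n$-witness for every $n > 0$, but that no single $X_i$ has this property. Then for each index $i \in \{1, \ldots, k\}$ there is some ``bad'' value $n_i > 0$ such that $X_i$ contains no $n_i$-witness. First I would form the single value $n := \mathrm{lcm}(n_1, \ldots, n_k)$ (or even just the product $n_1 \cdots n_k$; any common multiple works).

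The key observation is that being an $n$-witness is \emph{monotone} in a suitable sense with respect to the condition~\eqref{eq:witness}: if $(m_1, m_2, m_3)$ is an $n$-witness and $n'$ divides $n$, then $(m_1, m_2, m_3)$ is also an $n'$-witness, since $m_1, m_2 \geq n \geq n'$ and $n' \mid n \mid m_3$. Now apply the hypothesis to this particular $n$: the union $X_1 \cup \ldots \cup X_k$ contains some $n$-witness $(m_1, m_2, m_3)$. This triple lies in $X_i$ for some $i$. But $n_i \mid n$, so by the monotonicity observation $(m_1, m_2, m_3)$ is also an $n_i$-witness, and it lies in $X_i$ — contradicting the choice of $n_i$ as a value for which $X_i$ contains no $n_i$-witness.

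I do not expect a serious obstacle here; the only point requiring a little care is the monotonicity step, i.e. checking from the definition of $n$-witness (the two inequalities $m_1, m_2 \geq n$ and the divisibility $n \mid m_3$) that it survives replacing $n$ by any divisor. This is immediate: the inequalities only get weaker and divisibility by a divisor follows by transitivity. So the proof is a short formal argument: introduce the bad values $n_i$, take their common multiple $n$, extract an $n$-witness from the union, locate it in some $X_i$, and downgrade it to an $n_i$-witness to reach the contradiction.
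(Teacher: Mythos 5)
Your proof is correct and rests on exactly the same key observation as the paper's — the monotonicity of $n$-witnesses under divisors of $n$ combined with a pigeonhole over the finitely many $X_i$; the paper phrases it directly (a sequence of $(n!)$-witnesses, one $X_i$ containing infinitely many) while you phrase the contrapositive using the least common multiple of the bad values $n_i$. The two formulations are interchangeable, so no further comment is needed.
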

\begin{proof}
We use a monotonicity property: if $n' | n$ then
every $n$-witness is automatically also $n'$-witness.
Consider a sequence of $(n!)$-witnesses,  
for $n>0$, contained in $X$. One of the sets $X_1, \ldots, X_k$ necessarily
contains infinitely many of them.
By monotonicity, this set contains $(n!)$-witnesses for all $n>0$, and hence $n$-witnesses for all $n>0$.
\end{proof}

\myparagraph{Decision procedure}

Relying on Proposition~\ref{prop:semi-lin} and Lemma~\ref{lem:unionwitness}, our procedure
enumerates the linear sets~\eqref{eq:R} and chooses one of them.
It thus remains to solve the core problem: given $b \in \N^2\times\Z$
and $P \finsubseteq \N^2 \times \Z$, decide whether $L = b + P^*$
contains $n$-witnesses for all $n>0$.
For such sets $L$, the condition we are to check  boils down to two separate sub-conditions:
\begin{lemma}%
\label{lem:separately}
$L = b+P^*$ contains $n$-witnesses for all $n>0$ if, and only if,
 \begin{enumerate}
\item[(a)] for every $n$, there is $(m, m', x) \in L$ with $m, m' \geq n$; and
\item[(b)] for every $n$, there is $(m, m', x)\in L$ with $n | x$.
\end{enumerate}
\end{lemma}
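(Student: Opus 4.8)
The forward direction is immediate: if $L$ contains $n$-witnesses for all $n>0$, then each such witness simultaneously satisfies $n_1,n_2\ge n$ and $n\mid n_3$, so (a) and (b) both hold. The substance is the converse, where we have separate witnesses for the two sub-conditions and must combine them into a single element of $L$ that witnesses both at once. The plan is to exploit the additive (linear) structure of $L=b+P^*$: if $u=b+p$ is a point of $L$ with $p\in P^*$ having large first two coordinates (from (a)), and $v=b+p'$ is a point with $p'\in P^*$ whose third coordinate is divisible by $n$ (from (b)), then I want to build a combined point of the form $b + (\text{suitable combination of }p, p')$ lying in $L$ and being an $n$-witness.

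The key step is to fix $n$ and choose, by (b), a point $(v_1,v_2,v_3)\in L$ with $n\mid v_3$; write it as $b+p'$ with $p'\in P^*$. Separately, by (a) applied to a much larger threshold, pick $(u_1,u_2,u_3)\in L$ with $u_1,u_2$ huge; write it as $b+p$ with $p\in P^*$. Now consider the point $w = b + p + (n\cdot p')$ — note $n\cdot p' \in P^*$ since $P^*$ is closed under non-negative integer multiples of its elements, and $b+p+n p' \in b+P^*=L$. Its first two coordinates are $u_1 + n v_1'$, $u_2 + n v_2'$ where $(v_1',v_2',v_3')=p'$; these need not be $\ge n$, but I can instead take $w = u + n(v-b) = b + p + n p'$, whose third coordinate is $u_3 + n v_3'$. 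That is congruent to $u_3 \pmod n$, which is not obviously divisible by $n$. So the naive combination fails, and this is exactly the main obstacle: reconciling the "large coordinates" requirement with the "divisibility" requirement, since adding the divisibility-witness perturbs the large coordinates and adding the large-coordinate witness perturbs the residue.

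To overcome this I would argue as follows. From (b) with modulus $n$, get $v = b+p'$ with $n \mid v_3$. The third coordinate of $v$ equals $b_3 + \sigma$ where $\sigma$ is the sum of third coordinates of the periods used in $p'$; so $b_3 + \sigma \equiv 0 \pmod n$. Now from (a) with a large threshold, get $u=b+p$. Form $w = b + p + n\cdot p' \in L$: its first two coordinates are those of $u$ plus $n$ times non-negative quantities, hence still $\ge n$; its third coordinate is $u_3 + n\sigma$. This is $\equiv u_3 \pmod n$, and I have not gained divisibility — which confirms the obstacle is real and a single application of (b) is not enough. The correct move is to use $k\cdot p'$ for $k$ chosen so that $u_3 + k\sigma \equiv 0 \pmod n$; since $b_3+\sigma\equiv 0$, i.e.\ $\sigma \equiv -b_3 \pmod n$, this asks for $u_3 - k b_3 \equiv 0 \pmod n$. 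When $\gcd(b_3,n)\mid u_3$ such $k$ exists, but in general it may not — so I would instead run the argument with modulus $n\cdot b_3$ (or with a modulus large enough to absorb $b_3$), guaranteed available by (a) and (b) since both hold for \emph{all} moduli. Concretely: given target modulus $n$, apply (b) with modulus $n\cdot b_3$ to get $v$ with $n b_3 \mid v_3 - $ (correction term), apply (a) with a huge threshold to get $u$, and take $w = b + p + v_3'$-scaled copies of $p'$ so that the third coordinate of $w$ becomes divisible by $n$ while the first two stay $\ge n$; the freedom from the larger modulus $n b_3$ gives exactly the slack needed to cancel the $b_3$ term. I expect the bookkeeping here — tracking how the period sums contribute modulo $n$ and choosing the multiplier — to be the delicate part, but it is elementary number theory once the linear structure is in hand.
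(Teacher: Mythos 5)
You correctly identify the only nontrivial point — combining an (a)-witness and a (b)-witness into a single element of $L$ — and you correctly observe that the naive combination $b+p+n p'$ fails because scaling the \emph{divisibility} witness by $n$ destroys the very residue it was supposed to supply. But you then resolve the obstacle in the wrong direction, and your resolution does not go through. The missing idea is simply to scale the \emph{other} witness: write the (a)-witness as $b+k$ with $k\in P^*$ and $b_1+k_1,\,b_2+k_2\ge n$, and the (b)-witness as $b+m$ with $m\in P^*$ and $n\mid b_3+m_3$; then
\[
b + n k + m \;\in\; b+P^* = L
\]
is already an $n$-witness. Its third coordinate is $b_3+nk_3+m_3\equiv b_3+m_3\equiv 0\pmod n$, and its $i$-th coordinate ($i=1,2$) is at least $n$ because $m_i\ge 0$ and either $k_i=0$ (so $b_i\ge n$) or $k_i\ge 1$ (so $nk_i\ge n$); here one uses that all periods are non-negative on the first two coordinates, since $L\subseteq\N^2\times\Z$. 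This is exactly the paper's one-line argument.

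Your proposed repair — rerunning (b) with modulus $n\cdot b_3$ and choosing a multiplier $k$ with $u_3-k b_3\equiv 0\pmod n$ — does not close the gap: it is vacuous when $b_3=0$, and in general the congruence $u_3\equiv k b_3\pmod n$ is solvable only when $\gcd(b_3,n)\mid u_3$, which you have no control over. You acknowledge the bookkeeping is left open, and as written it cannot be completed along those lines; the argument must scale the largeness witness, not the divisibility witness.
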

\begin{proof}
Put $b = (b_1, b_2, b_3)$.
Indeed, if $(b_1, b_2, b_3) + (k_1, k_2, k_3) \in L$  for $b_1 + k_1, b_2 + k_2 \geq n$,
and $(b_1, b_2, b_3) + (m_1, m_2, m_3) \in L$ for $n | (b_3 + m_3)$, then
$(b_1, b_2, b_3) + n(k_1, k_2, k_3) + (m_1, m_2, m_3) \in L$ is an $n$-witness.
Hence conditions (a) and (b) imply that $L$ contains $n$-witnesses for all $n>0$.
The opposite direction is obvious.
\end{proof}

Since all vectors in $P$ are non-negative on the first two coordinates,
condition (a) in Lemma~\ref{lem:separately} is easy for algorithmic verification: enumerate vectors in $P$
while checking
whether some vector  is positive on first coordinate, and some (possibly different) vector
is positive on second coordinate.

As the last bit of our decision procedure,
it remains to check condition (b) in Lemma~\ref{lem:separately}.
Writing $b_3$, resp.~$P_3$, for the projection of $b$, resp.~$P$, on the third coordinate, we need to check
whether the set $b_3 +  {P_3}^* \subseteq \Z$ contains (possibly negative) multiplicities of all $n>0$.
We build on: 
\begin{restatable}{lemma}{PropLast}%
\label{lem:last}
The set $b_3 + {P_3}^*$ contains multiplicities of all $n>0$ if, and only if,
$b_3$ is a linear combination of $P_3$, i.e.,
\begin{align}%
\label{eq:proplast}
b_3 = a_1 p_1 + \cdots + a_k p_k,
\end{align}
for $a_1, \ldots, a_k \in \Z$ and $p_1, \ldots, p_k \in P_3$.
\end{restatable}
\begin{proof}
For the ``only if'' direction,
suppose that $b_3 + {P_3}^*$ contains multiplicities of all positive numbers.
If $b_3 = 0$ then it is the empty linear combination of $P_3$; suppose therefore that $b_3 \neq 0$.
Note that this implies in particular that $P_3$ is forcedly nonempty.
Fix an arbitrary $n \in P_3$. Suppose $n>0$ (if $n<0$ take $-n$ instead of $n$).
By the assumption,
$b_3 + p \equiv 0 \modulo{n}$ for some  $p\in {P_3}^*$, i.e., 
\[
b_3 \equiv - p \modulo{n}.  
\]
Then $b_3  = - p + a n$ for some $a\in\Z$,
hence 
a linear combination of $P_3$ as required.

For the ``if'' direction, suppose $b_3$ is a linear combination of $P_3$ as in~\eqref{eq:proplast},
and let $n>0$.
It is possible to decrease the numbers $a_1, \ldots, a_k$ by multiplicities of $n$ so that they become non-positive.
Thus we have
\[
b_3 \equiv (a_1 - c_1 n) p_1 + \cdots + (a_k - c_k n) p_k \modulo{n},
\]
for $a_1 - c_1 n \leq 0, \ldots, a_k - c_k n \leq 0$, i.e., $b_3 \equiv - p \modulo{n}$ for
$p = (c_1 n - a_1) p_1 + \cdots + (c_k n - a_k) p_k \in {P_3}^*$.
In consequence $b_3 + p \equiv 0 \modulo{n}$, 
as required.
\end{proof}

Thus we only need to check whether $b_3$ is a linear combination of $P_3$.
By the Chinese remainder theorem, this is equivalent to
$b_3$ being a multiplicity of the greatest common divisor 
of all numbers in $P_3$.
Thus our decision procedure enumerates the set $P$, computes the greatest common divisor $g$
of projections $p_3$ on the third coordinate of all vectors $p\in P$, and finally checks whether $g | b_3$.
This completes the description of the decision procedure.



\section{\pspace upper bound}\label{sec:in-pspace}

In this section we prove the \pspace upper bound of Theorem~\ref{thm:pspace-comp}.
All the \pspace complexity statements below are understood with respect to the size of the two input \ocn,
under binary encoding of integers.  

As before, fix two \ocn $\A$ and $\B$ with disjoint languages.
The decision procedure from Section~\ref{sec:decid} \emph{enumerates} all linear sets
$L = b + P^*$ appearing in an effectively computed representation of the semi-linear set $\RR$.
For obtaining the upper bound we need to provide suitable estimations on representation size of these linear sets.
To this aim we introduce the concept of \emph{\pspacesolvable sets}, whose semi-linear representation
can be effectively enumerated in polynomial space.

\myparagraph{\pspacesolvable sets}
%
For a finite set of vectors $P \finsubseteq \Z^l$,
we say that an algorithm \emph{enumerates} $P$ if it computes consecutive elements of a sequence
$p_1, \ldots, p_m$, possibly with repetitions, such that $P = \set{p_1, \ldots, p_m}$; in other words, every element of
$P$ appears at least once in the sequence, but no other element does.
An algorithm enumerates a linear set $L = b + P^*\subseteq \Z^l$
if it first computes $b$ and then enumerates $P$.
If there is a polynomial space algorithm which enumerates $L = b + P^*$, the set $L$ is called \emph{\pspacesolvable}.
A semi-linear set $S$ we call \pspacesolvable (slightly abusing the notation) if for some sequence of linear sets
$L_1, \ldots, L_k$ such that
\[
S = L_1 \cup \ldots \cup L_k,
\]
there is a polynomial space algorithm that
first enumerates $L_1$, then enumerates $L_2$, and so on, and finally enumerates $L_k$.
In particular, this means that for some polynomial bound $N$,
every base and every period can be stored using at most $N$ bits.

\myparagraph{\pspace upper bound}

Propositions~\ref{prop:effsolv} and~\ref{prop:effsolv2} below state that all the sets appearing
in~\eqref{eq:relations}
are \pspacesolvable.
Their direct consequence, Proposition~\ref{prop:Reffsolv}, says the same about the set $\RR$,
and forms the cornerstone of \pspace decision procedure.
\begin{proposition}%
\label{prop:effsolv}
For every $q \in Q$ and $p \in P$, the semi-linear sets
$\pref_{q}^{p}$ and $\suff_{q}^{p}$ are \pspacesolvable.
\end{proposition}
\begin{proposition}%
\label{prop:effsolv2}
For every $q, q' \in Q$ and $p, p' \in P$, the semi-linear set
$\midd_{q q'}^{p p'}$ is \pspacesolvable.
\end{proposition}
\begin{restatable}{proposition}{PropR}%
\label{prop:Reffsolv}
The set $\RR$ is \pspacesolvable.
\end{restatable}
%
%
\noindent
Before proving Propositions~\ref{prop:effsolv}--\ref{prop:Reffsolv}, we notice that
Proposition~\ref{prop:Reffsolv}
allows us to implement in polynomial space the enumeration of the semi-linear set $\RR$ which underlies
the decision procedure presented in Section~\ref{sec:decid}.
This yields the upper bound of Theorem~\ref{thm:pspace-comp}.

\myparagraph{Bound on the size of separator}

A further consequence of Proposition~\ref{prop:Reffsolv}
is a bound on the minimal value of $n$ in Corollary~\ref{cor:appr}(b), 
which naturally leads to a complexity upper bound.
Indeed, due to Proposition~\ref{prop:Reffsolv} such a bound can be extracted from the proofs of Lemmas~\ref{lem:unionwitness}--\ref{lem:last} but,
importantly, it is only \emph{doubly exponential}
(hence, exhaustive checking if $L(\A_n) \cap L(\B) \neq \emptyset$ for all $n$ so bounded would only yield
an \expspace algorithm).
First, the proof of Lemma~\ref{lem:last} reveals that if the set $b_3 + {P_3}^*$ does not contain multiplicities of all
$n>0$, and $P_3$ is nonempty, then it does not contain multiplicities of some period $n\in P_3$, hence $n$ is bounded exponentially
(one can obtain an even better bound for $n$, namely the greatest common divisor of all periods from $P_3$, which still is only exponentially bounded in general).
Thus by Lemma~\ref{lem:separately} we get an exponential bound on the smallest $n$ such that a linear set $L$ does not contain an $n$-witness.
Then we lift the bound to semi-linear sets (cf.~Lemma~\ref{lem:unionwitness}) but at the price of increasing it
to doubly exponential:
indeed, if every component linear set $L_i$ does not contain an $n_i$-witness for some $n_i>0$,
the semi-linear set $\RR = L_1\cup \ldots\cup L_k$ does not contain
an $n$-witness, for $n$ the least common multiplicity of all $n_i$ and hence doubly exponential in general.
We have thus bounded the smallest $n$ such that $n$-reachability does not hold in $\V$, and consequently
(cf.~the proof of Proposition~\ref{prop:three-parts}) $n$ in Corollary~\ref{cor:appr}(b). 
We do not know if this bound can be improved to single exponential (which would immediately make the exhaustive check a \pspace algorithm).
The question seems to be quite challenging, as the proofs of Propositions~\ref{prop:effsolv}--\ref{prop:Reffsolv}
rely on  a combination of several nontrivial results~\cite{Hofman16,DBLP:conf/lics/BlondinFGHM15,Kopczynski10}.

The rest of Section~\ref{sec:in-pspace} is devoted to the proofs of Propositions~\ref{prop:effsolv}--\ref{prop:Reffsolv}.

\subsection{Proof of Proposition~\ref{prop:effsolv}}%
\label{sec:proofeffsolv}

We concentrate on showing that the sets $\pref_{q}^{p}$ are \pspacesolvable.
(The sets $\suff_q^p$ can be dealt with in exactly the same way as $\pref_q^p$, but with $\V$ replaced by the
reverse of $\V$.)
In the sequel fix states $q, p$ of $\A$ and $\B$, respectively.
The set $\pref_q^p$ is nothing but the reachability set of a 2-\vass $\constrvass{\V}{\N,\N}$ in control state $(q, p)$,
from the initial configuration
$((q_0, p_0), 0, 0)$.
We build on a result of~\cite{DBLP:conf/lics/BlondinFGHM15} which describes the reachability set in terms of
sets reachable via a finite set of \emph{linear path schemes}, a notion that we are going to recall now.

Let $T$ be set of transitions of $\V$.
A linear path scheme is a regular expression over $T$ of the form:
\begin{align} \label{eq:lps}
E  = \alpha_0 \beta_1^* \alpha_1 \ldots \beta_k^* \alpha_k,
\end{align}
where $\alpha_i, \beta_i \in T^*$. The sequences $\beta_1, \ldots, \beta_k$ are called \emph{loops} of $E$.
By \emph{length} of $E$ we mean the sum of lengths of all $\alpha_i$ and $\beta_i$.
Let $\reach_E$ (the reachability set via $E$) contain all pairs $(n, m)\in\N^2$ such that
$((q_0, p_0), 0, 0) \trans{} ((q, p), n, m)$ in $\constrvass{\V}{\N, \N}$ via a sequence of transitions that belongs to $E$.

Here is Thm.~1 in~\cite{DBLP:conf/lics/BlondinFGHM15}, translated to our terminology:
\begin{lemmaC}[\cite{DBLP:conf/lics/BlondinFGHM15}]%
\label{lem:blondyn}
There are computable bounds $N_1$, $N_2$, where $N_1$ is exponential and $N_2$ is polynomial in the size of $\V$,
such that $\pref_q^p$ is the union of sets $\reach_E$, for linear path schemes $E$ of length at most $N_1$,
with at most $N_2$ loops.
\end{lemmaC}

Our decision procedure, instead of dealing explicitly with a linear path scheme $E$,
will rather use $4k+2$ pairs of integers characterizing $E$, as described above.
Let $a_i \in \Z^2$, for $i = 0, \ldots, k$, be the total effect of executing the sequence $\alpha_i$, i.e.,
$a_i$ is the sum of effects of consecutive transitions in $\alpha_i$,
and likewise $b_i$ for the sequence $\beta_i$, for $i = 1, \ldots, k$.
Moreover, let $c_i \in \N^2$, for $i = 0, \ldots, k$ be the (point-wise) minimal nonnegative values of counters that allow to execute the sequence $\alpha_i$ (in $\constrvass{\V}{\N, \N}$),
and likewise $d_i$ for the sequence $\beta_i$, for $i = 1, \ldots, k$.
The $4k+2$ pairs of numbers, namely $a_i, c_i$ (for $i = 0, \ldots, k$) and $b_i, d_i$ (for $i = 1, \ldots, k$),
we jointly call the \emph{profile} of the linear path scheme $E$.
\begin{lemma}%
\label{lem:checknumbers}
Given pairs $a_i \in \Z^2, c_i \in \N^2$ (for $i = 0, \ldots, k$) and  $b_i \in \Z^2, d_i \in \N^2$ (for $i=1,\ldots, k$),
it can be checked in \pspace if they form the profile of some linear path scheme.
\end{lemma}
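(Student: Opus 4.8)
The plan is to verify, in polynomial space, that the given tuple of vectors $a_i, c_i$ ($i = 0, \ldots, k$) and $b_i, d_i$ ($i = 1, \ldots, k$) arises as the profile of \emph{some} linear path scheme $E = \alpha_0 \beta_1^* \alpha_1 \cdots \beta_k^* \alpha_k$ in $\V$. By Lemma~\ref{lem:blondyn} we only need to consider schemes of length at most $N_1$ (exponential) with at most $N_2$ loops; so $k \leq N_2$ is polynomial, but the individual blocks $\alpha_i, \beta_i$ may have exponential length and hence cannot be written down explicitly in polynomial space. The key observation is that the profile of a block depends only on: (i) its total displacement vector in $\Z^2$ (this is $a_i$ or $b_i$), and (ii) the point-wise minimal nonnegative start vector from which the block is executable in $\constrvass{\V}{\N,\N}$ (this is $c_i$ or $d_i$). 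Both data are ``reachability-like'' properties of a path between two fixed control states, so each block can be checked independently.

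Concretely, I would proceed as follows. First, guess the sequence of $2k+1$ control states $(r_0, \ldots, r_{2k})$ of $\V$ visited at block boundaries, where $r_0 = (q_0, p_0)$, $r_{2k} = (q, p)$, $\alpha_i$ runs from $r_{2i}$ to $r_{2i+1}$, and $\beta_i$ is a loop at $r_{2i}$ (so it also starts and ends at $r_{2i}$, wait --- more precisely $\beta_i$ loops at the state $r_{2i}$ where it is iterated, which is $r_{2(i-1)+1} = r_{2i-1}$; one fixes the bookkeeping so that $\alpha_{i-1}$ ends and $\beta_i$ starts/ends at a common state, and $\alpha_i$ starts there too). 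This guess uses polynomial space since $k$ is polynomial. Then, for each block separately, verify using the known \pspace machinery for paths in a fixed-state graph with integer-weighted edges that (a) there is a path realizing the prescribed total effect ($a_i$ for $\alpha_i$-blocks, $b_i$ for $\beta_i$-blocks, with matching endpoints), and (b) the prescribed vector ($c_i$, resp.\ $d_i$) is exactly the point-wise minimal nonnegative start configuration from which such a path can be executed without the counters going negative. Point (b) is checked by verifying that the path is executable from $c_i$ (resp.\ $d_i$) \emph{and} that it is not executable from $c_i - e_j$ for either unit vector $e_j$ with a nonzero coordinate --- i.e., both coordinates of the claimed minimum are tight. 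Executability of a specified-effect, specified-endpoint path from a specified start vector in a 2-dimensional system is itself a coverability/reachability-flavoured question that is decidable in \pspace (one can, for instance, reuse the linear-path-scheme characterization of~\cite{DBLP:conf/lics/BlondinFGHM15} recursively at the block level, or appeal directly to \pspace-solvability of the relevant reachability predicates, since here one coordinate of the ``start'' is bounded and the relevant witnessing runs again decompose into polynomially many loops of exponential length whose profiles are polynomially describable).

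The main obstacle I anticipate is making precise the claim that the point-wise minimal executable start vector $c_i$ (resp.\ $d_i$) can be both \emph{certified} and its \emph{minimality} checked in \pspace, given that the block itself is exponentially long and not explicitly available. The clean way around this is to never reason about a concrete block but only about the semilinear set of pairs (effect, minimal-start) realizable by paths between two fixed control states --- this set is effectively semilinear with a polynomial-space representation by the same results invoked for $\pref_q^p$ and $\midd$ earlier (Remark~\ref{rem:decidability}, Propositions~\ref{prop:effsolv} and~\ref{prop:effsolv2}), so membership in it is testable in \pspace. One then checks: the guessed endpoints are consistent; each $(a_i, c_i)$ and each $(b_i, d_i)$ lies in the appropriate such set; and minimality of $c_i, d_i$ holds by also checking that no coordinate can be lowered (which is a finite set of additional non-membership checks). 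Since everything --- the state guess, the constantly-many per-block \pspace subroutines, composed over polynomially many blocks --- runs in polynomial space, the overall test is in \pspace, which is exactly the statement of Lemma~\ref{lem:checknumbers}.
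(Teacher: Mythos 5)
Your proposal matches the paper's proof: guess the intermediate control states at block boundaries, verify each prescribed effect via a 2-VASS reachability query from the claimed minimal start vector, and certify minimality by checking that the corresponding reachability fails from every point-wise smaller vector, all in \pspace by~\cite{DBLP:conf/lics/BlondinFGHM15}. The extra detour through semilinear representations is unnecessary, but the core argument is the same as the paper's.
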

\begin{proof}
Guess intermediate control states $(q_1, p_1)$, \ldots, $(q_{k}, p_{k})$ and
put 
$(q_{k+1}, p_{k+1}) = (q, p)$.
Check that the following reachability properties hold in $\constrvass{\V}{\N, \N}$, for $i = 0, \ldots, k$ and $i = 1, \ldots, k$, respecively:
\begin{align*}
& \big( (q_{i}, p_{i}), c_i \big) \trans{} \big((q_{i+1}, p_{i+1}), c_i + a_i \big) \\ 
& \big( (q_{i}, p_{i}), d_i \big) \trans{} \big((q_i, p_i), d_i + b_i \big), 
\end{align*}
and that the above properties fail to hold if any $c_i$ (resp.~$d_i$) is replaced by a point-wise smaller pair of numbers.
All the required checks are instances of the reachability problem for 2-\vass, hence doable in \pspace~\cite{DBLP:conf/lics/BlondinFGHM15}.
\end{proof}

Denote by $\reach_p$ the set of configurations reachable in $\constrvass{\V}{\N, \N}$ via some linear path scheme
with profile $p$.
Using Lemma~\ref{lem:checknumbers} we can enumerate
all profiles of linear path schemes~\eqref{eq:lps} of length at most $N_1$ with $k \leq N_2$ loops.
Note that each such profile can be represented (in binary) in polynomial space:
as $N_1$ is exponential and $N_2$ is polynomial, the profile is defined by polynomially many numbers
$a_i, b_i, c_i, d_i$, each of them at most exponential.
Thus by the virtue of Lemma~\ref{lem:blondyn} it is enough to show, for a fixed profile $p$,
that the set $\reach_p$ is \pspacesolvable.
Fix a profile $p$ from now on.

As a convenient tool we will use \emph{linear Diophantine equations}.
These are systems of equations of the form
\begin{align} \label{eq:lineq}
a_1 x_1 + \cdots + a_l x_l = a,
\end{align}
where $x_1, \ldots, x_l$ are variables, and $a, a_1, \ldots, a_l$ are integer coefficients.
For a system $\U$ of such equations, 
we denote by $\sol{\U} \subseteq \N^l$ the solution set of $\U$,
i.e., the set all of non-negative integer vectors
$(n_1, \ldots, n_l)$ such that the valuation $x_1 \mapsto n_1, \ldots, x_l \mapsto n_l$ satisfies all the equations in $\U$.

We say that a vector is \emph{bounded} by $m$ if it is smaller than $m$ on every coordinate.
From the results of~\cite{D91,P91} (see also Prop.~2 in~\cite{taming} for the convenient formulation)
we get a bound on the size of semi-linear representation of the set of solutions of $\U$
which is polynomial in the number of variables and values of coefficients of $\U$,
but exponential in the number of equations:
%
%
\begin{lemmaC}[\cite{D91,taming,P91}]%
\label{lem:hybrid}
Let $\U$ be a system of $d$ linear Diophantine equations with $n$ variables such that the absolute values of
coefficients are bounded by $M$. Then
$\sol{\U} = B + P^*$, with every base $b\in B$ and period $p\in P$ bounded by
$\mathcal{O}{(n \cdot M)}^d$.
\end{lemmaC}
Observe that, forcedly, $P \subseteq \sol{\U_0}$ where
$\U_0$ denotes a modification of the system of linear equations $\U$ with all
right-hand side constants $a$ (cf.~\eqref{eq:lineq}) replaced by 0.
We will use Lemma~\ref{lem:hybrid} once we state the last lemma we need
(the idea is based on~\cite{DBLP:conf/lics/BlondinFGHM15}):
\begin{restatable}{lemma}{LemProfile}\label{lem:profile}
The set $\reach_p$ is a projection of the union
\[\sol{\U^1} \cup \ldots \cup \sol{\U^l},\]
for systems of linear Diophantine equations $\U^1 \ldots \U^l$
that can be enumerated in polynomial space.
\end{restatable}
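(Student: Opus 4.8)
The plan is to describe a linear path scheme with a fixed profile $p = (a_i, c_i, b_i, d_i)$ by recording, for a candidate reachable configuration $(n,m)$, how many times each loop $\beta_i$ is iterated. Introduce loop-count variables $x_1, \dots, x_k$ ranging over $\N$, one per loop. The final counter pair is then the affine function $\sum_i a_i + \sum_i x_i b_i$ of the loop counts, which gives two linear Diophantine equations (one per coordinate) equating this expression to the target variables $(n,m)$; these target variables are the coordinates we will project onto. So the "reachable value" part of $\reach_p$ is captured by linear equations in $x_1,\dots,x_k,n,m$.

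The subtlety is the non-negativeness constraints imposed by $\constrvass{\V}{\N,\N}$: executing $E$ must never drive either counter below $0$. The profile data $c_i, d_i$ records the minimal nonnegative counter values needed to run $\alpha_i$ and $\beta_i$ in isolation. Running the prefix $\alpha_0 \beta_1^{x_1} \alpha_1 \cdots$ up to a given point, the counter value entering each block $\alpha_i$ or $\beta_i$ is a known affine function of $x_1,\dots,x_k$ (a partial sum of the $a_j$'s and $x_j b_j$'s), and feasibility at that block amounts to: that partial-sum value is at least $c_i$ (resp.\ $d_i$), coordinate-wise. Because $b_i$ may have negative coordinates, the minimal counter value reached \emph{inside} a loop iterated $x_i$ times is not simply $d_i$; one must also account for the cumulative drift over the iterations. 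I would handle this exactly as linear path scheme analyses usually do: the worst-case dip during $x_i$ iterations of $\beta_i$ is the entry value plus either $d_i$ (dip within the first iteration) or, if $b_i$ has a negative coordinate, the entry value plus $(x_i-1)$ copies of the drift plus the worst intra-iteration dip — still an affine condition, but whose exact form depends on the sign pattern of $b_i$'s coordinates. Since there are only finitely many sign patterns, I would branch over them (this is the source of the disjunction: the union $\sol{\U^1}\cup\dots\cup\sol{\U^l}$), and in each branch the feasibility constraints become a conjunction of linear inequalities in $x_1,\dots,x_k$. Inequalities are turned into equations by adding slack variables in the standard way. Each resulting system $\U^j$ has size polynomial in the profile size (hence polynomial in the input), and the enumeration over sign patterns, intermediate "which iteration realizes the dip" choices, etc., is a polynomial-space enumeration since $k \le N_2$ is polynomial and each branch is described by polynomially many bits.

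Concretely, the steps in order: (1) fix the profile $p$ and its intermediate control states (already guessed when verifying $p$ via Lemma~\ref{lem:checknumbers}); (2) introduce variables $n, m$ for the output and $x_1,\dots,x_k$ for loop counts; (3) write the two output equations $n = (\sum a_i + \sum x_i b_i)_1$, $m = (\sum a_i + \sum x_i b_i)_2$; (4) for each block in the scheme, write the entry-value affine expression and the feasibility constraint that it dominates $c_i$ or $d_i$ coordinate-wise, branching over the finitely many sign patterns of each $b_i$ and over which position along a loop realizes the minimum to keep these affine; (5) convert all inequalities to equations with slacks; (6) observe each $\U^j$ is enumerable in polynomial space and that $\reach_p$ is exactly the projection of $\bigcup_j \sol{\U^j}$ onto the $(n,m)$ coordinates. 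The main obstacle I expect is step (4): getting the within-loop minimum-dip condition to be a genuinely affine (not piecewise-quadratic) constraint requires care — one exploits that over $x_i$ iterations with per-iteration drift $b_i$, the running minimum on a coordinate with negative drift is achieved at the last iteration, so it equals entry value $+\,(x_i-1)(b_i)_{\text{coord}} + (\text{min dip within one iteration})$, which is linear in $x_i$; on a coordinate with nonnegative drift it is achieved in the first iteration, giving entry value $+\,(\text{min dip within one iteration})$. Making this bookkeeping precise while keeping the system size polynomial and the family enumerable in polynomial space is the technical heart of the argument; once it is set up, everything reduces to Lemma~\ref{lem:hybrid} applied to each $\U^j$.
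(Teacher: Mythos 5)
Your proposal is correct and follows essentially the same route as the paper's proof: loop-count variables $x_1,\dots,x_k$, affine expressions for the counter values entering each block, feasibility constraints against the profile data $c_i, d_i$, a finite disjunction to keep everything affine, and slack variables to turn inequalities into equations. The only cosmetic difference is how the minimum entry value over the $x_i$ iterations of $\beta_i$ is handled: you branch on the sign pattern of $b_i$ to pick out whether the first or last iteration is critical, whereas the paper simply requires \emph{both} the first and the last iteration to be executable (which, by linearity of the entry value in the iteration index, implies all intermediate ones are too) and instead branches on $x_i = 0$ versus $x_i > 0$ so as not to impose the $d_i$ constraint on a loop that is never entered --- a small case you should also account for in your setup.
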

\begin{proof}
Recall that $c_0 \in \N^2$ denotes the (point-wise) minimal nonnegative values of counters that allow to execute
$\alpha_0$ in $\constrvass{\V}{\N, \N}$.
We assume that $c_0 = (0,0)$, as otherwise the set $\reach_p$ is empty.
Introduce variables $x_i$, for  $i = 1 \ldots k$, to represent the number of times the loop $\beta_i$ has been executed.
The necessary and sufficient condition for executing the linear path scheme~\eqref{eq:lps}
can be described by a positive Boolean combination of linear inequalities (that can be further transformed into  linear
equations using auxiliary variables), which implies Lemma~\ref{lem:profile}.

Indeed, for every $i = 1,\ldots, k$, we distinguish two sub-cases, $x_i = 0$ or $x_i > 0$. In the former case
(the loop $\beta_i$ is \emph{not} executed) we put the two inequalities described succinctly as (note that $a_j, b_j \in \Z^2$, while $c_j$, $d_j \in \N^2$ for all $j$)
\begin{align*}
& a_0 + b_1 x_1 + a_1 + \cdots + b_{i-1} x_{i-1} + a_{i-1} \geq c_i
\end{align*}
to say that $\alpha_i$ can be executed.
In the latter case (the loop $\beta_i$ \emph{is} executed) we put the following four inequalities
\begin{align*}
& a_0 + b_1 x_1 + a_1 + \cdots + b_{i-1} x_{i-1} + a_{i-1}  \geq d_i  \\
& a_0 + b_1 x_1 + a_1 + \cdots + b_{i-1} x_{i-1} + a_{i-1} + b_i (x_i - 1)  \geq d_i
\end{align*}
to say that the first and the last iteration of the loop $\beta_i$ can be executed (which implies that each intermediate iteration of $\beta_i$ can be executed as well), plus the two inequalities
\begin{align*}
& a_0 + b_1 x_1 + a_1 + \cdots + b_{i-1} x_{i-1} + a_{i-1} + b_i x_i  \geq c_i
\end{align*}
to assure that $\alpha_i$ can be executed next.
Finally, in both cases the two variables $(y_1, y_2)$ representing the final configuration,
are related to other variables by the two equations:
\begin{align*}
& a_0 + b_1 x_1 + a_1 + \cdots + b_{k} x_{k} + a_{k}   =  (y_1, y_2).
\end{align*}
Thus for every sequence of  choices between  $x_i = 0$ and $x_i > 0$, for $i = 1, \ldots, k$, we obtain
a system of equations.
The set $\reach_p$ is the projection, onto $(y_1, y_2)$, of the union of solution sets of all these systems of equations.
\end{proof}

The last two lemmas immediately imply that $\reach_p$ is \pspacesolvable.
Indeed, by Lemma~\ref{lem:hybrid} applied to every of the systems $\U^i$, we have
$\sol{\U^i} = B_i+{P_i}^*$ for bases $B_i$ containing all vectors $b\in \sol{\U^i}$ bounded by $N$, and
periods $P_i$ containing all vectors $p\in \sol{{\U^i}_0}$ bounded by $N$, where $N$ is exponential and computable.
Relying on Lemma~\ref{lem:profile},
the algorithm enumerates all systems $\U^i$, then enumerates
all $b\in B_i$ satisfying the above constraints, and for each $b$ it enumerates all periods $p\in P_i$ satisfying the above constraints.
The proof of Proposition~\ref{prop:effsolv} is thus completed.

\subsection{Proof of Proposition~\ref{prop:effsolv2}}%
\label{sec:proofeffsolv2}
In the sequel we fix states $q, q'$ of $\A$ and $p, p'$ of $\B$, respectively.
Our aim is to prove that $\midd_{q q'}^{p p'}$ is \pspacesolvable, by encoding this set as Parikh image of an \ocn.

Recall that Parikh image $\pim{w}$ of a word $w \in \Sigma^*$, for a fixed ordering $a_1 < \cdots < a_k$
of $\Sigma$, is defined as the vector
$(n_1, \ldots, n_k)$ where $n_i$ is the number of occurrences of $a_i$ in $w$, for $i = 1, \ldots, k$. Parikh image lifts to languages: $\pim{L} = \setof{\pim{w}}{w\in L}$.

An \ocn we call \emph{1-\ocn} if all its transitions $(q, a, q', z)$ satisfy $z \in \set{-1, 0, 1}$.
We define a 1-\ocn $\C$ of exponential size, over a 5-letter alphabet
$\set{a_0, b_0, a_+, a_-, b_f}$, such that
$\midd_{q q'}^{p p'} $ is the image of a linear function of $\pim{L(\C)}$.
$\C$ starts with the zero counter value, and its execution splits into three phases.
In the first phase $\C$ reads arbitrarily many times $a_0$ without modifying the counter,
and arbitrary many times $b_0$, increasing the counter by $1$ at every $b_0$.
Thus the counter value of $\C$ at the end of the first
phase is equal to the number of $b_0$s.

In the last phase, $\C$ reads arbitrarily many times $b_f$,  decreasing the counter by $1$ at every $b_f$.
The accepting configuration of $\C$ requires the counter to be $0$.
Thus the counter value of $\C$ at the beginning of the last
phase must be equal to the number of $b_f$s.

In the intermediate phase $\C$ simulates the execution of $\constrvass{\V}{\Z, \N}$.
The counter value of $\C$ corresponds, during this phase, to the counter value of $\B$.
On the other hand, the counter value of $\A$ will only be reflected by the number of $a_+$ and $a_-$ read by $\C$.
States of $\C$ correspond to pairs of states of $\A$ and $\B$, respectively; there will be also exponentially many
auxiliary states.
The phase starts in state $(q, p)$, and ends in state $(q', p')$.
A transition $\big(  (q_1, p_1), (z_1, z_2), (q_2, p_2) \big)$ of $\V$ is simulated in $\C$ as follows:
First, if $z_1 \geq 0$ then $\C$ reads $z_1$ letters $a_+$; otherwise, $\C$ reads $-z_1$ letters $a_-$.
Second, if $z_2 \geq 0$ then $\C$ performs $z_2$ consecutive increments of the counter; otherwise $\C$ performs $-z_2$ decrements.
In both tasks, fresh auxiliary states are used.
We assume w.l.o.g.~that every transition of $\V$ satisfies $(z_1, z_2) \neq (0, 0)$; hence $\C$ has no $\varepsilon$-transitions.
This completes the description of the 1-\ocn $\C$.

Let $S = \pim{L(\C)} \subseteq \N^5$. Then
$\midd_{q q'}^{p p'} = f(S)$, for the linear function $f: \Z^5 \to \Z^4$ defined by
(intensionally, we re-use alphabet letters in the role of  variable names):
\[
(a_0, b_0, a_+, a_-, b_f) \mapsto (a_0, b_0, a_0 + a_+ - a_-, b_f).
\]
Therefore if $S$ is \pspacesolvable then $f(S)$ is also so; it thus remains to prove that $S$ is \pspacesolvable.

Our proof builds on results of~\cite{Hofman16,Kopczynski10}. In order to state it we need to introduce the concept of \emph{pump}
of an accepting run $\rho$ of $\C$ (called \emph{direction} in~\cite{Hofman16}). We treat accepting runs $\rho$ as sequences of transitions.
A pump of $\rho$ of the first kind is a sequence $\alpha$ of transitions such that $\rho$ factorizes into $\rho = \rho_1 \rho_2$, and
$\rho_1 \alpha \rho_2$ is again an accepting run. Note that in this case the effect of $\alpha$ on the counter is necessarily 0.
A pump of second kind is a pair $\alpha, \beta$ of sequences of transitions, where the effect of $\alpha$ is non-negative,
such that $\rho$ factorizes into
$\rho = \rho_1 \rho_2 \rho_3$, and $\rho_1 \alpha \rho_2 \beta \rho_3$ is again an accepting run. Note that
in this case the effect of $\beta$ is necessarily opposite to the effect of $\alpha$.

Parikh image of a sequence of transitions $\pim{\rho}$ is understood as a shorthand for Parikh image of the input word of $\rho$.
Furthermore, we use a shorthand notation for Parikh image of a pump $\pi$: let $\pim{\pi}$ mean either
$\pim{\alpha}$ or $\pim{\alpha \beta}$, in case of the first or second kind, respectively.
Similarly, the length of $\pi$ is either the length of $\alpha$, or the length of $\alpha \beta$.
Lemma~\ref{lem:hofman} is a direct consequence of Lemma 55 in~\cite{Hofman16arxiv}
(see also Lemma 15 in~\cite{Hofman16,Hofman16arxiv}, or Theorem 6 in~\cite{Kopczynski10}):
%
%
\begin{lemma}%
\label{lem:hofman}
There is a computable bound $N$, polynomial in $\size{\C}$, such that
$S$ is a union of linear sets of the form
\[
\pim{\rho} + \set{\pim{\pi_1}, \ldots, \pim{\pi_l}}^* \quad (l\leq 5),
\]
where $\rho$ is an accepting run of $\C$ of length at most $N$, and $\pi_1 \ldots \pi_l$ are
pumps of $\rho$ of length at most $N$.
\end{lemma}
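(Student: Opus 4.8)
The plan is to invoke the structure theorems of \cite{Hofman16,Kopczynski10} for one counter nets essentially verbatim, and then to check that the two quantitative refinements we need --- a polynomial bound $N$ on run length and pump length, and the bound $l \le 5$ on the number of pumps per linear set --- are exactly what those references provide for a \emph{1-\ocn} of the given size. First I would recall the general statement (\cite{Hofman16}, Lem.~15 together with the arxiv refinement \cite{Hofman16arxiv}, Lem.~58; see also \cite{Kopczynski10}, Thm.~6): the Parikh image of the language of an \ocn is a finite union of linear sets whose base is $\pim{\rho}$ for a ``short'' accepting run $\rho$ and whose periods are $\pim{\pi}$ for ``short'' pumps of $\rho$, where ``short'' means bounded by a value that is polynomial once the counter updates are restricted to $\set{-1,0,1}$. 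Since $\C$ is a 1-\ocn by construction, this polynomial bound applies with $N$ polynomial in the size of $\C$ (hence exponential in the sizes of $\A,\B$, which is harmless here: \pspace-enumerability only asks for the bases and periods to be representable in space polynomial in the input, and an exponential-size object is represented in polynomially many bits).

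The second step is the bound $l \le 5$ on the number of period vectors appearing in a single linear set. Here I would observe that the periods live in $\N^5$ (the alphabet of $\C$ has five letters), and the cited results already organise the periods of each linear component as a linearly independent --- or at least ``rank-bounded'' --- family; in dimension $5$ any such family has at most $5$ elements. Concretely, the decomposition of \cite{Hofman16} produces, for each linear piece, a set of pump-directions whose Parikh images span a subspace of $\Z^5$, and one keeps only a basis of that span; this gives $l\le 5$. If the reference instead phrases things with an a priori larger but still bounded number of pumps, I would note that a bounded union of linear sets over the same base can always be re-grouped so that each linear set uses at most $\dim = 5$ periods, which is all we use downstream (the \pspace enumeration in the proof of Proposition~\ref{prop:effsolv2} only needs each linear set's period set to be polynomially representable, and a constant bound on its cardinality makes this immediate).

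The main obstacle I expect is not mathematical but bibliographic/technical: making sure the pump notions in \cite{Hofman16} (``directions'') line up precisely with the ``pump of first/second kind'' defined in the paragraph preceding the lemma --- in particular that a first-kind pump has zero counter effect and a second-kind pump $(\alpha,\beta)$ has $\alpha$ with non-negative effect and $\beta$ with the opposite effect, and that inserting $\alpha$ (resp.\ $\alpha$ then $\beta$) at the designated cut point(s) of a short witnessing run again yields an accepting run. This is exactly the content of the cited lemmas for \ocn, so the step is to quote them with the dictionary ``direction $\leftrightarrow$ pump'' and ``short $\leftrightarrow$ length $\le N$'', and to remark that restricting to 1-\ocn is what turns the generic bound of \cite{Hofman16} into a polynomial one. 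No further computation is needed.
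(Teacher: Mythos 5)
Your proposal matches the paper's treatment: the paper gives no independent argument for this lemma and simply derives it from \cite{Hofman16}, Lem.~15, together with \cite{Hofman16arxiv}, Lem.~58 (and \cite{Kopczynski10}, Thm.~6), exactly as you do, with the same dictionary between ``directions'' and the two kinds of pumps, the same observation that restricting to a 1-\ocn yields a bound $N$ polynomial in the size of $\C$, and the same dimension argument (periods in $\N^5$) behind $l\le 5$. Your additional remarks on why the exponential size of $\C$ relative to $\A,\B$ is harmless for \pspacesolvable enumeration are consistent with how the paper uses the lemma downstream.
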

\noindent
Except for different (but equivalent) notation, Lemma 55 in~\cite{Hofman16arxiv} differs from Lemma~\ref{lem:hofman}
only by claiming that a run $\rho$ determines uniquely pumps $\pi_1, \ldots, \pi_l$ (which is not needed here).

\newcommand{\logarithmic}{$\mathcal{O}(\log \size{C})$\xspace}
We need one more fact, in which we refer to the computable bound $N$ of Lemma~\ref{lem:hofman}
(note that \logarithmic is polynomial in the sizes of $\A$ and $\B$, as $\C$ is blown up exponentially
compared to $\A$ and $\B$):  
\begin{restatable}{lemma}{Lemsz}%
\label{lem:6}
For given $b\in \N^5$ and $P = \set{p_1, \ldots, p_l} \subseteq \N^5$, $l \leq 5$,
it is decidable in space \logarithmic
if there is an accepting run $\rho$ of $\C$ of length at most $N$
and pumps $\pi_1, \ldots, \pi_l$ of $\rho$ of length at most $N$,
such that $b = \pim{\rho}$ and $p_i = \pim{\pi_i}$ for $i = 1, \ldots, l$.
\end{restatable}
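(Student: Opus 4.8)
The plan is to verify Lemma~\ref{lem:6} by an explicit polynomial-space search for the triple $(\rho, \pi_1, \ldots, \pi_l)$. Since Lemma~\ref{lem:hofman} guarantees length bounds polynomial in the size of $\C$, and $\C$ itself is of exponential size in the original input, the run $\rho$ and each pump $\pi_i$ can be represented in polynomial space relative to the input $\A, \B$ (the bound $N$ is polynomial in $\size\C$, hence at most exponential in $\size\A + \size\B$, so $N$ occupies polynomially many bits and a length-$N$ sequence of transitions of the exponential-size $\C$ occupies polynomial space). The core idea is therefore: guess (nondeterministically, then rely on Savitch/$\textsc{NPSpace}=\pspace$) the run $\rho$ together with the factorizations it undergoes for each pump, guess the pumps $\pi_1, \ldots, \pi_l$ and their insertion points, and verify all the constraints along the way.

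Concretely, first I would fix the structure: $\rho$ is an accepting run of $\C$, hence of the form first-phase / intermediate-phase / last-phase as in the construction; I would guess its transitions one at a time, maintaining the current configuration $(\text{state}, \text{counter})$ of $\C$ and updating the running Parikh vector, checking at the end that the state is accepting and the counter is $0$, and that $\pim\rho = b$. For each pump $\pi_i$ of the first kind, I would guess a position in $\rho$ (an index $\le N$) at which $\rho$ is split as $\rho_1 \rho_2$, then guess the sequence $\alpha$ of length $\le N$; correctness of $\rho_1 \alpha \rho_2$ as an accepting run is checked by re-simulating $\C$ along $\rho_1$, then along $\alpha$ (verifying non-negativeness of the counter of $\C$ at every step), then along $\rho_2$, again checking non-negativeness and the accepting condition; and I check $\pim\alpha = p_i$. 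For a second-kind pump $(\alpha, \beta)$ I would additionally guess a second split point so that $\rho = \rho_1 \rho_2 \rho_3$, guess $\alpha$ and $\beta$ of length $\le N$ with $\alpha$ having non-negative counter effect, and verify that $\rho_1 \alpha \rho_2 \beta \rho_3$ is an accepting run of $\C$ by one full re-simulation, checking $\pim{\alpha\beta} = p_i$. Since $l \le 5$, there are only constantly many pumps, and each verification is a single pass maintaining a constant amount of data plus a Parikh vector of constant dimension with polynomially-bounded entries, so everything fits in polynomial space.

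The one point that needs care is that naively one would want to store $\rho$ and all the pumps simultaneously and then run a single global check, but storing a length-$N$ run of the exponential-size automaton $\C$ already takes polynomial \emph{space} (each transition index is polynomially many bits, and $N$ is polynomially many bits, so $N \cdot \mathrm{poly}$ is polynomial) — so this is fine; the subtlety is merely that the \emph{positions} of the various split points and the interleaving must be guessed consistently, and that when we insert several pumps into $\rho$ we only ever insert one at a time into the \emph{original} $\rho$ (the definitions in the excerpt define each $\pi_i$ as a pump of $\rho$, not of $\rho$ with previous pumps already inserted), so the checks are independent and can be performed one pump at a time, reusing space. The main obstacle, then, is simply bookkeeping: being precise about how $\C$'s phase structure and the auxiliary states constrain where pumps can sit, and confirming that re-simulating $\C$ along a guessed transition sequence — which requires, at each step, knowing the current transition and the current counter value of $\C$, both polynomial-space objects — can indeed be carried out deterministically in polynomial space. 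I would then invoke $\textsc{NPSpace} = \pspace$ to remove the nondeterminism, completing the proof of Lemma~\ref{lem:6}.
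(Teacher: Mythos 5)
Your space analysis contains a genuine error that breaks the whole approach. The bound $N$ from Lemma~\ref{lem:hofman} is polynomial in the size of $\C$, and $\C$ is of \emph{exponential} size in the input $\A,\B$; hence $N$ is an exponentially large \emph{number}. It is true that the number $N$ can be written down in polynomially many bits, and therefore that a counter counting up to $N$ fits in polynomial space --- this is what the paper uses. But a \emph{sequence} of $N$ transitions, each taking polynomially many bits, occupies $N\cdot\mathrm{poly}$ bits, which is exponential, not polynomial: you have conflated the bit-length of $N$ with the length of a list of $N$ items. Consequently you cannot ``guess $\rho$, store it, and re-simulate $\C$ along $\rho_1$, then $\alpha$, then $\rho_2$'' for each pump in turn: the run cannot be stored, and re-guessing it afresh for each of the $l$ independent pump checks is unsound, since a nondeterministic machine could then use a different $\rho$ for each check even though the lemma asks for a single $\rho$ admitting all $l$ pumps simultaneously.

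The paper's proof is designed precisely to avoid storing the run. It makes a \emph{single} streaming pass: it guesses the transitions of $\rho$ one at a time (counting up to $N$ in polynomial space), maintaining only the current configuration of a main thread plus up to five pump threads and their accumulated Parikh vectors (constant dimension, entries bounded by $2N$, hence polynomially many bits each). At a guessed insertion point it suspends everything, runs the new pump thread through $\alpha_i$ from the current configuration, checks the state/counter conditions, and then resumes all threads in lockstep, forcing every active thread to consume the \emph{same} subsequently guessed transitions; the $\beta_i$ parts are handled symmetrically, with a thread cancelled once it rejoins the main configuration and its Parikh vector matches $p_i$. This keeps the working memory to at most six configurations and six Parikh vectors at any time, and validity of each pumped run $\rho_1\alpha_i\rho_2\beta_i\rho_3$ is verified on the fly rather than by replay. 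Your proposal would need to be restructured along these lines; the appeal to $\textsc{NPSpace}=\pspace$ at the end is fine, but everything before it rests on the false premise that $\rho$ fits in polynomial space.
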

\begin{proof}
As the first step, the algorithm guesses, for each $i = 1, \ldots, l$, whether pump $\pi_i$ would be
of first or second kind.
For simplicity of presentation we assume below that all pumps are guessed to be of second kind, i.e.,
they are pairs $\alpha_i, \beta_i$
-- pumps of first kind are treated in a simpler but similar way.
Expanding the definitions, the algorithm is to check if there is an accepting run $\rho$ and pumps
$\alpha_i, \beta_i$ ($i = 1, \ldots, l$), all of length at most $N$ and of desired Parikh image,
so that for every $i$ the run factors into
\begin{align} \label{eq:factor5}
\rho \ = \ \rho^1_i \, \rho^2_i \, \rho^3_i
\end{align}
and the following sequence is again an accepting run:
\[
\rho_i \ = \ \rho^1_i \, \alpha_i \, \rho^2_i \, \beta_i \, \rho^3_i.
\]
To this aim the algorithm simulates in parallel, using space \logarithmic, at most $l+1 \leq 6$
different runs of $\C$, one of them corresponding to $\rho$ and the remaining $l$ ones
corresponding to $\rho_1, \ldots, \rho_l$.

Since $N$ is polynomial in $\size{\C}$, counting up to $N$ is possible in space \logarithmic.
The algorithm starts by simulating nondeterministically a run $\rho$ of $\C$ of length at most $N$.
We call this simulation \emph{main thread}
(in addition, there will be also at most $l$ auxiliary \emph{pump threads}, as introduced below).
The algorithm thus maintains the current configuration $c$ of $\C$ in the main thread,
and chooses nondeterministically consecutive transitions of $\C$ to execute.
In addition, the algorithm maintains Parikh image $a\in\N^l$
of the run executed so far, updated after every step of simulation and stored in space \logarithmic.
In the course of simulation the algorithm guesses nondeterministically $l$ positions
corresponding to the ends of prefixes $\rho^1_i$ in~\eqref{eq:factor5}, for $i = 1, \ldots, l$.
At each so guessed position all the threads are suspended, and a new \emph{pump} thread is added.
The new thread is responsible for simulating, from the current configuration $c$ of $\C$,
some sequence of transitions $\alpha_i$ of length at most $N$.
The simulation of $\alpha_i$ finishes nondeterministically, say in configuration $c_i$,
if the counter value of $c_i$ is greater or equal to the counter value of $c$ (the effect of $\alpha_i$ is non-negative)
and the control state of $c_i$ is the same as the control state of $c$.
Then Parikh image $a_i = \pim{\alpha_i}$ is stored (in space \logarithmic), and the simulation of
all threads (the suspended ones and the new one) are continued, with the proviso that all threads use \emph{the same}
nondeterministically chosen sequence of transitions.
Up to now, the algorithm maintains up to $l+1$ configurations
$c, c_1, \ldots, c_l$ of $\C$, and stores up to $l+1$ vectors $a, a_1 \ldots a_l \in \N^l$.

Later in the course of simulation the algorithm also guesses nondeterministically $l$ positions in $\rho$,
corresponding to the the beginnings of suffixes $\rho^3_i$ in~\eqref{eq:factor5}, for $i = 1, \ldots, l$.
Similarly as above, at each so guessed position all threads are suspended
except for the one corresponding to pump $i$, and that pump thread simulates some sequence of transitions $\beta_i$
of length at most $N$. This simulation terminates only if its current configuration becomes equal to the current
configuration of the main thread, i.e., $c_i = c$, and moreover Parikh image of $\beta_i$, say $b_i$, satisfies
$a_i + b_i = p_i$; and once this happens, the pump thread corresponding to pump $i$ is cancelled.
Note that the pump threads are not necessarily synchronized, and it might happen that
one pump thread becomes cancelled even before some another pump thread even starts.
This lack of synchronization is clearly not an issue, as the total number of simultaneously executed
pump threads stays below $l$.
The whole simulation finishes when all pump threads are cancelled, the current configuration $c$ is
the final configuration of $\C$, and Parikh image of the run executed in the main thread equals $b$.
\end{proof}

The last two lemmas imply that $S$ is \pspacesolvable. Indeed, it is enough to enumerate all
candidates $b, P$ bounded by $N$, as specified in Lemma~\ref{lem:hofman},
and validate them, using Lemma~\ref{lem:6}.
This completes the proof of Proposition~\ref{prop:effsolv2}.

%


\subsection{Proof of Proposition~\ref{prop:Reffsolv}}
Recall the definition~\eqref{eq:defR} of the set $\RR$ as a projection of a conjunction of constraints:
\begin{align*} \begin{aligned}
\defR
\end{aligned} \end{align*}
We are going to prove that $\RR$ is \pspacesolvable.
The algorithm enumerates quadruples of states $q, q', p, p'$. For each fixed quadruple,
it enumerates (by Propositions~\ref{prop:effsolv} and~\ref{prop:effsolv2})
component linear sets of $\pref_q^p$, $\midd_{q q'}^{p p'}$ and $\suff_q^p$.
Thus it is enough to consider three fixed  \pspacesolvable linear sets
\begin{align*}
L_\pref & =  b_\pref + {P_\pref}^*  \subseteq \pref_q^p \subseteq \N^2 \\
L_\midd & =  b_\midd + {P_\midd}^*  \subseteq \midd_{q q'}^{p p'} \subseteq \N^2\times\Z\times\N \\
L_\suff & =  b_\suff + {P_\suff}^*  \subseteq \suff_q^p \subseteq \N^2 .
\end{align*}

We now treat each of these linear sets as a system of linear Diophantine equations.
For instance, if $P_\pref = \set{p_1, \ldots, p_k}$, all pairs $(x, y) \in L_\pref$ are described by two equations
\begin{align} \label{eq:coeff}
(x_\pref, y_\pref) = b_\pref + p_1 x_1 + \cdots + p_k x_k,
\end{align}
for fresh variables $x_1, \ldots, x_k$. Note that the number of variables is exponential, but the number of equations
is constant, namely equal two.
The same can be done with two other linear sets $L_\midd$ and $L_\suff$,
yielding 6 more equations involving 6 variables
$x_\midd, y_\midd, x'_\midd, y'_\midd, x_\suff, y_\suff$, plus exponentially many other fresh variables.
Points in $L_\suff$ are represented as $(x_\suff, y_\suff)$, while points in $L_\midd$ as $(x_\midd, y_\midd, x'_\midd, y'_\midd)$.
(Since we only consider non-negative solutions of systems of equations, in case of
$L_\midd$ two separate cases should be considered, depending on whether the final value $x'_\midd$
is non-negative or not. For simplicity we only consider here the case when it is non-negative.)
In addition, we add the following equations (and one variable $x$):
\begin{align*}
x_\pref & = x_\midd \\
y_\pref & = y_\midd \\
y'_\midd & = y_\suff \\
x & = x'_\midd - x_\suff.
\end{align*}
In total, we have a system $\U$ of 12 equations (8 mentioned before and 4 presented above) involving exponentially many variables,
including these 9 variables
\begin{align} \label{eq:vars}
x_\pref, y_\pref, x_\midd, y_\midd, x'_\midd, y'_\midd, x_\suff, y_\suff, x.
\end{align}
The value of 3 of them is relevant for us, namely $x_\pref, x_\suff$ and $x$.
We claim that the projection of the solution set $\sol{\U}$ onto these 3 \emph{relevant} coordinates is \pspacesolvable.
To prove this we apply once more Lemma~\ref{lem:hybrid},
%
%
to deduce that
the solution set of $\U$ is of the form $B + P^*$, for all bases $b \in B$ and all periods $p\in P$ bounded
exponentially.
Hence each number appearing in a base or a period is representable in polynomial space; the same applies to the projection
onto the 3 relevant coordinates.
Finally, observe that the projections of the sets $B$ and $P$ onto the 3 relevant coordinates
can be enumerated in polynomial space.
Indeed, in order to check whether a vector in $\N^3$ is (the projection of) a $B$ or $P$, we proceed similarly as in the proof of Proposition~\ref{prop:effsolv}:
the algorithm first guesses values of other 6 variables among~\eqref{eq:vars},
and then guesses the values of other (exponentially many) variables on-line,
in the course of enumerating the coefficients $p_i$ (cf.~\eqref{eq:coeff}); the latter is possible as the
sets $L_\pref, L_\midd$ and $L_\suff$ are \pspacesolvable.



\section{\pspace lower bound}%
\label{sec:pspace-hard}

Recall that a language is \emph{\good}  if it is a finite Boolean combination of languages of the form $w\Sigma^*$, for $w \in \Sigma^*$.
In this section we prove the following result which, in particular, implies the lower bound of Theorem~\ref{thm:pspace-comp}:

\begin{theorem}\label{thm:pspacehard}
For every class $\F$ containing all \good languages,
the $\F${\sepsep}separability problem for languages of \ocn  is \pspace-hard.
\end{theorem}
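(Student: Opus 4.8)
The plan is to reduce from a known \pspace-hard problem about \oca, exploiting the fact that an \ocn is an \oca \emph{without} zero tests but that we have two languages at our disposal, so we can distribute the work between them. The natural starting point is the \pspace-hardness of \emph{bounded non-emptiness} of \oca (cite~\cite{FJ15}): given an \oca $\A$ together with a bound $B$ (in binary), it is \pspace-hard to decide whether $\A$ has an accepting run in which the counter never exceeds $B$. The key observation is that a \emph{bounded} run of an \oca — one whose counter stays in $\{0,\dots,B\}$ — never actually needs genuine zero-testing power beyond what a finite control plus an \ocn-style counter can check, \emph{provided} we can verify separately that the claimed zero tests were legitimate. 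Concretely, I would build from $\A$ and $B$ two \ocn $\A_1,\A_2$ over a common alphabet such that $\A$ has a $B$-bounded accepting run if and only if $L(\A_1)$ and $L(\A_2)$ are \emph{not} regular separable.

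The encoding I have in mind is the following. Let the \ocn $\A_1$ simulate $\A$, treating each zero test $(q,a,q')\in T_{=0}$ as an ordinary $\varepsilon$-labelled or $a$-labelled move with zero counter change, and additionally emitting, on a fresh side-alphabet, a record of the current counter value at each step — but since an \ocn cannot read out its counter, instead $\A_1$ outputs a letter marking "I claim a zero test here" and carries on; the actual counter of $\A_1$ is used to track $\A$'s counter honestly, and $\A_1$ also refuses to let the counter exceed $B$ (which is easy: encode the counter in the finite control up to $B$? no — $B$ is binary, so instead keep the counter but add a parallel gadget that, whenever the counter would exceed $B$, gets stuck). The point of the bound $B$ is precisely that a \pspace reduction cannot afford $B$ states, so the counter must stay a counter; the cap "counter $\le B$" is enforced by having $\A_1$ nondeterministically guess that it is about to exceed $B$ and, if so, do $B$ decrements (reading $B$ marker letters) and check it reaches a value $>0$ — i.e. the words produced encode a proof that the counter stayed bounded, and $\A_2$ is designed to reject exactly the ill-formed such proofs. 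The second automaton $\A_2$ then recognizes the "complement-like" language: all words that either are not syntactically well-formed simulation transcripts, or that contain a claimed zero test at a moment where the counter was in fact nonzero. Because $\A_2$ only has to \emph{catch a single violation}, it can use its own counter to nondeterministically track the counter up to one chosen claimed zero test, verify the value there is positive, and accept. If $\A$ has a $B$-bounded accepting run, the transcript word $w$ of that run lies in $L(\A_1)$ and, being a correct transcript, lies outside $L(\A_2)$; moreover the family of all such correct transcripts is \emph{not} regular (this is where the unbounded-counter flavour survives), and one shows by a pumping argument — exactly in the spirit of Corollary~\ref{cor:appr} and Example~\ref{ex:ocnsep} — that no regular language can contain $L(\A_1)$ while avoiding $L(\A_2)$. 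Conversely, if $\A$ has no $B$-bounded accepting run, then $L(\A_1)$ and $L(\A_2)$ are disjoint and, I claim, their separator can even be taken \good, because the "defect" distinguishing a genuine transcript from a spurious one is then always visible as a bounded-length prefix phenomenon; so in that case they are trivially regular separable, and in fact $\F$-separable for any $\F$ containing the \good languages.

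The main obstacle, and the part demanding the most care, is the converse direction: when $\A$ has no $B$-bounded accepting run, we must produce an actual separator \emph{inside every class $\F$ that merely contains the \good languages}, not just inside the regular languages. This forces the transcript encoding to be arranged so that $L(\A_1)$ and $L(\A_2)$ are separated by a finite Boolean combination of languages of the form $w\Sigma^*$ — i.e. the acceptance/rejection can be decided by inspecting a bounded-length prefix of the word. Achieving this together with the non-regularity needed in the positive direction is the delicate balancing act: I would arrange that on \emph{correct} transcripts the words carry enough redundancy (e.g. the counter value is written in unary at the very start, bounded by $B$, hence only boundedly many possibilities) that a \good language distinguishes the two languages whenever they are disjoint, while on the "this is really an \oca run" part the interaction with the counter is genuinely unbounded, precluding regular separability when a witness exists. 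After the construction, the positive-instance direction is the pumping/Corollary~\ref{cor:appr} argument; the negative-instance direction is a direct exhibition of a \good separator by case analysis on the (boundedly many) possible initial segments; and \pspace-hardness then follows from~\cite{FJ15} since the whole construction is polynomial in $\size{\A}$ and the bit-length of $B$. Finally, since regular languages form one such class $\F$, the lower bound of Theorem~\ref{thm:pspace-comp} drops out as a special case.
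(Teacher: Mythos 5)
There is a genuine gap in your positive direction, and it is fatal to the stated theorem (though not to the special case $\F=$ regular). The theorem quantifies over \emph{every} class $\F$ containing the \good languages; in particular $\F$ may be the class of all languages, for which $\F$-separability is simply disjointness. Hence any correct reduction must, on yes-instances, produce two \ocn whose languages \emph{intersect} --- that is the only way to defeat every conceivable separator. You instead arrange $L(\A_1)$ and $L(\A_2)$ to be disjoint on yes-instances and argue non-separability by a pumping argument against \emph{regular} separators. That can at best prove hardness for $\F$ the regular languages, and the pumping argument itself is left entirely to faith. The paper's construction avoids this: the two \ocn $\B,\B'$ jointly simulate a $b$-bounded run of the \oca over the alphabet $T\cup T_{=0}$, with $\B$ tracking the counter value $v$ and $\B'$ tracking $b-v$ (zero tests of $\A$ are validated in $\B'$ by a $-b$ then $+b$ detour), so that a correct $b$-bounded accepting run yields a single word in $L(\B)\cap L(\B')$, killing all separators at once.

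Your negative direction also rests on an unresolved point that the paper handles by strengthening the source problem. To exhibit a \good separator you need membership in the separator to be determined by a prefix of \emph{bounded} length, which requires an a priori bound on the length of relevant runs. The paper gets this by observing that bounded non-emptiness from~\cite{FJ15} remains \pspace-hard for \emph{acyclic} \oca (Proposition~\ref{prop:acyclic}), whence every $b$-bounded run has length at most $n=b|Q|$; the separator is then $\big(L(\B)\cap\Sigma^{<n}\big)\cup\bigcup_{w\in L,\,|w|=n}w\Sigma^*$ with $L$ the prefix closure of $L(\B)$, using that the prefix closures of the two languages share no word of length $\ge n$. Your proposed fix --- writing the initial counter value in unary and hoping the ``defect is a bounded-length prefix phenomenon'' --- does not supply such a length bound, since without acyclicity a $b$-bounded run (and hence the point where two transcripts first diverge) can be arbitrarily long. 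Adopting the intersection trick for yes-instances and the acyclicity refinement for no-instances would repair both halves.
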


A convenient \pspace-hard problem, to be reduced to $\F$ separability of \ocn, can be extracted from~\cite{FJ15}.
Given an \oca $\A$ and $b\in \N$ presented in binary, the \emph{bounded} non-emptiness problem asks whether $\A$ accepts some word
by a \emph{$b$-bounded} run; a run is $b$-bounded if counter values along the run are at most $b$.

\begin{theoremC}[\cite{FJ15}]
The bounded non-emptiness problem is \pspace-complete, for $\A$ and $b$ represented in binary.
\end{theoremC}

A detailed analysis of the proof reveals that the problem remains \pspace-hard even if the input \oca
$\A = (Q, \alpha_0, \alpha_f, T, T_{=0})$ is assumed to be \emph{acyclic},
in the sense that there is no reachable configuration $\alpha$ 
with a non-empty
run
$\alpha \trans{} \alpha$. Observe that an acyclic \oca has no $b$-bounded run longer than $b |Q|$, a property which
will be crucial for the correctness of our reduction.

\begin{proposition}%
\label{prop:acyclic}
The bounded non-emptiness problem is \pspace-complete, for acyclic $\A$ and $b$ represented in binary.
\end{proposition}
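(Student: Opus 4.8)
The plan is to carry out the reduction in two stages: first establish \pspace-hardness of bounded non-emptiness for the restricted class of \emph{acyclic} \oca, and then (in the subsequent section) use this restricted problem as the source of a reduction to $\F${\sepsep}separability of \ocn. For the present proposition only the first stage is relevant, so I would argue as follows.

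First I would recall the construction underlying the \pspace-hardness proof of the bounded non-emptiness problem from~\cite{FJ15}: it encodes acceptance of a fixed polynomial-space Turing machine on an input word, using the counter (together with a linearly bounded number of control states and a binary bound $b$) to simulate the tape contents. The key observation is that this simulation is inherently \emph{time-bounded} -- a polynomial-space machine that halts does so within an exponential number of steps -- so the encoding can be made to count the number of simulated transitions and abort once a fixed exponential threshold is exceeded. I would argue that after this modification the resulting \oca is acyclic in the required sense: no reachable configuration $\alpha$ admits a nonempty run $\alpha \trans{} \alpha$. Concretely, one augments the state space with a binary counter (realised as extra control states, multiplying the state count by only a polynomial factor) that is strictly incremented at every step and forces a dead end once it saturates; this makes the reachable part of the configuration graph a DAG. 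Alternatively, if the original \oca from~\cite{FJ15} already has the property that along any $b$-bounded run the pair (state, step-count modulo something) is strictly monotone, one simply observes that directly. Either way, the transformation is polynomial-time computable, and it preserves the answer to the bounded non-emptiness question: a $b$-bounded accepting run of the original automaton, being $b$-bounded and hence (after the construction) of length at most the exponential threshold, survives in the modified acyclic automaton, and conversely every accepting run of the modified automaton projects to an accepting $b$-bounded run of the original one.

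Next I would record the quantitative consequence that makes acyclicity useful downstream: an acyclic \oca $\A=(Q,\alpha_0,\alpha_f,T,T_{=0})$ has no $b$-bounded run longer than $b\cdot\absval{Q}$, because a configuration $(q,n)$ with $0\le n\le b$ can occur at most once along a run (a repeat would give a nonempty run $\alpha\trans{}\alpha$), and there are at most $(b{+}1)\absval{Q}$ such configurations. Finally, membership in \pspace is inherited verbatim from the unrestricted case -- acyclic \oca are a subclass of all \oca -- so the problem is \pspace-complete for acyclic $\A$ with $b$ in binary.

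The main obstacle I anticipate is the first step: verifying that the hardness construction of~\cite{FJ15} can indeed be made acyclic without blowing up the encoding beyond polynomial size, and in particular ensuring that the added step-counter interacts correctly with the existing use of the counter and zero tests. The phrase ``a detailed analysis of the proof reveals'' in the excerpt signals exactly this -- it is a claim about the internals of a cited construction rather than a self-contained argument -- so the honest write-up either reproduces enough of that construction to see acyclicity transparently, or gives the generic step-counter wrapper above and checks that it is compatible. I would opt for the generic wrapper, since it is robust and needs only that the simulated computation is time-bounded, which is automatic for space-bounded machines.
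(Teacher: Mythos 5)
Your quantitative observation (an acyclic \oca has no $b$-bounded run longer than $b\,|Q|$, since a configuration $(q,n)$ with $0\le n\le b$ cannot repeat) and the remark that the \pspace upper bound is inherited from the unrestricted case both match the paper. But the paper offers essentially no argument for the hardness part of this proposition: it simply asserts that ``a detailed analysis of the proof'' of~\cite{FJ15} shows the construction can be taken acyclic. You try to replace that appeal to the internals of the cited construction by a generic step-counter wrapper, and that is where your proof breaks.

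The wrapper cannot be polynomial. To make the configuration graph acyclic, your auxiliary step counter must take pairwise distinct values along every run, i.e.\ it must count up to the length of the longest accepting $b$-bounded run of the instances produced by the hardness reduction. Those runs are, in general, exponentially long: if every yes-instance had an accepting run of polynomial length, the run itself would be an NP certificate and bounded non-emptiness would lie in NP, contradicting its \pspace-hardness (unless NP $=$ \pspace); concretely, the \cite{FJ15} construction simulates exponentially many machine steps. A counter ranging over $2^{p(n)}$ values and stored in the control state requires $2^{p(n)}$ control states --- ``a binary counter realised as extra control states'' with $p(n)$ bits is exactly that --- so the state count is multiplied by an exponential, not a polynomial, factor, and the reduction is no longer polynomial-time. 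The only ways to repair this are (i) to fold the step count into the \oca counter itself (e.g.\ reserving high-order digits for it), which perturbs the bound $b$ and the zero tests and therefore demands precisely the gadget-level inspection of~\cite{FJ15} you were trying to avoid, or (ii) to do what the paper does and argue directly that the existing construction already never revisits a configuration. Your own hedge (``the main obstacle I anticipate is the first step'') correctly identifies the weak point, but the option you then choose does not close it.
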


We are now ready to prove Theorem~\ref{thm:pspacehard}, by reduction from bounded non-emptiness of acyclic \oca.
Given an acyclic \oca $\A= (Q, (q_0, 0), (q_f, 0), T, T_{=0})$ and $b\in\N$, we construct in polynomial time two \ocn
$\B$ and $\B'$, with the following properties:

\begin{enumerate}
\item[(a)] if $\A$ has a $b$-bounded accepting run then $L(\B) \cap L(\B')\neq \emptyset$
(and thus $L(\B)$ and $L(\B')$ are not $\F$ separable); 
\item[(b)] if $\A$ has no $b$-bounded accepting run then $L(\B)$ and $L(\B')$ are $\F$ separable. 
\end{enumerate}

The two \ocn $\B$ and $\B'$ will jointly simulate a $b$-bounded run of $\A$, obeying an invariant that
the counter value $v$ of $\B$ is the same as the counter value of $\A$, while the counter value of $\B'$ is $b-v$.
The actual input alphabet of $\A$ is irrelevant; as the input alphabet of $\B$ and $\B'$ we take $\Sigma = T \cup T_{=0}$.
The \ocn $\B$ behaves essentially as $\A$, except that it always allows for a zero test.
Formally, $\B= (Q, (q_0, 0), (q_f, 0), U)$, where the transitions $U$ are defined as follows.
For every transition $t = (q, a, q', z) \in T$, there is a corresponding transition
$
(q, t, q', z) \in U.
$
Moreover, for every zero test $t = (q, a, q') \in T_{=0}$, there is a transition
$
(q, t, q', 0) \in U.
$
On the other hand, the \ocn $\B'$ starts in the configuration $(q_0, b)$, ends in $(q_f, b)$,
and simulates the transitions of $\A$ but with the opposite effect.
Formally, $\B' = (Q \cup X, (q_0, b), (q_f, b), U')$, for a set $X$ of auxiliary states.
For every transition $t = (q, a, q', z) \in T$, there is a corresponding transition
$
(q, t, q', -z) \in U
$
with the effect $-z$ opposite to the effect of $t$.
Moreover, for every zero test $t = (q, a, q') \in T_{=0}$, we include into $U'$ the following three transitions
\[
(q, \varepsilon, p, -b) \qquad
(p, \varepsilon, p', +b)  \qquad
(p', t, q', 0),
\]
for some auxiliary states $p, p'$.
The aim of the first two transitions is to allow the last one only if the counter value is
at least $b$ (and thus exactly $b$, assuming there is also a run of $\B$ on the same input).

We need to argue that the implications~(a) and~(b) hold.
The first one is immediate: every $b$-bounded accepting run of $\A$ is faithfully simulated by $\B$ and $\B'$, and thus
the languages $L(\B)$ and $L(\B')$ have non-empty intersection.

For the implication~(b), suppose $\A$ has no $b$-bounded accepting run.
The first step is to notice that the languages $L(\B)$ and $L(\B')$ are necessarily disjoint.
Indeed, any word $w \in L(\B) \cap L(\B')$ would describe a $b$-bounded accepting run of $\A$:
$\B$ ensures that the counter remains non-negative, while $\B'$ ensures that the counter does not increase beyond $b$
and that the zero tests are performed correctly.

Let $L$ contain all prefixes of words from $L(\B)$, and likewise $L'$ for $L(\B')$.
Let $n = b |Q|$. Recall that due to acyclicity, $\A$ has no $b$-bounded run of length $n$ (in the sense of the number of transitions)
or longer.
Thus, for the same reason as above, the intersection $L \cap L'$ contains no word of length $n$ or longer.

In simple words, we are going to show that
for every word $w \in L(\B) \cup L(\B')$,
looking at the prefix of length $n$ of $w$ suffices to decide whether
$w\in L(\B)$ or $w\in L(\B')$.

We define a language $K\in\F$ as follows:
\[
K  \ := \ \big(L(\B) \cap \Sigma^{< n}\big) \ \cup \ \bigcup_{w\in L, |w| = n} w \Sigma^*,
\]
where $\Sigma^{<n}$ stands for the set of all words over $\Sigma$ of length strictly smaller than $n$, and
$|w|$ denotes the length of $w$.
The language $K$ belongs to $\F$ indeed, as $\F$ is closed under finite unions, and
every singleton $\{w\}$ belongs to $\F$, due to
\[
\{w\} = w\Sigma^* - \bigcup_{a \in \Sigma} wa\Sigma^*.
\]
It remains to argue that $K$ separates $L(\B)$ and $L(\B')$. By the very definition $L(\B) \subseteq K$,
as $K$ contains all words from $L(\B)$ of length strictly smaller than $n$,
and all words starting with a prefix,  of length $n$, of a word from $L(\B)$.
For disjointness of $K$ and $L(\B')$, observe that the languages $L(\B) \cap \Sigma^{< n}$ and
$L(\B')$ are disjoint, as already $L(\B)$ and $L(\B')$ are.
Moreover, for every $w\in L$ of length $|w| = n$, the languages $w\Sigma^*$ and $L(\B')$ are disjoint, as already
the intersection $L\cap L'$ contains no word of length $n$ or longer.   

\begin{remark} \rm
The \ocn $\B$ and $\B'$ used in the reduction can be easily made deterministic.
On the other hand, by a general result of~\cite{CCLP16} we learn that regular separability of nondeterministic \ocn polynomially reduces to
regular separability of \emph{deterministic} \ocn, making the latter \pspace-complete too.
\end{remark}



\section{Undecidability for one counter automata}%
\label{sec:undecid}

In this section we prove Theorem~\ref{thm:undecidability}.
The argument is similar to the proof of the previous section, except that instead of reducing a fixed undecidable problem,
we provide a polynomial reduction from \emph{every} decidable one.
This idea derives from the insight of~\cite{DBLP:journals/jacm/Hunt82a}.
%

A universal model of computation that will be convenient for us is 2-counter machines.
A \emph{deterministic 2-counter machine} $\M$ consists of a finite set of \emph{states} $Q$ with distinguished \emph{initial} state $q_0 \in Q$,
\emph{accepting} state $q_\acc \in Q$ and \emph{rejecting} state $q_\rej \in Q$,
two \emph{counters} $c_1, c_2$,
and a set of transitions, one per state $q\in Q - \set{q_\acc, q_\rej}$.
Thus the accepting state and the rejecting one have no outgoing transitions.
There are two types of transitions. Type 1 transitions increment one of the counters ($i \in \set{1, 2}$) by one,
and type 2 transitions conditionally decrement one of the counters by one:
\newcommand{\transone}[1]{{\small \bf (type 1) in state $q$, increment $c_{#1}$ and go to state $q'$}}
\newcommand{\transtwo}[1]{{\small \bf (type 2) in state $q$, if $c_{#1}>0$ then decrement $c_{#1}$ and go to state $q'$, else go to state $q''$}}
\begin{enumerate}
\item[] \transone{i};
\item[] \transtwo{i}.
\end{enumerate}
A configuration $(q, n_1, n_2)$ of $\M$ consists of a state $q$ and values $n_1, n_2 \geq 0$ of the counters.
We write $(q, n_1, n_2) \trans{} (q', n'_1, n'_2)$ if a sequence of transitions leads from configuration $(q, n_1, n_2)$  to
 $(q', n'_1, n'_2)$.
%
%
We say that $\M$ \emph{accepts} a number $k \in \N$ if $(q_0, k, 0) \trans{} (q_\acc, 0, 0)$,
and \emph{rejects} $k$ if $(q_0, k, 0) \trans{} (q_\rej, 0, 0)$.
Note our specific requirement that acceptance or rejection only happens with both counter values equal to 0.
The machine $\M$ is \emph{total} if every $k \in \N$ is either accepted or rejected by $\M$.
The language $L(\M)$ recognized by $\M$ is set of all numbers accepted by $M$.

Every decidable language, say over the alphabet $\set{0, 1}$, is recognized by some total, deterministic
2-counter machine, under a suitable encoding.
Indeed, every word $w \in \set{0, 1}^*$ can be encoded, using binary representation, as a natural number $n(w)$.
It is quite standard to show that then for every total deterministic Turing machine $\T$,
there is a total deterministic 2-counter machine $\M$ such that
$w \in L(\T)$ if, and only if, $2^{n(w)} \in L(\M)$.
\footnote{
The exponent arises from the standard simulation of a Turing machine by a 3-counter machine;
the latter is further simulated by a 2-counter machine which stores the values of the 3 counters $c, d, e$
in the form $2^c 3^d 5^e$.
}
Thus, modulo the encoding, decidable languages are a subclass of (in fact, the same class as)
subsets $L \subseteq \N$ of natural numbers recognized by total deterministic 2-counter machines.
These subsets $L\subseteq \N$ we call below \emph{decidable problems}.

Let $\F$ be a class of languages containing all \good languages.
We are going to show a polynomial time reduction from any decidable problem $L \subseteq \N$ to
$\F${\sepsep}separability of \oca languages.
This implies undecidability of the latter problem.
Indeed, suppose $\F${\sepsep}separability of \oca languages is decidable,
and let $f : \N \to \N$ be any space constructible function such that
$\F${\sepsep}separability of \oca is decidable in $f(n)$ space, where $n$ is the size of input.
As a consequence, every decidable problem $L \subseteq \N$ would be actually decidable in space
$f(p(n))$ for some polynomial $p$.
This would contradict the space hierarchy theorem (see, e.g., Thm.~9.3 in~\cite{Sipserbook}),
according to which there are problems decidable in space $\mathcal{O}(f(2^n))$ but not in space $o(f(2^n))$.
\begin{proposition}\label{prop:oca-reduction}
Every decidable problem $L \subseteq \N$ reduces polynomially
to the $\F$ separability problem of \oca languages.
\end{proposition}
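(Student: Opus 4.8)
The plan is to imitate the \pspace-hardness construction of Section~\ref{sec:pspace-hard}, replacing the acyclicity bound by a machine-specific run length bound. Given a decidable problem $L\subseteq\N$, fix a total deterministic 2-counter machine $\M$ recognizing $L$. On input $k\in\N$ (whose size is $\log k$, so I should be careful: the reduction must be polynomial in $\log k$), I build two \oca $\B$ and $\B'$ whose alphabet is essentially the transition set of $\M$ together with markers. The two automata jointly simulate a computation of $\M$ starting in $(q_0,k,0)$: one automaton tracks the value of counter $c_1$ and the other tracks $c_2$ (both as genuine counters, since we are now allowed zero tests in \oca), and the two synchronize on the common transition alphabet. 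A shared word $w$ lies in $L(\B)\cap L(\B')$ exactly when it describes an accepting computation of $\M$ on input $k$ ending in $(q_\acc,0,0)$. The initial counter value $k$ is installed by a fixed prefix of $\varepsilon$-transitions with additive effect $k$ written in binary, which is polynomial in $\log k$.

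The key point, mirroring the role of acyclicity in Section~\ref{sec:pspace-hard}, is that because $\M$ is \emph{deterministic}, on the fixed input $k$ there is a \emph{unique} maximal computation, and it either halts in $q_\acc$ (then $w\in L(\B)\cap L(\B')$ and $k\in L$) or halts in $q_\rej$ after some finite number $n$ of steps (then $k\notin L$). In the rejecting case, the prefix language $L$ of $L(\B)$ (all prefixes of accepting-like words) intersected with the prefix language $L'$ of $L(\B')$ contains no word of length $n$ or longer, since any such common word would have to follow the unique computation past its rejecting halt, which is impossible. So exactly as in Section~\ref{sec:pspace-hard}, when $k\notin L$ I can separate $L(\B)$ from $L(\B')$ by the \good language
\[
K \ := \ \big(L(\B)\cap\Sigma^{<n}\big)\ \cup\ \bigcup_{v\in L,\ |v|=n} v\Sigma^{*},
\]
which lies in $\F$ because $\F$ contains all \good languages and $\Sigma^{<n}$-truncations and singletons are \good; and when $k\in L$ the languages $L(\B)$ and $L(\B')$ intersect, hence are not $\F$-separable for any $\F$. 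This establishes implications (a) and (b) as before, so $k\in L$ iff $L(\B),L(\B')$ are not $\F$-separable.

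The main obstacle is that the separating language $K$ mentions the halting time $n$ of $\M$ on input $k$, which is not computable in polynomial time (indeed $L$ may be arbitrarily hard). But this does not affect the \emph{reduction}: $K$ is only used in the correctness argument, not constructed by the reduction. The reduction itself outputs only $\B$ and $\B'$, whose sizes are polynomial in $\log k$ and in $|\M|$ — the only subtlety is encoding the initial value $k$ by a short ($O(\log k)$-bit, binary-encoded) sequence of counter updates, which is why the excerpt chose the exponential encoding $2^{n(w)}$ earlier, so that inputs are already given in a form where this is harmless. I would also double-check the synchronization discipline so that a run of $\B$ and a run of $\B'$ on the same word are forced to follow the same $\M$-transition at each step, and that the zero-test transitions of $\M$ are simulated faithfully using the zero tests available in \oca (this is where \oca, rather than \ocn, is essential — consistent with the contrast being drawn in the paper). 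Finally I invoke the time hierarchy theorem exactly as set up in the paragraph preceding the proposition to conclude undecidability from the polynomial reduction.
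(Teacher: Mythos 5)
Your proposal is correct and follows essentially the same route as the paper's proof: two \oca over the transition alphabet of a total deterministic 2-counter machine, each tracking one of the two counters (with zero tests used for the faithful branch simulation), with non-separability witnessed by a common accepting word when $k\in L$, and separability in the rejecting case witnessed by the \good language $K$ built from the length $n$ of the unique rejecting run, concluding via the time hierarchy theorem. The only cosmetic difference is that the paper installs $k$ directly as the binary-encoded initial counter value $(q_0,k)$ of the first automaton rather than via an $\varepsilon$-prefix of updates.
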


\begin{proof}
Let $\M$ be a fixed total deterministic 2-counter machine recognizing a language $L$.
Given $k \in \N$, we construct two \oca $\A_1, \A_2$ with the following properties:
\begin{enumerate}
\item[(a)] if $k\in L(\M)$ then $L(\A_1) \cap L(\A_2) \neq \emptyset$
(and thus $L(\A_1)$ and $L(\A_2)$ are not $\F$ separable);  
\item[(b)] if $k \notin L(\M)$ then $L(\A_1)$ and $L(\A_2)$ are $\F$ separable.  
\end{enumerate}
As the input alphabet $\Sigma$ of $\A_1$ and $\A_2$ we take the set of transitions of $\M$.
We define two \oca:
\begin{align*}
\A_1 & = (Q, (q_0, k), (q_\acc, 0), T_1, T_{1, = 0}), \\
\A_2 & = (Q, (q_0, 0), (q_\acc, 0), T_2, T_{2, = 0}),
\end{align*}
where transitions $T_1$ (resp.~$T_2$) and zero tests $T_{1, =0}$ (resp.~$T_{2, =0}$) are, roughly speaking, transitions of
$\M$ where the second (resp.~first) counter is ignored.
Formally, for every transition $t$ of $M$ of type 1 on counter $c_1$ of the form
\begin{enumerate}
\item[] \transone{1},
\end{enumerate}
there is a transition
$
(q, t, q', +1) \in T_1;
$
and for every transition $t$ of $M$ of type 1 on counter $c_2$ of the form
\begin{enumerate}
\item[] \transone{2},
\end{enumerate}
there is a transition
$
(q, t, q', 0) \in T_1.
$
Furthermore, for every transition $t$ of $M$ of type 2 on counter $c_1$ of the form
\begin{enumerate}
\item[] \transtwo{1},
\end{enumerate}
we include the following transition and zero test:
\[
(q, t, q', -1) \in T_1 \qquad (q, t, q'') \in T_{1, =0}.
\]
Finally, for every transition $t$ of $M$ of type 2 on counter $c_2$ of the form
\begin{enumerate}
\item[] \transtwo{2},
\end{enumerate}
we include the following two transitions:
\[
(q, t, q', 0) \in T_1 \qquad (q, t, q'', 0) \in T_1.
\]
Transitions and zero tests of $\A_2$ are defined symmetrically, with the roles of $c_1$ and $c_2$ swapped.

We need to argue that the implications~(a) and~(b) hold.
The first one is immediate: every sequence of transitions of $\M$ leading from $(q_0, k, 0)$ to $(q_\acc, 0, 0)$,
treated as a word over $\Sigma$, belongs both to $L(\A_1)$ and $L(\A_2)$.

In order to prove implication~(b), suppose $k\notin L(\M)$. We first observe that $L(\A_1)$ and $L(\A_2)$ are
necessarily disjoint; indeed, any $w\in L(\A_1) \cap L(\A_2)$ is a sequence of transitions that accepts $k$.

As $\M$ is total by assumption, we know that $(q_0, k, 0) \trans{} (q_\rej, 0, 0)$ in $\M$;
let $n$ be the length of the corresponding sequence of transitions.

Let $L_1$ contain all prefixes of words from $L(\A_1)$, and likewise $L_2$ for $L(\A_2)$.
It is crucial to observe that the intersection $L_1 \cap L_2$ contains no word of length $n$ or longer.
Indeed, any $w\in L_1 \cap L_2$ is a sequence of transitions of $\M$ starting from $(q_0, k, 0)$,
and thus cannot be longer than $n$. Moreover $w \in L_1 \cap L_2$ cannot
lead, as a sequence of transitions of $\M$, to the rejecting state (as it has no outgoing transitions),
and thus $w$ can not have length $n$ either.

The rest of the proof is along the same lines as in the previous section.
In simple words, we claim that for a word of length $n$ or longer, it is enough to inspect its prefix of length $n$ in order
to classify the word between $L(\A_1)$ and $L(\A_2)$.
Formally,  we define a language $K\in\F$ as follows:
\[
K \ := \ \big(L(\A_1) \cap \Sigma^{< n}\big) \ \cup \ \bigcup_{w\in L_1, |w| = n} w \Sigma^* .
\]
The language $K$ belongs to $\F$ for the reasons discussed in the previous section.
It remains to argue that $K$ separates $L(\A_1)$ and $L(\A_2)$.
By the very definition $L(\A_1) \subseteq K$,
as $K$ contains all words from $L(\A_1)$ of length strictly smaller than $n$,
and all words starting with a prefix, of length $n$, of a word from $L(\A_1)$.
For disjointness of $K$ and $L(\A_2)$, observe that the languages $L(\A_1) \cap \Sigma^{< n}$ and $L(A_2)$ are disjoint,
as already $L(\A_1)$ and $L(\A_2)$ are.
Moreover, for every $w\in L_1$ of length $|w| = n$, the languages $w\Sigma^*$ and $L(\A_2)$ are disjoint, as already
the intersection $L_1\cap L_2$ contains no word of length $n$ or longer.
\end{proof}

\begin{remark} \rm
Theorem~\ref{thm:undecidability} is used in~\cite{CCLP16} to prove undecidability of the regular separability problem
for visibly one counter automata (cf.~Theorem 5 in~\cite{CCLP16}).
The proof assumes that the alphabets of the input visibly one counter automata can be different;
on the other hand, when two input visibly one counter automata are assumed to be over the same alphabet (i.e., they perform
increment and decrement operations on the same input letters) the problem seems to be decidable~\cite{ChristofsDecidability}.
This shows that the decidability border is quite subtle. In addition,
the regular separability problem becomes once more undecidable when one
extends visibly one-counter automata over the same alphabet to
visibly pushdown automata over the same alphabet, as shown in~\cite{Kopczynski16}.
\end{remark}



\section{Final remarks}%
\label{sec:remarks}

Our main contribution is to show that  the regular separability problem for \ocn
is decidable (we also provide tight complexity estimation of the problem, namely \pspace-completeness, which we consider however less significant),
but it becomes undecidable for \oca (when zero tests are allowed).
We believe that this reveals a delicate decidability borderline.
For instance recall (cf.~Remark~\ref{rem:oca}) that the concept of $n$-approximation, a core technical ingredient of our decidability proof,
still works for \oca, including the Approximation Lemma, but is not prone to effective testing.
Below we discuss in more detail two other aspects:
relation to the regularity problem for \ocn, and obstacles towards extending our approach to regular separability
of the many-dimensional extension of \ocn, i.e., of \vass.

\myparagraph{Undecidability of regularity}
Our decidability result contrasts with undecidability of
the regularity problem for \ocn (given an \ocn $\A$, decide if $L(\A)$ is regular?), shown in~\cite{ValkVidal81}.
The proof of~\cite{ValkVidal81} works for \ocn accepting by final configuration
(as assumed in this paper, cf.~Section~\ref{sec:oca}), but not for \ocn accepting solely by final state.
But even in this weaker model the regularity problem is undecidable, as discovered recently by
James Worrell~\cite{BensUndecidability}.
The proof is by reduction from finiteness of the reachability set of a lossy counter machine,
which is an undecidable problem~\cite{cheatsheet10}.
Consider a standard encoding of runs of such a machine as words, and consider the language of
\emph{reverses} of such encodings, i.e., encodings read backward.
It is not difficult to prove that the language is regular if, and only if, the reachability set of the lossy counter machine is finite.
Moreover, one can construct an \ocn that recognizes the complement of the language.


\myparagraph{Towards regular separability of \vass}
Our decidability proof builds upon a notion of $n$-approximation: an \ocn $\A$ is over-approximated
by an NFA $\A_n$ which remembers the counter value of $\A$ exactly only below $n$, and modulo $n$ above this threshold.
Could one define $n$-approximation $\V_n$ of a \vass $\V$ by treating all the counters of $\V$
in that way?
In particular, such $n$-approximation would commute with the \crossproduct:
$\V_n \crossprod \U_n = {(\V \crossprod \U)}_n$ for two \vass $\V$ and $\U$
(we extend here naturally the \crossproduct operation).

The Approximation Lemma (cf.~Lemma~\ref{lem:appr}), quite surprisingly, does not hold for so defined
notion of over-approximation.
Indeed, the Approximation Lemma would imply that regular separability of
$\V$ and $\U$ is equivalent to disjointness of languages of $\V_n$ and $\U_n$, for some $n>0$
(cf.~Corollary~\ref{cor:appr}), which is the same as
$L(\V_n \crossprod \U_n) = L({(\V \crossprod \U)}_n) = \emptyset$ for some $n > 0$;
and finally, the latter condition would be equivalent, again due to the
Approximation Lemma, to $L(\V \crossprod \U) = \emptyset$, which is the same
as the languages of $\V$ and $\U$ being disjoint.
Thus regular separability of $\V$ and $\U$ would be equivalent to disjointness of $\V$ and $\U$, which is not true in general.

The decidability status of the regular separability problem for \vasslangs remains thus open.

\myparagraph{\ocn versus \oca}
As remarked in Section~\ref{sec:approx},  the Approximation Lemma and its consequences
(Corollaries~\ref{cor:apprlemma} and~\ref{cor:appr}) can be shown for \oca just as well.
This observation seems to open the way to decidability of the regular separability problem
when one of input devices is \oca and the other one is \ocn. We conjecture the problem to be decidable,
and even to belong to \pspace, relying on the effective semi-linearity of the reachability relation
for \ocn with one zero test~\cite{FLS18}.



%
%


%


\section*{Acknowledgment}
We thank James Worrell for allowing us to include his undecidability proof of \ocn regularity~\cite{BensUndecidability}\footnote{
Found out at \emph{Autob{\'o}z'16}, the annual research camp of Warsaw automata group and friends.},
and Christoph Haase, Mohnish Pattathurajan,
Mahsa Shirmohammadi, Patrick Totzke and Georg Zetzsche for fruitful discussion and valuable suggestions.
We are also grateful to the reviewers for their valuable comments which helped to improve the presentation
significantly.

\bibliographystyle{plain}
\bibliography{citat}

\begin{thebibliography}{10}

\bibitem{Almeida-pmd99}
J.~Almeida.
\newblock Some algorithmic problems for pseudovarieties.
\newblock {\em Publ. Math. Debrecen}, 54:531--552, 1999.

\bibitem{Hofman16}
Mohamed~Faouzi Atig, Dmitry Chistikov, Piotr Hofman, K.~Narayan Kumar, Prakash
  Saivasan, and Georg Zetzsche.
\newblock The complexity of regular abstractions of one-counter languages.
\newblock In {\em Proc. {LICS}'16}, pages 207--216, 2016.

\bibitem{Hofman16arxiv}
Mohamed~Faouzi Atig, Dmitry Chistikov, Piotr Hofman, K.~Narayan Kumar, Prakash
  Saivasan, and Georg Zetzsche.
\newblock Complexity of regular abstractions of one-counter languages.
\newblock {\em CoRR}, abs/1602.03419, 2016.

\bibitem{DBLP:conf/lics/BlondinFGHM15}
Michael Blondin, Alain Finkel, Stefan G{\"{o}}ller, Christoph Haase, and Pierre
  McKenzie.
\newblock Reachability in two-dimensional vector addition systems with states
  is {PSPACE}-complete.
\newblock In {\em Proc. {LICS}'15}, pages 32--43, 2015.

\bibitem{CFM11}
Micha{\"{e}}l Cadilhac, Alain Finkel, and Pierre McKenzie.
\newblock On the expressiveness of {Parikh} automata and related models.
\newblock In {\em Proc. {NCMA}'11}, pages 103--119, 2011.

\bibitem{taming}
Dmitry Chistikov and Christoph Haase.
\newblock The taming of the semi-linear set.
\newblock In {\em Proc. {ICALP}'16}, pages 128:1--128:13, 2016.

\bibitem{CCLP16}
Lorenzo Clemente, Wojciech Czerwi{\'n}ski, S{\l}awomir Lasota, and Charles
  Paperman.
\newblock Regular separability of {Parikh} automata.
\newblock In {\em Proc. {ICALP}'17}, pages 117:1--117:13, 2017.

\bibitem{CCLP17}
Lorenzo Clemente, Wojciech Czerwi{\'n}ski, S{\l}awomir Lasota, and Charles
  Paperman.
\newblock Separability of reachability sets of vector addition systems.
\newblock In {\em Proc. {STACS}'17}, pages 24:1--24:14, 2017.

\bibitem{lics17}
Wojciech Czerwinski and Slawomir Lasota.
\newblock Regular separability of one counter automata.
\newblock In {\em Proc.~{LICS}'17}, pages 1--12, 2017.

\bibitem{CLMMKS18}
Wojciech Czerwinski, Slawomir Lasota, Roland Meyer, Sebastian Muskalla,
  K.~Narayan Kumar, and Prakash Saivasan.
\newblock Regular separability of well-structured transition systems.
\newblock In {\em Proc.~{CONCUR}'18}, pages 35:1--35:18, 2018.

\bibitem{DBLP:conf/icalp/CzerwinskiMM13}
Wojciech Czerwi{\'n}ski, Wim Martens, and Tom{\'{a}}s Masopust.
\newblock Efficient separability of regular languages by subsequences and
  suffixes.
\newblock In {\em Proc. {ICALP}'13}, pages 150--161, 2013.

\bibitem{DBLP:conf/fct/CzerwinskiMRZ15}
Wojciech Czerwi{\'n}ski, Wim Martens, Lorijn van Rooijen, and Marc Zeitoun.
\newblock A note on decidable separability by piecewise testable languages.
\newblock In {\em Proc. {FCT}'15}, pages 173--185, 2015.

\bibitem{D91}
Eric Domenjoud.
\newblock Solving systems of linear {Diophantine} equations: An algebraic
  approach.
\newblock In {\em Proc.~MFCS'91}, pages 141--150, 1991.

\bibitem{Georg}
Emanuele D'Osualdo, Roland Meyer, and Georg Zetzsche.
\newblock First-order logic with reachability for infinite-state systems.
\newblock In {\em Proc. {LICS}'16}, pages 457--466, 2016.

\bibitem{FJ15}
John Fearnley and Marcin Jurdzinski.
\newblock Reachability in two-clock timed automata is {PSPACE}-complete.
\newblock {\em Inf. Comput.}, 243:26--36, 2015.

\bibitem{FLS18}
Alain Finkel, J{\'{e}}r{\^{o}}me Leroux, and Gr{\'{e}}goire Sutre.
\newblock Reachability for two-counter machines with one test and one reset.
\newblock In {\em Proc.~{FSTTCS}'18}, pages 31:1--31:14, 2018.

\bibitem{DBLP:conf/icalp/Goubault-Larrecq16}
Jean Goubault{-}Larrecq and Sylvain Schmitz.
\newblock Deciding piecewise testable separability for regular tree languages.
\newblock In {\em Proc. {ICALP}'16}, pages 97:1--97:15, 2016.

\bibitem{HP79}
John~E. Hopcroft and Jean{-}Jacques Pansiot.
\newblock On the reachability problem for 5-dimensional vector addition
  systems.
\newblock {\em Theor. Comput. Sci.}, 8:135--159, 1979.

\bibitem{DBLP:journals/jacm/Hunt82a}
Harry~B. Hunt~III.
\newblock On the decidability of grammar problems.
\newblock {\em Journal of the ACM}, 29(2):429--447, 1982.

\bibitem{KM69}
Richard~M. Karp and Raymond~E. Miller.
\newblock Parallel program schemata.
\newblock {\em J. Comput. Syst. Sci.}, 3(2):147--195, 1969.

\bibitem{KR03}
Felix Klaedtke and Harald Rue{\ss}.
\newblock Monadic second-order logics with cardinalities.
\newblock In {\em Proc. {ICALP}'03}, pages 681--696, 2003.

\bibitem{Kopczynski16}
Eryk Kopczy{\'n}ski.
\newblock Invisible pushdown languages.
\newblock In {\em Proc. {LICS}'16}, pages 867--872, 2016.

\bibitem{Kopczynski10}
Eryk Kopczynski and Anthony~Widjaja To.
\newblock Parikh images of grammars: Complexity and applications.
\newblock In {\em Proc. {LICS}'10}, pages 80--89, 2010.

\bibitem{ChristofsDecidability}
Christof L\"{o}ding.
\newblock Personal communication, 2017.

\bibitem{DBLP:conf/fsttcs/PlaceRZ13}
Thomas Place, Lorijn van Rooijen, and Marc Zeitoun.
\newblock Separating regular languages by locally testable and locally
  threshold testable languages.
\newblock In {\em Proc. {FSTTCS}'13}, pages 363--375, 2013.

\bibitem{DBLP:conf/mfcs/PlaceRZ13}
Thomas Place, Lorijn van Rooijen, and Marc Zeitoun.
\newblock Separating regular languages by piecewise testable and unambiguous
  languages.
\newblock In {\em Proc. {MFCS}'13}, pages 729--740, 2013.

\bibitem{DBLP:conf/icalp/PlaceZ14}
Thomas Place and Marc Zeitoun.
\newblock Going higher in the first-order quantifier alternation hierarchy on
  words.
\newblock In {\em Proc. {ICALP}'14}, pages 342--353, 2014.

\bibitem{DBLP:journals/corr/PlaceZ14}
Thomas Place and Marc Zeitoun.
\newblock Separating regular languages with first-order logic.
\newblock {\em Logical Methods in Computer Science}, 12(1), 2016.

\bibitem{P91}
Loic Pottier.
\newblock Minimal solutions of linear {Diophantine} systems: Bounds and
  algorithms.
\newblock In {\em Proc.~RTA'91}, pages 162--173, 1991.

\bibitem{cheatsheet10}
Philippe Schnoebelen.
\newblock Lossy counter machines decidability cheat sheet.
\newblock In {\em Proc. {RP}'10}, pages 51--75, 2010.

\bibitem{Sipserbook}
Michael Sipser.
\newblock {\em Introduction to the theory of computation}.
\newblock Cengage Learning, 2013.

\bibitem{DBLP:journals/siamcomp/SzymanskiW76}
Thomas~G. Szymanski and John~H. Williams.
\newblock Noncanonical extensions of bottom-up parsing techniques.
\newblock {\em SIAM Journal on Computing}, 5(2):231--250, 1976.

\bibitem{ValkVidal81}
R{\"{u}}diger Valk and Guy Vidal{-}Naquet.
\newblock Petri nets and regular languages.
\newblock {\em J. Comput. Syst. Sci.}, 23(3):299--325, 1981.

\bibitem{BensUndecidability}
James Worrell.
\newblock Personal communication, 2016.

\end{thebibliography}

\end{document}